\documentclass[journal,12pt,onecolumn,draftcls]{IEEEtran}

\usepackage{amsthm}
\usepackage{cite}
\usepackage{amsmath}
\usepackage{amssymb}
\usepackage{graphicx}
\usepackage{graphics}
\usepackage{dsfont}
\usepackage{algorithm}
\usepackage{algorithmic}
\usepackage{epstopdf}
\usepackage{verbatim} 
\usepackage{setspace}

\makeatletter

\def\ps@IEEEtitlepagestyle{%
  \def\@oddfoot{\conf}%
  \def\@evenfoot{}%
}
\def\conf{%
  {\footnotesize Part of this work has been accepted for presentation in IEEE ICC 2015, London, UK, June 8-12, 2015.\hfill}
  \gdef\conf{}
}

\newsavebox{\theorembox}
\newsavebox{\lemmabox}
\newsavebox{\corollarybox}
\newsavebox{\propositionbox}
\newsavebox{\examplebox}
\newsavebox{\conjecturebox}
\newsavebox{\algbox}
\newsavebox{\qbox}
\newsavebox{\problembox}
\newsavebox{\definitionbox}
\newsavebox{\assumptionbox}
\newsavebox{\hypothesisbox}
\newsavebox{\factbox}
\savebox{\theorembox}{\noindent\bf Theorem}
\savebox{\lemmabox}{\noindent\bf Lemma}
\savebox{\corollarybox}{\noindent\bf Corollary}
\savebox{\propositionbox}{\noindent\bf Proposition}
\savebox{\examplebox}{\noindent\bf Example}
\savebox{\conjecturebox}{\noindent\bf Conjecture}
\savebox{\algbox}{\noindent\bf Algorithm}
\savebox{\qbox}{\noindent\bf Question}
\savebox{\definitionbox}{\noindent\bf Definition}
\savebox{\problembox}{\noindent\bf Problem}
\savebox{\assumptionbox}{\noindent\bf Assumption}
\savebox{\hypothesisbox}{\noindent\bf Hypothesis}
\savebox{\factbox}{\noindent\bf Fact}
\newtheorem{theorem}{\usebox{\theorembox}}
\newtheorem{lemma}{\usebox{\lemmabox}}
\newtheorem{corollary}{\usebox{\corollarybox}}

\newtheorem{definition}{\usebox{\definitionbox}}

\begin{document}
\title{Energy-Efficient Broadcasting for Cross Wireless Ad-Hoc Networks}

\author{
\IEEEauthorblockN{Mohammad R. Ataei, Amir H. Banihashemi and Thomas Kunz}

\IEEEauthorblockA{Systems and Computer Engineering Department, Carleton University, Ottawa, ON, Canada
\\
mrataei, ahashemi, tkunz@sce.carleton.ca}
}
\maketitle

\begin{abstract}
\boldmath
In this paper, we propose solutions for the energy-efficient broadcasting over cross networks, where $N$ nodes are located on two perpendicular lines. Our solutions consist of an algorithm which finds the optimal range assignment in polynomial time ($\mathcal{O}(N^{12})$), a near-optimal algorithm with linear complexity ($\mathcal{O}(N)$), and a distributed algorithm with complexity $\mathcal{O}(1)$. To the best of our knowledge, this is the first study presenting an optimal solution for the minimum-energy broadcasting problem for a 2-D network (with cross configuration). We compare our algorithms with the broadcast incremental power (BIP) algorithm, one of the most commonly used methods for solving this problem with complexity $\mathcal{O}(N^2)$. We demonstrate that our near-optimal algorithm outperforms BIP, and that the distributed algorithm performs close to it.
Moreover, the proposed distributed algorithm can be used for more general two-dimensional networks, where the nodes are located on a grid consisting of perpendicular line-segments.
The performance of the proposed near-optimal and distributed algorithms tend to be closer to the optimal solution for larger networks.
\end{abstract}

\begin{IEEEkeywords}
Transmission Range Assignment, Broadcasting, Energy Consumption, Cross Networks, Wireless Ad-Hoc Networks.
\end{IEEEkeywords}

\section{Introduction}
\label{sec:int}
\IEEEPARstart{W}{ireless} ad-hoc networks have attracted more interest in recent years due to their numerous applications \cite{ad_hoc_book}. In these networks, broadcasting mechanisms are used for data exchange purposes, e.g., disseminating important messages or attaining path discovery information in routing algorithms \cite{routing}. The \emph{Minimum-Energy Broadcasting} problem in wireless networks focuses on finding a transmission range assignment for all the nodes in the network such that the total consumed energy for broadcasting data from one specific node, the source node, to all the other nodes is minimized \cite{MinEnBroad}. We consider the case where there are $N$ nodes in the network, and the exact location of them is known.

An optimal solution with complexity $\mathcal{O}(N^2)$ was presented in \cite{ataei} for the minimum-energy broadcasting problem in 1-D networks (linear networks). The minimum-energy broadcasting problem is known to be non-deterministically polynomial-time (NP) hard for D-dimensional spaces with $D \geq 2$, \cite{np_hard,Clementi_2_3_D,Kranakis_strong}. One main reason for the difficulty of this problem is the \emph{wireless multicast advantage} \cite{BIP,BIP_Journal}, i.e., the reception of the transmitted data by multiple nodes within the range of a single transmission.
There have been a number of heuristic approaches for solving this problem. For a survey on the existing works on minimum-energy broadcasting (as a special case of multicasting), one can see \cite{Multicast_survey}.

The broadcast incremental power (BIP) algorithm, with complexity $\mathcal{O}(N^3)$ \cite{BIP,BIP_Journal}, is one of the most commonly used methods for the energy-efficient broadcasting problem in 2-D networks. In \cite{2D_Optimal_Bound}, it is shown that the approximation ratio\footnote{The ratio of the energy consumption of a given assignment to that of the optimal assignment is called the approximation ratio of that assignment.} of BIP, unlike some other well-known algorithms, e.g., shortest-path tree (SPT) \cite{SPT_ref}, is a constant (does not depend on the total number of the nodes in the network). This constant value is lower than the approximation ratio of the other algorithms also studied in \cite{2D_Optimal_Bound}. In BIP, the goal is to construct a Minimum Spanning Tree (MST) of the network graph\footnote{The MST of a graph is the spanning tree with weight less than or equal to the weight of every other spanning tree of that graph.} starting from the source, and adding new nodes one by one to the tree. The cost function, which is to be minimized in the construction, is the incremental cost of adding each additional node. This incremental cost is defined as the minimum additional power required of some node in the current tree to reach the new node. Recently in \cite{BIP_N2_Complexity}, the complexity of BIP is reduced to $\mathcal{O}(N^2)$ and the lower bound on the approximation ratio of BIP is improved.

A modification to BIP, called \emph{sweep}, is proposed in \cite{BIP,BIP_Journal}, to improve the power consumption. In this procedure, unnecessary transmissions are eliminated. It is shown in \cite{sweep_complexity} that the complexity of the sweep procedure is also $\mathcal{O}(N^2)$.

Another frequently cited method for energy-efficient broadcasting is the algorithm based on finding the Euclidean MST of the graph representing the network rooted at the source node \cite{Nguyen}. The network graph is constructed by considering the nodes of the network as vertices of the graph. There is an edge between any two nodes, with weight equal to the Euclidean distance between the nodes.
In this algorithm, the MST of the network graph is found (e.g., using Prim's algorithm
). In the MST-based range assignment, for each node, the maximum weight of the edge between that node and its children\footnote{For any node in a tree, the nodes directly below it in the tree hierarchy are called the children of that node.} is assigned as the node's transmission range, i.e.,

\begin{align}
R^{MST}(i)=\max_{u:(i,u)\in MST}\{d(i,u)\},
\end{align}
where $R^{MST}(i)$ denotes the assigned range to node $i$, and the notation $(i,u)$ is used to denote the edge between nodes $i$ and $u$ in the MST, where $i$ is the parent of $u$. The Euclidean distance between nodes $a$ and $b$ is denoted by $d(a,b)$.

The MST-based algorithm has complexity $\mathcal{O}(N^2)$ \cite{Nguyen}. In \cite{grid,grid_clementi}, it is shown that even for 2-D networks with special topologies, i.e., when the nodes are located at the intersection points of a square grid, the MST-based range assignment is far from optimal. It is worth mentioning that the MST-based range assignment is the optimal solution for the minimum-energy broadcasting problem in wired networks. The performance of BIP is shown to be better than that of the MST algorithm for general 2-D wireless networks \cite{BIP,BIP_Journal}.

In this paper, we are interested in finding both optimal and low-complexity solutions for the broadcasting problem in a special 2-D wireless network, where the nodes are located on a cross, consisting of two perpendicular lines. This kind of network is used to model perpendicular roads for VANETs \cite{vanet}, or the sensor field of power grid monitoring systems (or other sensor networks) \cite{sensor,Power_Monitoring_Opportunities}.
We provide three range assignment solutions: optimal, near-optimal and distributed. To the best of our knowledge, our work is the first study presenting an optimal solution for the minimum-energy broadcasting problem for a 2-D network.

We show that the optimal solution for cross networks can be found in polynomial time, but has a rather high complexity ($\mathcal{O}(N^8)$ or $\mathcal{O}(N^{12})$ for the cases where the source node is located at the intersection of the lines, or somewhere other than the intersection, respectively). The proposed centralized near-optimal algorithm has complexity $\mathcal{O}(N)$, and outperforms BIP. The proposed distributed algorithm performs close to BIP (and BIP with sweep) and has a very low complexity of $\mathcal{O}(1)$.
Furthermore, we prove that our distributed algorithm has the same outcome as the MST algorithm for networks on a cross, but with much lower complexity ($\mathcal{O}(1)$ instead of $\mathcal{O}(N^{2})$). 
As the size of the network increases, the proposed near-optimal and distributed algorithms perform closer to optimal.

The rest of the paper is organized as follows. Section \ref{sec:sys} describes the system model. The proposed range assignments are explained in Section \ref{sec:pro}. A special case of the networks in which the source node is located at the intersection is studied in Section \ref{sec:atintersect}. The use of our proposed distributed algorithm in grid networks is briefly described in Section \ref{sec:grid}. We evaluate our proposed algorithms by comparing them to the existing works in Section \ref{sec:num}. The paper concludes with Section \ref{sec:con}.

\section{System Model}
\label{sec:sys}
We consider a set of $N$ nodes (denoted by $\mathcal{N}$) located on a cross, including a specific node, $s$, as the source. We assume that the exact location of the nodes is known. The source node can be located anywhere on the cross and broadcasts the data to all the other nodes in the network. This is to be performed in an energy-efficient multi-hop fashion.

We assume that all the nodes are equipped with an omnidirectional transmitter. Moreover, in our study, similar to \cite{BIP,BIP_Journal,sweep_complexity,Nguyen,grid,grid_clementi,np_hard,Clementi_2_3_D,Kranakis_strong,ataei}, we do not consider other issues such as channel contention and interference.

To solve this broadcasting problem, we need to assign a \emph{transmission range} to each node so that the total consumed energy is minimized, while the data is delivered to all the nodes of the network.
A \emph{transmission range assignment} $R$ is a function $R:\mathcal{N} \rightarrow \mathds{R}^+$, where $R(i)$ is the assigned transmission range to node $i\in \mathcal{N}$. We denote the consumed energy of the range assignment $R$ by $cost(R)$ and assume that it can be calculated, up to a constant multiplicative factor, using the following equation:
\begin{align}
cost(R)=\sum_{k\in \mathcal{N}}{R^\alpha(k)},
\end{align}
where $\alpha$ is the \emph{path-loss exponent} whose value is normally between 2 and 6 \cite{alpha}. By using the \emph{Minimum-Energy Range Assignment} (a.k.a, Optimal Range Assignment), denoted by $R^*$, every node in the network will receive the data transmitted by the source node with the minimum possible cost.

A node is assumed to have circular coverage up to its transmission range, and any other node located in the transmission range of this node can receive the transmitted data. Some definitions and notations are presented in Fig. \ref{fig1}.

\begin{figure}[h!]
  \centering
    \includegraphics[width=4in]{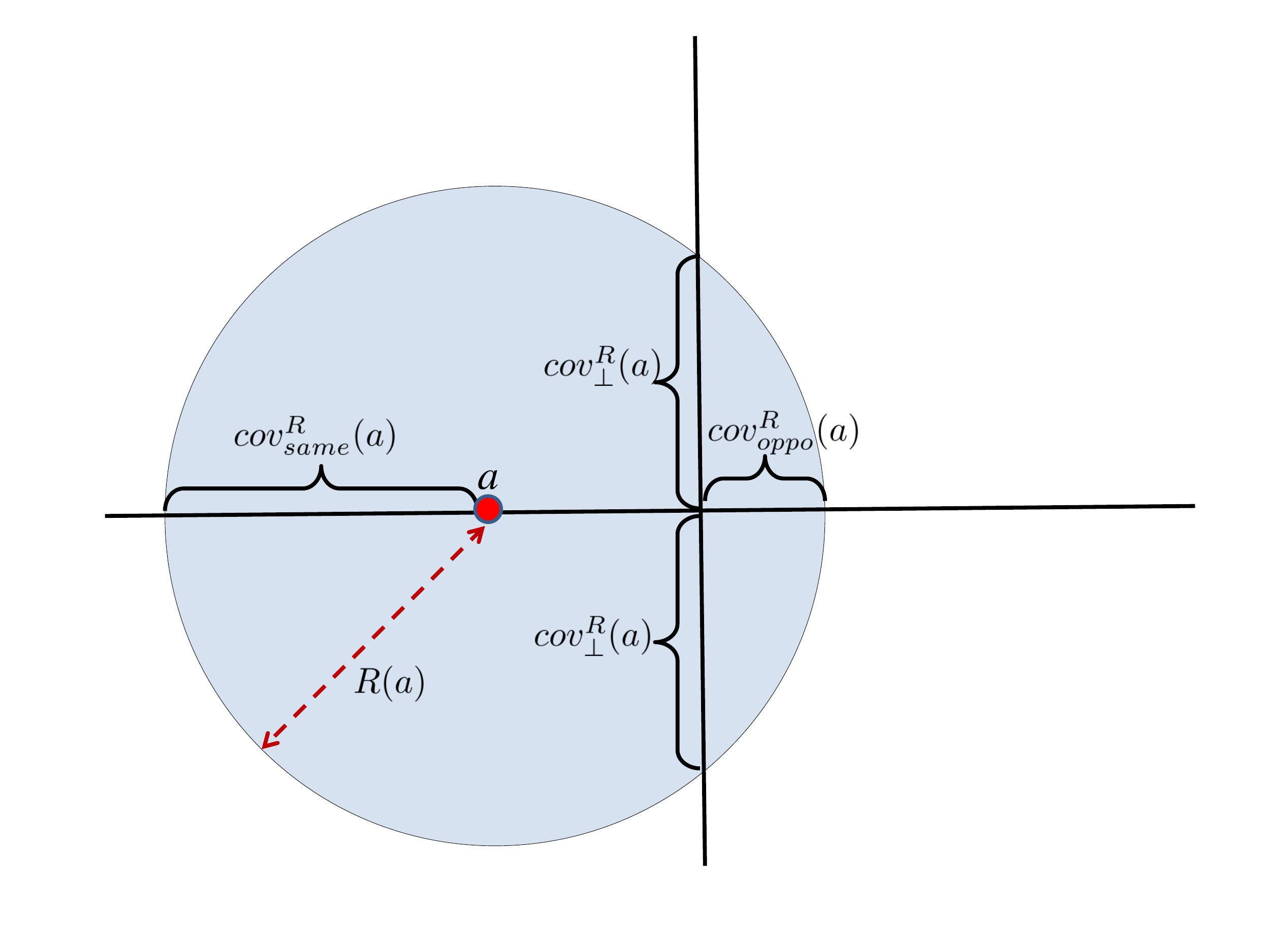}
     \caption{Some definitions/notations for networks on a cross.}
     \label{fig1}
\end{figure}

According to the circular transmission range of the nodes, as shown in Fig. 1, for every node $a\in\mathcal{N}$, we have $R(a)=cov^{R}_{same}(a)\geq cov^{R}_{\bot}(a)\geq cov^{R}_{oppo}(a)$.

In the following, we provide some definitions needed throughout the paper.

\begin{definition}
A cross network has five \textbf{segments}, as shown in Fig. \ref{fig4}. Segments I, III, IV and V are half-lines, while Segment II is the line-segment bounded by the source node and the intersection of the two perpendicular lines. For any node $a\in \mathcal{N}\setminus\{s\}$,\footnote{For any two sets $\mathcal{A}$ and $\mathcal{B}$, notation $\mathcal{A}\setminus \mathcal{B}$ denotes the set of all elements which are members of $\mathcal{A}$ but not members of $\mathcal{B}$.} we denote the segment on which it is located by $\mathcal{S}_a$.\footnote{By definition, the source node is not on any segment.}
\end{definition}

\begin{figure}[h!]
  \centering
    \includegraphics[width=4in]{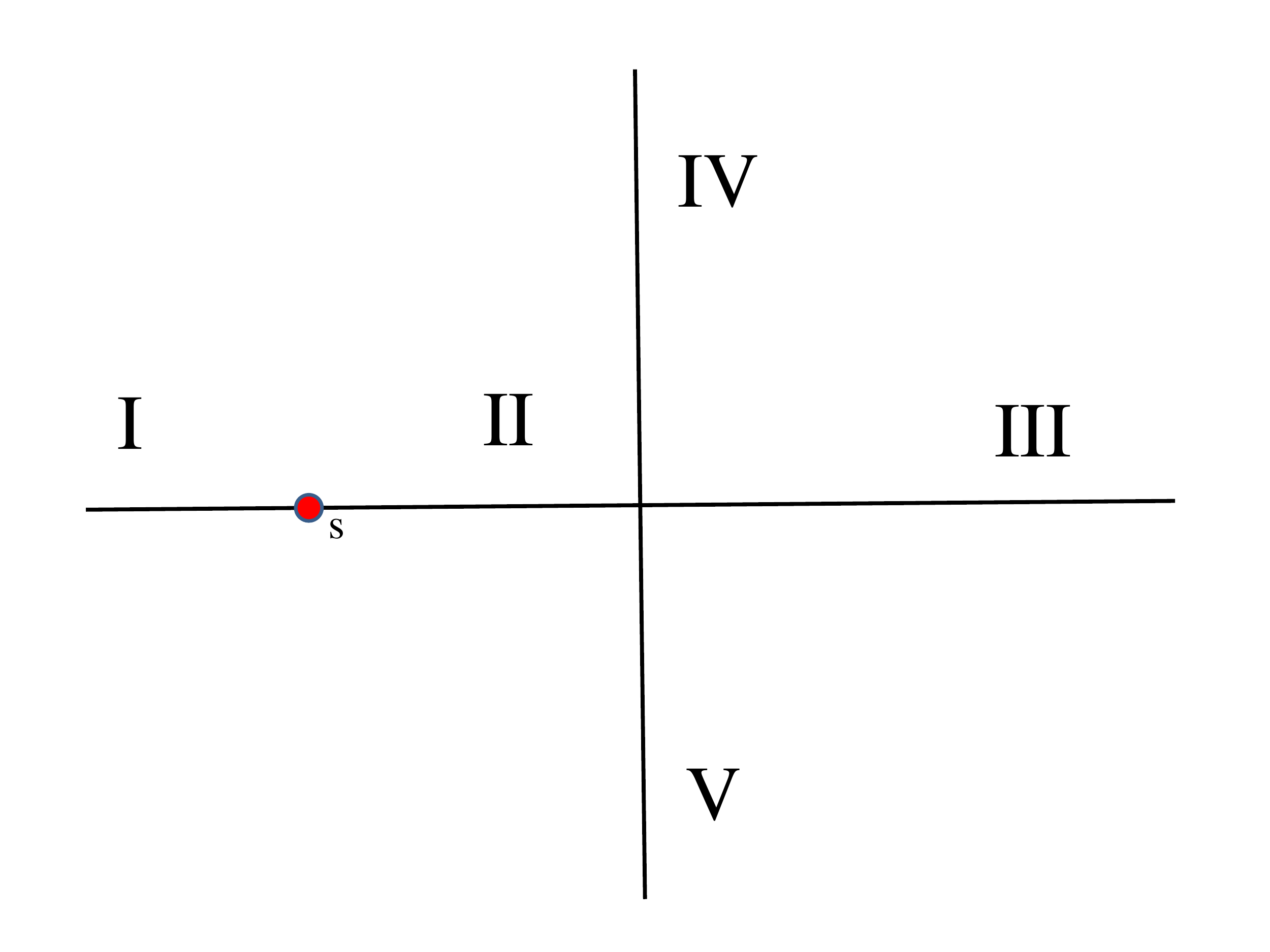}
     \caption{Segmentation of cross networks.}
     \label{fig4}
\end{figure}

\begin{definition}
On each segment, the closest node to the source node is called the first node of that segment.
The last node of a segment is the node farthest away from the source on that segment.
We denote the first node and the last node of any segment $\mathcal{S}$ by $f_{\mathcal{S}}$ and $l_{\mathcal{S}}$, respectively.\footnote{\label{footII}If Segment II is empty, we assume that node $s$ takes all the functionalities of node $l_{\text{II}}$.}
\end{definition}

We denote the set of nodes $\mathcal{N}\setminus\{s,l_{\text{II}},f_{\text{III}},f_{\text{IV}},f_{\text{V}}\}$ by $\mathcal{\hat{N}}$.

\begin{definition}
Node $a$ is after node $b$ on the same segment, if $\mathcal{S}_a=\mathcal{S}_b$ and $d(s,a)>d(s,b)$. For these two nodes, we say node $b$ is before node $a$.
\end{definition}

\begin{definition}
For any node $a\in \mathcal{N}\setminus\{s\}$, we call the first node after $a$ on $\mathcal{S}_a$, \textbf{the next adjacent neighbor} of node $a$, and we denote it by $\mathfrak{n}_a$. Furthermore, we define $M(a)=d(a,\mathfrak{n}_a)$. If node $a$ is the last node on its segment, as it does not have a next adjacent neighbor, we define $M(a)=0$.
\end{definition}

Fig. \ref{fignan} illustrates the next adjacent neighbor of some nodes in a cross network.

\begin{figure}[h!]
  \centering
    \includegraphics[scale=0.45, trim = 5mm 10mm 10mm 10mm, clip]{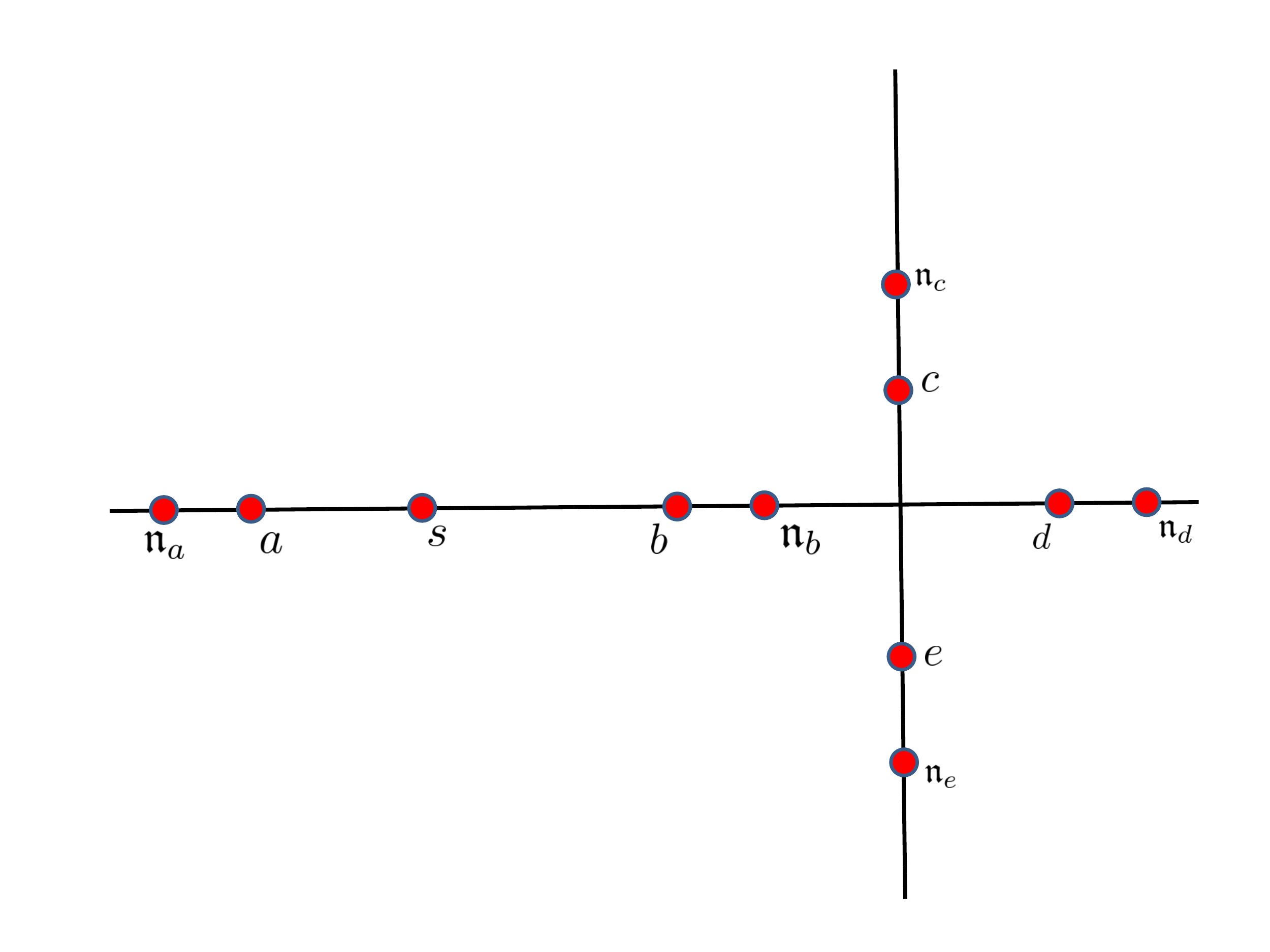}
     \caption{The next adjacent neighbors of nodes $a$, $b$, $c$, $d$ and $e$.}
     \label{fignan}
\end{figure}

\begin{definition}
For assignment $R$, we say that node $a\in \mathcal{N}\setminus\{s\}$ has \textbf{increased (transmission) range}, if $R(a)>M(a)$.
\end{definition}

\begin{definition}
In a cross network utilizing transmission range assignment $R$, for any node $a\in \mathcal{N}\setminus\{s\}$, we call the set of all nodes located after node $a$ on $\mathcal{S}_a$, that are within the transmission range of node $a$, \textbf{the same-segment receivers} of node $a$. We call the set of nodes on segments other than $\mathcal{S}_a$ that are within the transmission range of node $a$, \textbf{the other-segment receivers} of node $a$. The union of these two sets for node $a$ is called \textbf{the receivers} of node $a$.
\end{definition}

\begin{definition}
In a cross network utilizing transmission range assignment $R$, for any two nodes $a,b\in \mathcal{N}$, node $b$ receives the data via a path (starting from the source node) containing node $a$, if for node $b$ to receive the data, node $a$ has to transmit with $R(a)(\neq 0)$. In other words, if node $a$ does not transmit, node $b$ will not receive the data (at all). We show this relationship by $b\overset{R}{\leftarrow}a$. We show the case where node $b$ can receive the data even if node $a$ does not transmit by $b\overset{R}{\nleftarrow}a$.
\end{definition}

This concept is important to note, because a node has to receive the data first to be able to transmit it to other nodes. So, if $b\overset{R}{\leftarrow}a$, first node $a$ has to transmit the data with $R(a)$, then node $b$ will be able to transmit it to other nodes.

Also note that for nodes $a$ and $b$, the relation $b\overset{R}{\leftarrow}a$ does not necessarily mean that node $b$ receives the data in just one hop from node $a$. It means that the data travels through node $a$ to get to node $b$, and this is the only way for node $b$ to receive it.

Another important property of this concept is that for any two nodes $a,b\in \mathcal{N}$, we can have either $a\overset{R}{\leftarrow}b$ or $b\overset{R}{\leftarrow}a$, and not both. But we can have $a\overset{R}{\nleftarrow}b$ or $b\overset{R}{\nleftarrow}a$ or both.

\begin{definition}
In a cross network utilizing transmission range assignment $R$, we call those receivers $b$ of node $a\in \mathcal{N}$, \textbf{the intended receivers} of $a$, if for all of them, we have $b\overset{R}{\leftarrow}a$. We denote the set of intended receivers of node $a$ with transmission range $R(a)$ by $\mathfrak{I}^{R}_a$.
\end{definition}

\section{Proposed Range Assignments}
\label{sec:pro}

In the following, we explain our proposed range assignments, starting from the optimal one, followed by the near-optimal assignment and finally the distributed one.

\subsection{Optimal Range Assignment}
\label{optBsection}
One of our main contributions is to prove that in the optimal range assignment, there exists a small (and independent from $N$) set of nodes with increased transmission range, and the other nodes have either $0$ or the distance to their next adjacent neighbor as their transmission range. Also, we provide an algorithm with polynomial complexity in $N$ to find the optimal assignment for the nodes on a cross.

An upper bound of eight on the number of nodes with increased transmission range in the optimal range assignment is established in the following theorem.

\begin{theorem}
\label{theB1}
In the optimal range assignment for a cross network, denoted by $R^*$, there exist at most three nodes with increased transmission range in $\mathcal{\hat{N}}$. All the other nodes in $\mathcal{\hat{N}}$ have either $0$ or the distance to their next adjacent neighbor as their transmission range. The optimal transmission range of a node in the set $\{s,l_{\text{II}},f_{\text{III}},f_{\text{IV}},f_{\text{V}}\}$ can be equal to its distance to any other node in the network.
\end{theorem}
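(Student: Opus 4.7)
The plan is to prove the three assertions by separate exchange arguments, in increasing order of difficulty. The unrestricted-range claim for the five special nodes follows from a standard truncation observation: in any minimum-cost assignment the range of any node equals either $0$ or the distance from that node to some other node of $\mathcal{N}$, since otherwise one could shrink the range down to the distance of the farthest intended receiver and strictly reduce $R^\alpha$ while preserving the receiver set. This immediately yields the third sentence of the theorem.

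For the $\{0, M(a)\}$ dichotomy, let $a\in\hat{\mathcal{N}}$ satisfy $R^*(a)\le M(a)$. Then $a$'s only candidate same-segment receiver is $\mathfrak{n}_a$, which is covered iff $R^*(a)=M(a)$, so any intermediate value $R^*(a)\in(0,M(a))$ must be forced by an other-segment intended receiver $b$. Let $f_b$ denote the boundary node (from the set $\{l_{\text{II}},f_{\text{III}},f_{\text{IV}},f_{\text{V}}\}$) on $b$'s segment, nearest to the intersection. Because the two lines of the cross are perpendicular, $d(f_b,b)\le d(\text{intersection},b)<d(a,b)\le R^*(a)$. I would then show by exchange that rerouting $b$ to be received from $f_b$ (extending $R^*(f_b)$ if necessary) together with dropping $R^*(a)$ to zero yields a strictly cheaper assignment. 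Iterating over all other-segment receivers of $a$ establishes the dichotomy.

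For the main assertion of at most three increased-range nodes in $\hat{\mathcal{N}}$, I would first argue, by a more delicate version of the exchange above, that in $R^*$ any cross-segment transmission originates from one of the five special nodes. Once cross-segment traffic is confined this way, each segment decouples into a one-dimensional minimum-energy broadcasting subproblem whose source is either $s$ (for segments I and II) or one of $f_{\text{III}},f_{\text{IV}},f_{\text{V}}$. Invoking the one-dimensional optimality analysis of \cite{ataei}, which tightly bounds the number of skipping nodes on each such chain, and accounting for the fact that $s$ simultaneously sources segments I and II (so its own range choice can subsume a would-be skip on either side), yields the global bound of three.

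The main obstacle I anticipate is making the cross-segment rerouting exchange work uniformly, particularly for nodes with genuinely increased range: a naive rerouting can fail when a node's savings $R^*(a)^\alpha - M(a)^\alpha$ do not cover the boundary node's required range increase. Handling this will require a case analysis that may need to reroute several of $a$'s other-segment receivers simultaneously and exploit the strict convexity of $x\mapsto x^\alpha$ for $\alpha\ge 2$, together with the geometric observation that the relevant boundary node is strictly closer to every other-segment receiver of $a$ than $a$ itself is.
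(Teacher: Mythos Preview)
Your treatment of the first two assertions is essentially correct and parallels the paper's own arguments (the paper packages the $\{0,M(a)\}$ dichotomy as its Lemma~\ref{lemB1}, using the same rerouting through $s$ or the appropriate $f$-node). One small gap: your boundary node $f_b$ is only defined for $b$ on Segments III, IV, V; when $a$ lies on Segment~I or~II and has an other-segment receiver on the other of these two, the correct reroute is through $s$, which you do not cover.

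The third assertion, however, is where your plan breaks down. Your key step is the claim that ``in $R^*$ any cross-segment transmission originates from one of the five special nodes.'' Combined with your own dichotomy argument (increased range in $\hat{\mathcal N}$ forces an other-segment intended receiver), this would imply that \emph{no} node in $\hat{\mathcal N}$ ever has increased range, i.e.\ the bound is zero, not three. That is strictly stronger than the theorem and is in fact false: there are optimal cross-network assignments in which one, two, or three nodes of $\hat{\mathcal N}$ carry genuine cross-segment traffic that cannot be rerouted to a special node without raising the total cost. You already sense this in your final paragraph, where you note that the savings $R^*(a)^\alpha-M(a)^\alpha$ need not cover the boundary node's range increase; that obstruction is real and is precisely why the bound is three rather than zero. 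Consequently the subsequent reduction to decoupled one-dimensional chains never gets off the ground---and even if it did, the 1-D optimum of \cite{ataei} is pure hop-by-hop with \emph{no} skipping nodes, so it could not produce the number three.

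The paper's actual argument is quite different in character. It first shows (Lemma~\ref{lemB2incd}) that any two increased-range nodes in $\hat{\mathcal N}$ are linearly ordered by the dependency relation $\overset{R^*}{\leftarrow}$, then bounds the number per segment to two (Lemma~\ref{lemBjust2}), handles Segment~II separately (Lemmas~\ref{lemBonII}--\ref{lemBIIjustoneI}), and finally runs an extensive case analysis on the possible segment locations of three or four candidate increased-range nodes, showing in each case that a fourth one forces a contradiction via Lemmas~\ref{lem5} and~\ref{lemB4}. The structural fact driving the bound is geometric: once you have three such nodes in the right dependency order, a fourth one's transmission circle necessarily swallows an earlier node's other-segment receivers, making that earlier node's increased range redundant. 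This is not a rerouting-to-special-nodes argument at all.
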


For proving this theorem, we first introduce some lemmas.

\begin{lemma}
\label{lem0}
For any node $a\in\mathcal{N}$, if $R^*(a)\neq 0$, then $\mathfrak{I}^{R^*}_a\neq \emptyset$.
\end{lemma}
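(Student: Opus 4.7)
The plan is a straightforward proof by contradiction that exploits the definition of intended receivers directly. Suppose, for the sake of contradiction, that some node $a\in\mathcal{N}$ satisfies $R^*(a)\neq 0$ but $\mathfrak{I}^{R^*}_a=\emptyset$. I would then construct an alternative assignment $R'$ that agrees with $R^*$ on every node except $a$, for which I set $R'(a)=0$. Since $\alpha>0$ and $R^*(a)>0$, we immediately obtain
\begin{align*}
cost(R')=cost(R^*)-\bigl(R^*(a)\bigr)^{\alpha}<cost(R^*),
\end{align*}
so the only remaining thing to verify is that $R'$ still delivers the data from $s$ to every node in $\mathcal{N}$; this would contradict the optimality of $R^*$.

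The second step is where the content of the lemma sits. I would argue that every node still receives the data under $R'$ by reading the hypothesis $\mathfrak{I}^{R^*}_a=\emptyset$ through the definition just introduced: for every receiver $b$ of node $a$ under $R^*$ we have $b\overset{R^*}{\nleftarrow}a$, meaning that $b$ can receive the data even when $a$ is silent. Equivalently, each such $b$ possesses a delivery path from $s$ (using the transmissions prescribed by $R^*$) that does not rely on node $a$ transmitting. Passing from $R^*$ to $R'$ only turns off $a$'s transmission and leaves all other ranges untouched, so none of these alternative delivery paths is broken. Nodes that were not receivers of $a$ at all are clearly unaffected by the change, so their existing delivery paths are preserved.

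The main obstacle is subtle and is really a matter of framing rather than calculation: one must rule out cascading failures of the form ``$a$ was not strictly needed to reach $b$, but some node $c$ that $b$ was supposed to forward to now becomes unreachable.'' This is handled cleanly because ``nodes whose sole route from $s$ passes through $a$'s transmission'' is exactly $\mathfrak{I}^{R^*}_a$, and this set is assumed to be empty for \emph{all} such descendants simultaneously. Hence $R'$ is a valid range assignment with strictly smaller cost, contradicting the optimality of $R^*$, which proves the contrapositive: $R^*(a)\neq 0\Rightarrow \mathfrak{I}^{R^*}_a\neq\emptyset$.
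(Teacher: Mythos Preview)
Your argument is correct and is precisely the natural one: silence node $a$, observe that the cost strictly drops, and use $\mathfrak{I}^{R^*}_a=\emptyset$ to conclude that every node still receives. The paper in fact states this lemma without proof, treating it as immediate, so there is no alternative argument to compare against; your write-up simply spells out what the authors regarded as evident.

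One small remark on your final paragraph: the claim that ``nodes whose sole route from $s$ passes through $a$'s transmission'' is \emph{exactly} $\mathfrak{I}^{R^*}_a$ is not literally how the paper defines the set---$\mathfrak{I}^{R^*}_a$ is restricted to \emph{direct} receivers of $a$ satisfying $b\overset{R^*}{\leftarrow}a$, not all nodes with that property. What makes your conclusion valid is the propagation step you sketch: if every direct receiver of $a$ still gets the data when $a$ is silent, then each of them still transmits with its $R^*$ range, and by induction along any delivery path every downstream node also receives. That is the content of your cascading-failure discussion, and it is sound; just be aware that the two sets coincide only in the empty case you actually need, not in general.
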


\begin{lemma}
\label{lem1}
For any given set $\{a_1,a_2,\cdots,a_W\}$ of positive numbers, where $W$ is an arbitrary integer, and for any  $\alpha \geq 2$, we have:
\begin{align}
 \left ( \sum_{k=1}^{W}{a_k}\right )^\alpha \geq \sum_{k=1}^{W}{a_k^\alpha}.
\end{align}
\end{lemma}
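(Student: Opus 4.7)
The plan is to give a short direct argument rather than invoking any heavy machinery, since this is a standard superadditivity inequality for the power function on positive reals. The key observation is that each individual summand $a_k$ is itself bounded above by the whole sum $S := \sum_{j=1}^W a_j$, and raising a nonnegative quantity to an exponent that is at least $1$ preserves such an inequality.

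First I would set $S = \sum_{j=1}^W a_j$ and note that, because all $a_k$ are positive, $0 < a_k \leq S$ for every $k$. Second, since $\alpha \geq 2$ implies $\alpha - 1 \geq 1 > 0$, the map $x \mapsto x^{\alpha-1}$ is monotone nondecreasing on $[0,\infty)$, which gives $a_k^{\alpha - 1} \leq S^{\alpha - 1}$ for each $k$. Third, I would multiply both sides by $a_k$ and sum over $k$ to obtain
\begin{align*}
\sum_{k=1}^W a_k^\alpha \;=\; \sum_{k=1}^W a_k \cdot a_k^{\alpha-1} \;\leq\; S^{\alpha-1} \sum_{k=1}^W a_k \;=\; S^{\alpha-1} \cdot S \;=\; S^\alpha,
\end{align*}
which is exactly the claim.

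An alternative, if one prefers to keep things elementary, is to induct on $W$. The base case $W=1$ is an equality, and the inductive step reduces to the two-variable inequality $(x+y)^\alpha \geq x^\alpha + y^\alpha$ for positive $x,y$ and $\alpha \geq 1$, which itself follows from the argument above applied to $W=2$ (or equivalently by noting that $f(t) = (1+t)^\alpha - 1 - t^\alpha$ satisfies $f(0)=0$ and $f'(t) = \alpha[(1+t)^{\alpha-1} - t^{\alpha-1}] \geq 0$ for $t \geq 0$). There is no real obstacle here: the only thing to be careful about is that positivity of the $a_k$ is used only to guarantee $a_k \leq S$, and that the hypothesis $\alpha \geq 2$ is used only through $\alpha - 1 \geq 0$, so in fact the inequality holds for all $\alpha \geq 1$; I would state this observation but keep the hypothesis $\alpha \geq 2$ as given since that is what is invoked elsewhere in the paper.
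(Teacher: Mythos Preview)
Your argument is correct and cleanly written. The paper itself does not supply a proof of this lemma; it is stated as an auxiliary fact and used immediately in Corollary~\ref{cor1}, so there is no ``paper's own proof'' to compare against. Your direct bound $a_k^{\alpha-1}\le S^{\alpha-1}$ followed by multiplying through by $a_k$ and summing is exactly the kind of one-line justification that fits here, and your remark that the inequality in fact holds for all $\alpha\ge 1$ is accurate and worth keeping.
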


\begin{corollary}
\label{cor1}
To transmit data on a line from node $a$ to node $b$ with minimum energy consumption, the data has to be transmitted hop by hop using the nodes in between transmitter and receiver. This means that node $a$ transmits the data to node $\mathfrak{n}_a$, and node $\mathfrak{n}_a$ transmits to its next adjacent neighbor, and so forth till the data reaches node $b$. We denote this transmission scheme by $a \dashrightarrow b$, where all the nodes from $a$ up to $b$ use their $M$ values as their transmission ranges.
\end{corollary}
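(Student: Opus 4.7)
The plan is to reduce any candidate scheme to a one-dimensional chain argument and then apply Lemma \ref{lem1} on the resulting decomposition. I enumerate the nodes lying between $a$ and $b$ (inclusive) in order along the line as $a=n_0,n_1,\ldots,n_W=b$, and let $d_i=d(n_{i-1},n_i)$ denote the adjacent gaps. The hop-by-hop scheme $a\dashrightarrow b$ has each $n_{i-1}$ transmit exactly to $n_i$, incurring total cost $\sum_{i=1}^{W} d_i^{\alpha}$, which will be shown to be a lower bound for every feasible scheme.

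Next, I would consider an arbitrary range assignment $R$ that delivers the data from $a$ to $b$. Any such assignment induces at least one chain of active transmitters $a=n_{j_0},n_{j_1},\ldots,n_{j_K}=b$ (a subsequence of the nodes between $a$ and $b$) along which the data actually propagates, where each $n_{j_{\ell-1}}$ must satisfy $R(n_{j_{\ell-1}})\geq d(n_{j_{\ell-1}},n_{j_\ell})$. Hence $cost(R)\geq \sum_{\ell=1}^{K} d(n_{j_{\ell-1}},n_{j_\ell})^{\alpha}$.

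Now each hop distance telescopes over the adjacent gaps, $d(n_{j_{\ell-1}},n_{j_\ell})=\sum_{i=j_{\ell-1}+1}^{j_\ell} d_i$, so Lemma \ref{lem1} gives
\begin{align}
d(n_{j_{\ell-1}},n_{j_\ell})^{\alpha}\;\geq\;\sum_{i=j_{\ell-1}+1}^{j_\ell} d_i^{\alpha}.
\end{align}
Summing over $\ell=1,\ldots,K$ collapses the right-hand side to $\sum_{i=1}^{W} d_i^{\alpha}$, which is exactly the cost of the hop-by-hop scheme. Thus no assignment can beat $a\dashrightarrow b$, proving the claim.

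The only subtle point I foresee is justifying why the optimal chain of active transmitters may be taken to lie entirely in $[a,b]$ rather than looping through nodes beyond $a$ or $b$. This will be handled by observing that any transmitter outside $[a,b]$ that participates in carrying the data to $b$ must have range at least as large as its distance to the corresponding node inside $[a,b]$ plus the remaining distance, so replacing such a detour by the corresponding in-line transmission can only decrease (or leave unchanged) the cost; after this reduction, the chain argument above applies directly.
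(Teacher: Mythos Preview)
Your argument is correct and is precisely the natural unpacking of the corollary from Lemma~\ref{lem1}. In fact, the paper gives no separate proof of Corollary~\ref{cor1} at all; it is stated immediately after Lemma~\ref{lem1} as a direct consequence, with the understanding that any long hop of length $\sum_i d_i$ costs at least $\sum_i d_i^{\alpha}$ by that lemma. Your decomposition into an active-transmitter chain and the telescoping application of Lemma~\ref{lem1} is exactly the intended reasoning, just written out explicitly. The final remark about detours through nodes outside $[a,b]$ goes slightly beyond what the paper needs (in all its uses of $a\dashrightarrow b$ only the nodes between $a$ and $b$ on a single segment are in play), but it is a harmless and correct refinement.
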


\begin{lemma}
\label{lem2}
In a cross network utilizing transmission range assignment $R$, if node $a\in \mathcal{N}\setminus\{s\}$ receives the data, all the nodes before it on its segment have also received the data.
\end{lemma}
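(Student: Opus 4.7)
The plan is to prove this lemma by strong induction on $h(a)$, the minimum number of hops needed for the data to travel from the source $s$ to node $a$ under the assignment $R$. For the base case $h(a)=1$, $s$ transmits directly to $a$ and so $R(s)\geq d(s,a)$; since $b$ is before $a$ on $\mathcal{S}_a$ we have $d(s,b)<d(s,a)\leq R(s)$ by definition, so $b$ lies in the transmission range of $s$ and receives the data directly.

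For the inductive step I fix a shortest chain $s=c_0\to c_1\to\cdots\to c_m=a$ (with $m=h(a)$) and analyze the last transmitter $c_{m-1}$, splitting by its location. If $c_{m-1}$ is on $\mathcal{S}_a$ and is before (or equal to) $b$, the induction hypothesis applied to $c_{m-1}$, whose hop count is $m-1<m$, makes $b$ a receiver; if $c_{m-1}$ is on $\mathcal{S}_a$ with $b$ strictly between $c_{m-1}$ and $a$, collinearity along the segment yields $d(c_{m-1},b)<d(c_{m-1},a)\leq R(c_{m-1})$ and $c_{m-1}$ reaches $b$. The case $c_{m-1}=s$ mirrors the base case. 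Otherwise $c_{m-1}$ lies on a segment different from $\mathcal{S}_a$; setting coordinates so that the two perpendicular lines are the $x$- and $y$-axes and $s$ sits on the positive $x$-axis, the sign of $d(p,a)^2-d(p,b)^2$, where $p$ is the position of $c_{m-1}$, can be computed directly, and in every sub-case except one it forces $d(c_{m-1},b)\leq d(c_{m-1},a)\leq R(c_{m-1})$, so $c_{m-1}$ reaches $b$.

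The residual sub-case, which I expect to be the main obstacle, is when $\mathcal{S}_a$ is Segment II and $c_{m-1}$ lies on Segment III, IV, or V: here direct reach from $c_{m-1}$ to $b$ may fail, and the induction hypothesis on $c_{m-1}$ does not concern $b$ because $b$ is on a different segment. To close it I trace the shortest chain backwards from $c_{m-1}$ and pick the smallest index $j$ such that $c_j$ lies on Segment III, IV, or V, i.e., the first time the chain leaves $\{s\}\cup\text{Segment I}\cup\text{Segment II}$. The transmitter $c_{j-1}$ must then lie on $\{s\}\cup\text{Segment I}\cup\text{Segment II}$ (on $s$'s side of the intersection along the $x$-axis), and re-running the geometric inequality for the transmission $c_{j-1}\to c_j$ — leveraging the observation that $R(c_{j-1})\geq |c_{j-1}^{x}|$ whenever this transmission crosses the $y$-axis or the origin — shows that either $c_{j-1}$ reaches $b$ directly, or $c_{j-1}$ sits on Segment II strictly past $b$, in which case the induction hypothesis applied to $c_{j-1}$ produces $b$ as a receiver and finishes the argument.
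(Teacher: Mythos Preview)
Your overall strategy---strong induction on the hop count with a case analysis on the position of the last transmitter $c_{m-1}$---is sound and supplies details that the paper omits entirely: the paper's proof is the single sentence ``This is a direct result of the circular transmission range assumption of the nodes.'' In particular, you have correctly isolated the one genuinely delicate situation, namely $\mathcal{S}_a=\text{II}$ with $c_{m-1}$ on Segment III, IV, or V, and your resolution (backtracking along the chain to the first index $j$ at which it enters Segments III, IV, or V, using $R(c_{j-1})\geq |c_{j-1}^x|$, and then either reaching $b$ directly or invoking the induction hypothesis on $c_{j-1}\in\text{II}$) is exactly right.

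There is, however, a slip in the sub-case ``$c_{m-1}$ on $\mathcal{S}_a$.'' Your two clauses there both describe the configuration in which $c_{m-1}$ is \emph{before} $b$ (``$b$ strictly between $c_{m-1}$ and $a$'' says the same thing, since $b$ is before $a$), so they overlap; and for that configuration your first justification (the induction hypothesis on $c_{m-1}$) fails, because the hypothesis only delivers nodes \emph{before} $c_{m-1}$, whereas $b$ lies \emph{after} it. The collinearity argument of your second clause is what actually works there. Meanwhile the complementary configuration---$c_{m-1}$ on $\mathcal{S}_a$ and \emph{after} $b$ (this includes $c_{m-1}$ strictly between $b$ and $a$, and also $c_{m-1}$ after $a$)---is never listed, and it is precisely where the induction hypothesis on $c_{m-1}$ is the correct tool, since then $b$ is before $c_{m-1}$ on $\mathcal{S}_{c_{m-1}}=\mathcal{S}_a$. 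The fix is simply to change your first clause to ``$c_{m-1}$ after (or equal to) $b$'' while keeping the induction-hypothesis justification; together with your second clause this then covers all positions of $c_{m-1}$ on $\mathcal{S}_a$.
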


\begin{proof}
This is a direct result of the circular transmission range assumption of the nodes.
\end{proof}

\begin{corollary}
\label{cor2}
In a cross network utilizing transmission range assignment $R$, for nodes $a,b \in \mathcal{N}\setminus\{s\}$ where $\mathcal{S}_a=\mathcal{S}_b$, and node $b$ is after node $a$, we have $a\overset{R}{\nleftarrow}b$.
\end{corollary}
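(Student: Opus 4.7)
My plan is to argue by contradiction, using Lemma \ref{lem2} applied to a slightly modified range assignment. Suppose, toward a contradiction, that $a\overset{R}{\leftarrow}b$. Unfolding the definition, this means that under $R$ node $a$ receives the data, but if we define a modified assignment $R'$ that agrees with $R$ everywhere except that $R'(b)=0$, then $a$ fails to receive under $R'$.

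The first step is to check that $b$ itself still receives the data under $R'$. Because $a\overset{R}{\leftarrow}b$ forces $b$ to transmit with $R(b)\neq 0$, node $b$ must have received the data under $R$, so there is a chain of transmissions $s=v_0,v_1,\dots,v_t=b$ with $d(v_{i-1},v_i)\leq R(v_{i-1})$ for every $1\leq i\leq t$. The transmitters along this chain are $v_0,\dots,v_{t-1}$, and since $b$ is the terminus, none of them equals $b$. Therefore the identical chain remains valid under $R'$, which only alters $b$'s own range, and hence $b$ still receives the data under $R'$.

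The second step is to invoke Lemma \ref{lem2} under the assignment $R'$: since $b$ receives and $a$ lies before $b$ on the common segment $\mathcal{S}_a=\mathcal{S}_b$, every node before $b$ on that segment also receives under $R'$, and in particular $a$ does. This contradicts the consequence of $a\overset{R}{\leftarrow}b$ that $a$ fails to receive under $R'$, so the assumption is untenable and $a\overset{R}{\nleftarrow}b$.

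The only subtle point — the place I would argue most carefully — is the invariance of $b$'s reception when $R(b)$ is zeroed out. It is tempting to treat this as obvious, but it is exactly what allows Lemma \ref{lem2} to be applied to $R'$; the remaining argument is just a short logical chain built on top of that observation.
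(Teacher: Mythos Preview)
Your proof is correct and follows essentially the same approach as the paper, which states Corollary~\ref{cor2} as an immediate consequence of Lemma~\ref{lem2} without further argument. Your version simply makes the implicit step explicit: you formalize ``$b$ does not transmit'' by zeroing out $R(b)$, observe that $b$ itself still receives under the modified assignment (via a shortest chain from $s$ that avoids $b$ as a transmitter), and then apply Lemma~\ref{lem2} to $R'$ rather than $R$ to force $a$ to receive as well, yielding the contradiction.
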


\begin{lemma}
\label{lem3}
In a cross network utilizing transmission range assignment $R$, for nodes $a\in \mathcal{N}\setminus\{s\}$ and $c\in \mathcal{N}$, if $a\overset{R}{\leftarrow}c$, then we have $b\overset{R}{\leftarrow}c$ for all nodes $b$ after node $a$ on $\mathcal{S}_a$.
\end{lemma}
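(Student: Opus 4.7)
The plan is to argue by contradiction, leveraging Lemma \ref{lem2} applied to the ``what-if'' scenario in which node $c$ stays silent. Concretely, I would suppose that some node $b$ after $a$ on $\mathcal{S}_a$ satisfies $b\overset{R}{\nleftarrow}c$, and derive a contradiction with the hypothesis $a\overset{R}{\leftarrow}c$.

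First, I would unfold the definition of the arrow notation. The hypothesis $a\overset{R}{\leftarrow}c$ means that if we replace $R(c)$ by $0$ while keeping every other node's range unchanged, node $a$ no longer receives the broadcast. The negation $b\overset{R}{\nleftarrow}c$ means exactly the opposite for $b$: in this same modified assignment (call it $R'$, with $R'(c)=0$ and $R'(x)=R(x)$ for $x\ne c$), node $b$ still receives the data.

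Next, I would apply Lemma \ref{lem2} to the assignment $R'$ and the node $b$. Since $b\in\mathcal{N}\setminus\{s\}$ receives the data under $R'$, every node on $\mathcal{S}_b$ that lies before $b$ also receives the data under $R'$. Because $\mathcal{S}_a=\mathcal{S}_b$ and $a$ is before $b$ by assumption, this forces $a$ to receive the data under $R'$ as well. But that directly contradicts what $a\overset{R}{\leftarrow}c$ told us, namely that silencing $c$ prevents $a$ from ever receiving the data. Hence no such $b$ exists, and $b\overset{R}{\leftarrow}c$ holds for every node $b$ after $a$ on $\mathcal{S}_a$.

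I do not expect a real obstacle here: the only step that requires any care is verifying that Lemma \ref{lem2} is indeed valid for the auxiliary assignment $R'$, but this is immediate since Lemma \ref{lem2} is a statement about arbitrary range assignments on the cross, and $R'$ is such an assignment. Everything else is an unpacking of the definitions of $\overset{R}{\leftarrow}$ and ``before/after on a segment.''
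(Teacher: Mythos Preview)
Your proof is correct and follows essentially the same approach as the paper: both argue by contradiction, suppose $b\overset{R}{\nleftarrow}c$ for some $b$ after $a$, note that silencing $c$ still lets $b$ receive, and then invoke Lemma~\ref{lem2} to conclude that $a$ also receives, contradicting $a\overset{R}{\leftarrow}c$. Your version is simply more explicit about introducing the auxiliary assignment $R'$ and checking that Lemma~\ref{lem2} applies to it.
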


\begin{proof}
The proof is by contradiction. Suppose for two nodes $a$ and $b$ on the same segment, where node $b$ is after node $a$ we have $a\overset{R}{\leftarrow}c$, but $b\overset{R}{\nleftarrow}c$. This means that if node $c$ does not transmit the data, then node $a$ will not receive it, but node $b$ receives it from another path. This contradicts Lemma \ref{lem2}.
\end{proof}

The contraposition of Lemma \ref{lem3} gives us the following corollary.

\begin{corollary}
\label{cor3}
In a cross network utilizing transmission range assignment $R$, if $b\overset{R}{\nleftarrow}c$ for node $b\in \mathcal{N}\setminus\{s\}$, then we have $a\overset{R}{\nleftarrow}c$ for all nodes $a$ before node $b$ on its segment.
\end{corollary}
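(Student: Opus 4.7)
The plan is to derive Corollary 3 as a direct logical consequence of Lemma 3 via contraposition, so no new argument about network geometry or range coverage is needed. First I would rewrite Lemma 3 in its explicit implication form: for any $a\in\mathcal{N}\setminus\{s\}$ and any $c\in\mathcal{N}$, the statement ``$a\overset{R}{\leftarrow}c$'' implies the universal statement ``$b\overset{R}{\leftarrow}c$ for every node $b$ after $a$ on $\mathcal{S}_a$.''

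Next I would fix an arbitrary node $a$ that lies before $b$ on $\mathcal{S}_b$. By the \emph{before/after} definition, this automatically gives $\mathcal{S}_a=\mathcal{S}_b$ and places $b$ among the nodes after $a$ on $\mathcal{S}_a$. Instantiating Lemma 3 with this pair $(a,c)$ and picking the particular node $b$ on the conclusion side, I obtain the implication ``$a\overset{R}{\leftarrow}c \;\Rightarrow\; b\overset{R}{\leftarrow}c$.'' The contrapositive of this single implication reads ``$b\overset{R}{\nleftarrow}c \;\Rightarrow\; a\overset{R}{\nleftarrow}c$,'' which, together with the hypothesis $b\overset{R}{\nleftarrow}c$, yields $a\overset{R}{\nleftarrow}c$. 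Since $a$ was arbitrary among nodes before $b$ on $\mathcal{S}_b$, the corollary follows.

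There is no real obstacle here; the argument is pure propositional logic. The only care points are (i) making sure the quantifier over ``nodes after $a$'' in Lemma 3 is handled correctly, which is why I instantiate to the specific $b$ in the conclusion rather than attempting to negate the universal statement in one shot, and (ii) using the property emphasized in the text that for any two nodes the alternatives $\overset{R}{\leftarrow}$ and $\overset{R}{\nleftarrow}$ are genuinely exhaustive, so that the contrapositive step is valid.
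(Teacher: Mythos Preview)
Your proposal is correct and matches the paper's approach exactly: the paper simply states that Corollary~\ref{cor3} is the contraposition of Lemma~\ref{lem3}, and your argument spells out that contrapositive step with appropriate care about the quantifiers.
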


Throughout the paper, we implicitly use the following lemma.

\begin{lemma}
\label{lem4}
In a cross network utilizing transmission range assignment $R$, any node $a\in \mathcal{N}$ is the intended receiver of at most one other node.
\end{lemma}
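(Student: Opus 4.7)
The plan is to proceed by contradiction, reducing the statement to the mutual-exclusivity property highlighted in the paper just before the lemma: namely, that for any two nodes we cannot simultaneously have $c_1 \overset{R}{\leftarrow} c_2$ and $c_2 \overset{R}{\leftarrow} c_1$. Suppose, for contradiction, that some node $a$ is an intended receiver of two distinct nodes $c_1$ and $c_2$. By the definition of intended receiver, $a$ lies within the (circular) transmission ranges of both $c_1$ and $c_2$, and both $a \overset{R}{\leftarrow} c_1$ and $a \overset{R}{\leftarrow} c_2$ hold.

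The key step, and the one I expect to be the most delicate, is to upgrade $a \overset{R}{\leftarrow} c_1$ into $c_2 \overset{R}{\leftarrow} c_1$. I would argue as follows. Consider the counter-factual in which $c_1$ does not transmit; by $a \overset{R}{\leftarrow} c_1$, node $a$ fails to receive the data in that scenario. However, disabling $c_1$ does not alter $R(c_2)$ nor the geometric fact $d(c_2,a) \leq R(c_2)$, so if $c_2$ were still to transmit in the counter-factual then $a$ would automatically lie in its coverage and receive. Since $a$ does not receive, $c_2$ must also not transmit, i.e., $c_2$ itself never obtains the data when $c_1$ is silent---which is precisely $c_2 \overset{R}{\leftarrow} c_1$. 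Swapping the roles of $c_1$ and $c_2$ in the identical argument yields $c_1 \overset{R}{\leftarrow} c_2$.

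Having both $c_1 \overset{R}{\leftarrow} c_2$ and $c_2 \overset{R}{\leftarrow} c_1$ directly contradicts the mutual-exclusivity property stated in the paragraph preceding the lemma, so no two distinct such $c_1, c_2$ can exist and the lemma follows. The only subtlety that warrants careful justification is the implication ``$a$ fails to receive $\Rightarrow$ $c_2$ fails to transmit'', which rests on the observation that altering whether $c_1$ transmits neither changes $R(c_2)$ nor moves any node, so the alternative of $c_2$ transmitting while $a$ does not receive is geometrically impossible. Notably, no property specific to the cross topology (segment structure, Lemma~\ref{lem2}, etc.) is needed, which is consistent with the fully general formulation of the lemma.
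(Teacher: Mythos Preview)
The paper does not actually supply a proof of this lemma; it is stated without proof immediately after the remark that it will be used implicitly throughout. Your argument is correct and is the natural way to justify it: from $a\overset{R}{\leftarrow}c_1$ together with $d(c_2,a)\le R(c_2)$ you validly deduce $c_2\overset{R}{\leftarrow}c_1$, and the symmetric conclusion then contradicts the mutual-exclusivity property you cite. One small point worth making explicit is that the inference ``$c_2$ does not transmit in the counter-factual $\Rightarrow$ $c_2$ did not receive'' relies on $R(c_2)\neq 0$; this is automatic here, since $a$ being a receiver of $c_2$ forces $R(c_2)\ge d(c_2,a)>0$.
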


\begin{lemma}
\label{lem5}
In a cross network utilizing transmission range assignment $R$, if for nodes $a,b\in \mathcal{N}$ we have $b\overset{R}{\nleftarrow}a$, then the intended receivers of node $a$ on any segment $\mathcal{S}$ (if any exists) are after the last receiver of $b$ on $\mathcal{S}$, which is denoted by $r^{R}_{b,\mathcal{S}}$ (if this node exists).
\end{lemma}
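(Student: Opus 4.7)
The plan is to prove this by contradiction, leveraging the already-established Lemma \ref{lem2} applied to a modified broadcast scenario in which node $a$ is suppressed. Suppose, toward a contradiction, that there exists a node $c\in\mathfrak{I}^R_a\cap\mathcal{S}$ that is either equal to $r^{R}_{b,\mathcal{S}}$ or strictly before $r^{R}_{b,\mathcal{S}}$ on $\mathcal{S}$. The goal is to derive $c\overset{R}{\nleftarrow}a$, which directly contradicts $c\in\mathfrak{I}^R_a$ (i.e., $c\overset{R}{\leftarrow}a$).

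The first step is to construct the modified scenario: keep the range assignment $R$ identical for all nodes in $\mathcal{N}\setminus\{a\}$, but force node $a$ not to transmit. The hypothesis $b\overset{R}{\nleftarrow}a$ guarantees that $b$ still receives the data in this modified scenario, and therefore $b$ still fires its transmission with range $R(b)$. The second step is to observe that $r^{R}_{b,\mathcal{S}}$, being by definition a receiver of $b$ on $\mathcal{S}$, lies within $b$'s transmission disk, so it receives the data directly from $b$ in the modified scenario.

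The third step is to invoke Lemma \ref{lem2} inside the modified scenario: once $r^{R}_{b,\mathcal{S}}$ receives the data on $\mathcal{S}$, every node before it on $\mathcal{S}$ (and itself) also receives the data. By the contradiction hypothesis, $c$ is among these nodes, so $c$ receives the data in the modified scenario. Translating this back to the original scenario, $c$ can be reached via a chain of transmissions that does not rely on $a$ firing, which is exactly the statement $c\overset{R}{\nleftarrow}a$, giving the desired contradiction. The conclusion then is that every intended receiver of $a$ on $\mathcal{S}$ must sit strictly after $r^{R}_{b,\mathcal{S}}$.

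The step I expect to require the most care is the transition between the original and modified scenarios. Specifically, one must justify that Lemma \ref{lem2} is legitimately applicable inside the modified broadcast, which only needs the circular-range property and does not depend on $a$'s participation. A second delicate point is making sure the conclusion $c\overset{R}{\nleftarrow}a$ is drawn cleanly: the chain that delivers the data to $c$ in the modified scenario is also a valid delivery chain under $R$, so it certifies that suppressing $a$ in the original assignment still leaves $c$ supplied, which is exactly how $b\overset{R}{\nleftarrow}a$ was interpreted for $b$. No heavy calculation should be needed; the argument is almost entirely a careful use of the bookkeeping notation introduced in the preceding definitions.
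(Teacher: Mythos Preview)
Your proposal is correct and follows essentially the same approach as the paper's proof. Both arguments suppress $a$'s transmission, use the hypothesis $b\overset{R}{\nleftarrow}a$ to guarantee that $b$ still receives and fires, and then conclude via the circular-range geometry (the paper invokes it directly, you invoke it through Lemma~\ref{lem2}) that every node on $\mathcal{S}$ up to $r^{R}_{b,\mathcal{S}}$ is reached without $a$'s participation; the only cosmetic difference is that the paper phrases this directly rather than by contradiction.
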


\begin{proof}
Node $b$ receives the data via a path that does not contain node $a$, so if node $a$ does not transmit, node $b$ and (according to Lemma \ref{lem2}) all the nodes from $f_{\mathcal{S}_b}$ up to $b$ on segment $\mathcal{S}_b$ receive the data. Having the circular transmission range of the nodes in mind, we can see that when node $b$ transmits the data, all the nodes from $f_{\mathcal{S}}$ (if $\mathcal{S}\neq\mathcal{S}_b$) or node $b$ (if $\mathcal{S}=\mathcal{S}_b$) up to $r^{R}_{b,\mathcal{S}}$ on segment $\mathcal{S}$ receive the data. So all of the nodes from $f_{\mathcal{S}}$ to $r^{R}_{b,\mathcal{S}}$ on segment $\mathcal{S}$, receive the data even if node $a$ does not transmit. Hence they are not in the set $\mathfrak{I}^{R}_a$. Therefore, for node $a$ to have some intended receivers on segment $\mathcal{S}$, they have to be after node $r^{R}_{b,\mathcal{S}}$.
\end{proof}

\begin{lemma}
\label{lemB1}
For every node $a\in \mathcal{\hat{N}}$, $R^*(a)$ is either zero or greater than or equal to $M(a)$. Furthermore, if this node has an increased transmission range, i.e., if $R^*(a)>M(a)$, it must have at least one intended receiver on a segment other than $\mathcal{S}_a$.
\end{lemma}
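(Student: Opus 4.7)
Both parts are proved by contradiction: whenever the stated conclusion fails, I will exhibit a strictly cheaper feasible assignment $R'$, contradicting the optimality of $R^*$.

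For the first claim, suppose $0 < R^*(a) < M(a)$. Then $a$ has no same-segment receivers (the nearest candidate $\mathfrak{n}_a$ already sits at distance $M(a) > R^*(a)$), so Lemma \ref{lem0} forces $\mathfrak{I}^{R^*}_a$ to be nonempty and lie entirely off $\mathcal{S}_a$. The plan is to pick a well-placed node $b$ and set
\[
R'(a)=0, \qquad R'(b)=\max\!\bigl(R^*(b),\, D_b\bigr), \qquad D_b:=\max_{y \in \mathfrak{I}^{R^*}_a} d(b,y),
\]
keeping every other range unchanged. The choice of $b$ is what varies. For $a$ on Segment III, IV, or V we take $b$ to be the on-segment predecessor of $a$, which exists precisely because $a \notin \{f_{\text{III}},f_{\text{IV}},f_{\text{V}}\}$; a short case check (perpendicular axis vs.\ Segment II vs.\ Segment I) gives $d(b,y)<d(a,y)$ strictly. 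For $a$ on Segment II, if some intended receiver lies on III, IV, or V then $R^*(a) \ge d(a,I) \ge d(a,\mathfrak{n}_a)=M(a)$ already (using $a \ne l_{\text{II}}$ to place $\mathfrak{n}_a$ between $a$ and $I$), an immediate contradiction; otherwise every intended receiver is on Segment I and we take $b=s$, with $d(s,y)<d(a,y)$ by collinearity. For $a$ on Segment I we again take $b=s$: every off-segment node is strictly closer to $s$ than to $a$ (by collinearity for $y$ on II or III, and by expanding the square roots for $y$ on IV or V).

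Feasibility of $R'$ holds uniformly: $a$ itself still receives through its unchanged upstream chain on $\mathcal{S}_a$ (Lemma \ref{lem2}); the former intended receivers of $a$ are now covered by $b$; the former unintended off-segment receivers keep their alternate paths by definition; and nodes on $\mathcal{S}_a$ after $a$ never depended on $a$'s transmission in the first place because $R^*(a)<M(a)$. The pointwise bound $d(b,y)<d(a,y)\le R^*(a)$ gives $D_b<R^*(a)$, so
\[
cost(R') - cost(R^*) \;\le\; D_b^\alpha - R^*(a)^\alpha \;<\; 0,
\]
contradicting optimality.

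For the second claim I argue the contrapositive: assume every intended receiver of $a$ lies on $\mathcal{S}_a$, and show $R^*(a) \le M(a)$. Because $R^*(a) > M(a) > 0$ forces $\mathfrak{I}^{R^*}_a$ to be nonempty by Lemma \ref{lem0}, minimality then pins $R^*(a) = d(a,a_k)$ where $a_k$ is the farthest same-segment intended receiver. If $a_k=\mathfrak{n}_a$ we are done; otherwise let $a_1=\mathfrak{n}_a, a_2, \dots, a_k$ be the consecutive same-segment nodes after $a$ with $k \ge 2$, and switch to hop-by-hop propagation (in the spirit of Corollary \ref{cor1}): set $R'(a)=M(a)$ and $R'(a_i)=\max(R^*(a_i), d(a_i, a_{i+1}))$ for $i=1,\dots,k-1$. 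Feasibility is immediate, and the cost change is at most
\[
d(a,a_1)^\alpha + \sum_{i=1}^{k-1} d(a_i,a_{i+1})^\alpha - d(a,a_k)^\alpha,
\]
which is strictly negative by the strict version of Lemma \ref{lem1} applied to $k$ positive summands ($k \ge 2$, $\alpha \ge 2$), again contradicting optimality.

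The main obstacle is picking the right replacement $b$ in the first claim: routing everything through the source works only for $a$ on Segment I or the Segment II subcase where intended receivers sit in the $s$-direction; for $a$ on Segments III, IV, or V the source is actually farther than $a$ from the other three segments, forcing the on-segment predecessor construction. This asymmetry is precisely why $\mathcal{\hat{N}}$ excludes $\{f_{\text{III}}, f_{\text{IV}}, f_{\text{V}}\}$ (no such predecessor exists) and $l_{\text{II}}$ (needed so $\mathfrak{n}_a$ sits between $a$ and $I$ in the Segment II case).
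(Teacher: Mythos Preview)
Your proof is correct and follows essentially the same strategy as the paper's: both parts argue by contradiction, constructing a strictly cheaper feasible assignment by shifting $a$'s off-segment duties to a better-positioned node (Part~1) or replacing the single long hop with hop-by-hop transmission via Corollary~\ref{cor1} (Part~2). The only minor variation is your choice of replacement node when $a$ lies on Segment III, IV, or V---you use the immediate on-segment predecessor of $a$, whereas the paper uses $f_{\text{III}}$, $f_{\text{IV}}$, or $f_{\text{V}}$---but both choices succeed for the same geometric reason (any node on $\mathcal{S}_a$ closer to the intersection is strictly closer to every off-segment node), and your handling of the Segment~II subcase matches the paper's footnoted observation exactly.
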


\begin{proof}
According to Lemma \ref{lem0}, node $a$ does not have $0<R^*(a)<M(a)$, if it has no intended receivers. Using contradiction, suppose $R^*(a)<M(a)$, while it has some intended receivers. This implies that node $a$ does not have any receivers on $\mathcal{S}_a$, so all the nodes in $\mathfrak{I}^{R^*}_a$ are on other segments. We denote the farthest receiver of node $a$ when it transmits with $R^*(a)$ by $r^{R^*}_a$. For the node $k$ defined below, we have $d(a,r^{R^*}_a)>d(k,r^{R^*}_a)$.
\begin{align}
\label{kforproof}
k=\left\{
    \begin{array}{ll}
      s, & \hbox{if $a$ is on Segment I;} \\
      s, & \hbox{if $a$ is on Segment II};\footnotemark\\
      f_{\text{III}}, & \hbox{if $a$ is on Segment III;} \\
      f_{\text{IV}}, & \hbox{if $a$ is on Segment IV;} \\
      f_{\text{V}}, & \hbox{if $a$ is on Segment V.}
    \end{array}
  \right.
\end{align}
\footnotetext{If node $a$ is located on Segment II, it must have $R^*(a)>M(a)$ to cover some nodes on Segments III, IV or V. Hence, if $0<R^*(a)<M(a)$, all the other-segment receivers of $a$ are located on Segment I.}

Since node $a$ does not have any same-segment receivers, if we have another assignment $R$ with $R(a)=0$, $R(k)=\max\{R^*(k),d(k,r^{R^*}_a)\}$, and $R(i)=R^*(i)$, $\forall i\in \mathcal{N}\setminus\{k,a\}$, all the nodes will receive the data, but less energy will be consumed. This contradicts the optimality of $R^*$.

The proof of the next part is also by contradiction. Suppose $R^*(a)>M(a)$, but all of the intended receivers of node $a$ are on its own segment. According to Corollary \ref{cor1}, and since all the intended receivers of node $a$ are on the same line, by using $a \dashrightarrow r^{R^*}_{a,\mathcal{S}_a}$ the data will be delivered to all the intended receivers of node $a$ with less energy. In this case node $a$ does not have increased range, which contradicts our assumption.
\end{proof}

\begin{lemma}
\label{lemBII}
In a cross network utilizing transmission range assignment $R$, a node on one of the Segments III, IV or V does not have intended receivers on Segment II.
\end{lemma}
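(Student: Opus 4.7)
My plan is to argue by contradiction. Assume some node $a$ on one of Segments III, IV, V has an intended receiver $b$ on Segment II, so that $b \overset{R}{\leftarrow} a$. The goal is to exhibit a sequence of transmissions from $s$ to $b$ that does not involve $a$, which contradicts the defining property of an intended receiver. Since $a$ itself receives the data, there is a sequence $s = n_0, n_1, \ldots, n_K = a$ with $R(n_i) \ge d(n_i, n_{i+1})$ for each $i$. Note that $b$ cannot appear among the $n_i$: if it did, it would receive the data before $a$ ever transmits, already contradicting the assumption. Let $j$ be the smallest index for which $n_j$ lies on one of Segments III, IV, V, so that $n_{j-1}$ lies on Segment I, on Segment II, or equals $s$. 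Setting $R(a)=0$ does not change any $R(n_i)$ for $i<K$, so in the modified assignment the subpath $s = n_0 \to \cdots \to n_{j-1}$ still delivers the data to $n_{j-1}$, and it suffices to show that $b$ receives in this modified assignment.

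The body of the proof is a case analysis on $n_{j-1}$. If $n_{j-1}=s$ or lies on Segment I, the inequality $R(n_{j-1}) \ge d(n_{j-1},n_j) \ge d(n_{j-1},\text{intersection})$ forces the circular coverage of $n_{j-1}$ on the source's line to contain all of Segment II, so $b$ is covered directly by this single transmission. If $n_{j-1}$ lies on Segment II, I split further according to the position of $b$ relative to $n_{j-1}$. When $b$ is before $n_{j-1}$ on Segment II, Lemma \ref{lem2} applied to $n_{j-1}$ in the modified assignment (where $n_{j-1}$ still receives through a subpath that does not involve $a$) immediately yields that $b$ also receives. When $b$ is after $n_{j-1}$ on Segment II, the chain of inequalities $d(n_{j-1},b) \le d(n_{j-1},\text{intersection}) \le d(n_{j-1},n_j) \le R(n_{j-1})$ shows that the transmission of $n_{j-1}$ aimed at $n_j$ already reaches $b$.

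The step I expect to require the most care is arranging the geometry so that the final subcase is handled uniformly, since $n_j$ may sit on Segment III (collinear with $n_{j-1}$ and $b$) or on Segment IV or V (on the perpendicular line); in either sub-situation the inequality $d(n_{j-1},n_j) \ge d(n_{j-1},\text{intersection})$ is immediate from $n_j$ being at or beyond the intersection relative to the source-side node $n_{j-1}$, so the two sub-situations collapse cleanly. Combining all the cases produces a sequence of transmissions from $s$ to $b$ that avoids $a$, contradicting $b \overset{R}{\leftarrow} a$ and proving the lemma.
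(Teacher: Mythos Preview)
Your argument is correct and follows essentially the same idea as the paper's proof. The paper disposes of the lemma in one sentence --- ``if a node on one of the Segments III, IV or V receives the data, all the nodes on Segment II have already received the data'' --- relying on the reader to fill in the geometry; you have simply made that geometry explicit by tracing a transmission path $s=n_0,\ldots,n_K=a$, isolating the first crossing into Segments III/IV/V, and doing the case analysis on the position of $n_{j-1}$. One small point worth tightening: when you assert that $a$ does not appear among $n_0,\ldots,n_{j-1}$, this follows immediately from the minimality of $j$ (all $n_i$ with $i<j$ lie outside Segments III/IV/V, hence none equals $a$), so you need not assume the path is simple.
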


\begin{proof}
As a result of the circular transmission range assumption of the nodes, if a node on one of the Segments III, IV or V receives the data, all the nodes on Segment II have already received the data. 
\end{proof}

\begin{lemma}
\label{lemB4}
Consider a cross network that utilizes a range assignment $R$, for which $R(a)\neq 0$ and $R(b)\neq 0$ for nodes $a,b\in \mathcal{\hat{N}}$ (node $a$ being closer to source, if they are on the same segment). If node $a$ has an intended receiver after the last same-segment receiver of node $b$ (node $r^{R}_{b,\mathcal{S}_b}$), then assignment $R$ is not the optimal assignment.
\end{lemma}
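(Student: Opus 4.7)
The plan is to assume $R$ is optimal and build an explicit alternative assignment $R'$ with strictly lower total cost, contradicting optimality. Let $u^*$ denote the farthest intended receiver of $a$ that lies on $\mathcal{S}_b$ after $r^R_{b,\mathcal{S}_b}$ (such a $u^*$ exists by hypothesis). Because $u^*\in\mathfrak{I}^R_a$, we have $R(a)\ge d(a,u^*)$. The guiding idea is that, since $b$ is already transmitting and is geometrically closer to $u^*$ than $a$ (either by collinearity when $\mathcal{S}_a=\mathcal{S}_b$, or because $b$ lies directly on $\mathcal{S}_b$ when $\mathcal{S}_a\ne\mathcal{S}_b$), shifting responsibility for reaching the far part of $\mathcal{S}_b$ from $a$ to $b$ should be cheaper.

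Concretely, I would set $R'(b)=d(b,u^*)$, define $R'(a)=\max\{d(a,x):x\in\mathfrak{I}^R_a\text{ and }x\text{ is not on }\mathcal{S}_b\text{ after }r^R_{b,\mathcal{S}_b}\}$ (with $R'(a)=0$ if this set is empty), and keep $R'(k)=R(k)$ for every other $k$. Validity is then verified as follows: by Lemma \ref{lem2} and the maximality of $u^*$, every intended receiver of $a$ lying on $\mathcal{S}_b$ after $r^R_{b,\mathcal{S}_b}$ sits on the segment from $b$ to $u^*$ and is thus covered by $b$'s new circular transmission; all other intended receivers of $a$ are covered by construction of $R'(a)$; any node that $R(a)$ reached but is not in $\mathfrak{I}^R_a$ still has its alternative path by the very definition of ``intended'', and that path is undisturbed in $R'$. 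Since $R'(b)\ge R(b)$, the original receivers of $b$ and every other node are preserved.

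For the cost comparison, Lemma \ref{lem1} is the essential tool. When $\mathcal{S}_a=\mathcal{S}_b$, the ordering hypothesis puts $a$, $b$, $u^*$ collinear with $b$ between $a$ and $u^*$, so $d(a,u^*)=d(a,b)+d(b,u^*)$ and Lemma \ref{lem1} gives $R(a)^\alpha\ge d(a,u^*)^\alpha\ge d(a,b)^\alpha+d(b,u^*)^\alpha$; combined with $R(b)^\alpha>0$ this outweighs $R'(b)^\alpha$ plus any residual $R'(a)^\alpha$. When $\mathcal{S}_a\ne\mathcal{S}_b$, an elementary calculation in the cross layout shows $d(a,u^*)>d(b,u^*)$ (since $b$ lies on $\mathcal{S}_b$ strictly between the intersection and $u^*$, while $a$ is off $\mathcal{S}_b$), and Lemma \ref{lem1} applied analogously produces the decrease.

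The main obstacle is the bookkeeping in the sub-case where $a$ has additional intended receivers elsewhere whose distances from $a$ force $R'(a)$ to remain nearly as large as $R(a)$: the savings $R(a)^\alpha-R'(a)^\alpha$ on $a$'s side are then modest and must be shown to still exceed the increase $R'(b)^\alpha-R(b)^\alpha=d(b,u^*)^\alpha-R(b)^\alpha$. The lever there is the decomposition $d(b,u^*)=d(b,r^R_{b,\mathcal{S}_b})+d(r^R_{b,\mathcal{S}_b},u^*)$ with $d(b,r^R_{b,\mathcal{S}_b})\le R(b)$, which via Lemma \ref{lem1} bounds the $b$-side increase in terms of the shorter segment $d(r^R_{b,\mathcal{S}_b},u^*)$; this quantity is strictly smaller than $d(a,u^*)$ (shown by a short triangle-inequality argument using that $a\ne r^R_{b,\mathcal{S}_b}$), which together with Lemma \ref{lem1} closes the gap.
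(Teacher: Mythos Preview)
Your construction goes in the wrong direction, and the cost comparison does not close. Consider the situation where $a$'s \emph{farthest} intended receiver is some node $v$ on a segment other than $\mathcal{S}_b$, with $d(a,v)=R(a)$, while $u^*$ on $\mathcal{S}_b$ satisfies $d(a,u^*)<R(a)$. Your definition then gives $R'(a)=R(a)$, so there are \emph{no} savings on the $a$-side, yet $R'(b)=d(b,u^*)>R(b)$ strictly increases the $b$-side. The total cost of $R'$ is larger, not smaller. Your attempted rescue via Lemma~\ref{lem1} is applied in the wrong direction: Lemma~\ref{lem1} yields $d(b,u^*)^\alpha=(d(b,r^{R}_{b,\mathcal{S}_b})+d(r^{R}_{b,\mathcal{S}_b},u^*))^\alpha \ge d(b,r^{R}_{b,\mathcal{S}_b})^\alpha+d(r^{R}_{b,\mathcal{S}_b},u^*)^\alpha$, which is a \emph{lower} bound on the increase, not the upper bound you need. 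Even in the collinear case your inequality $R(a)^\alpha\ge d(a,b)^\alpha+d(b,u^*)^\alpha$ is not enough: you would need $d(a,b)^\alpha+R(b)^\alpha\ge R'(a)^\alpha$, but $R'(a)$ can equal $R(a)>d(a,b)+R(b)$, so this fails.

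The paper's proof exploits the opposite modification: it shows that $b$ is redundant and can be \emph{zeroed out}. The key geometric observation you are missing is that since $u^*$ lies strictly beyond $r^{R}_{b,\mathcal{S}_b}$ on $\mathcal{S}_b$, one has $d(b,u^*)>R(b)$ and hence $R(a)\ge d(a,u^*)\ge d(a,b)+R(b)$ (collinear case) or the analogous containment in the aligned/perpendicular cases; in every case the \emph{entire} disk of $b$ sits inside the disk of $a$. Then either $\mathfrak{I}^R_b=\emptyset$ directly (same segment, contradicting Lemma~\ref{lem0}), or one sets $R'(b)=0$ and lets a fixed boundary node $k\in\{s,f_{\mathrm{III}},f_{\mathrm{IV}},f_{\mathrm{V}}\}$ (which is not in $\hat{\mathcal N}$ and sits closer to $b$'s other-segment receivers than $b$ itself) absorb any residual coverage. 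This avoids entirely the delicate balancing you attempt between $R'(a)$ and $R'(b)$.
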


\begin{proof}
The proof is by contradiction. Suppose $R=R^*$, and node $a\in\mathcal{\hat{N}}$ has an intended receiver after the last same-segment receiver of node $b\in\mathcal{\hat{N}}$. All the same-segment receivers of node $b$ are within the transmission range of node $a$. Three cases may exist:

1) Nodes $a$ and $b$ are on the same segment and $cov^{R}_{same}(a)\geq cov^{R}_{same}(b)$.

2) Node $a$ is on a segment aligned to the segment of node $b$.

3) Node $a$ is on a segment perpendicular to the segment of node $b$.

Case (1):
According to Corollary \ref{cor2}, $a\overset{R}{\nleftarrow}b$. The whole transmission circle of node $b$ is within the range of node $a$. So all the receivers of $b$ can receive the data from $a$, i.e., $\mathfrak{I}^{R}_b=\emptyset$, which contradicts the optimality of assignment $R$ (Lemma \ref{lem0}).

Case (2):
For node $a$ to have an intended receiver on $\mathcal{S}_b$, we can not have $\mathcal{S}_a=\text{III}$ and $\mathcal{S}_b=\text{II}$ (according to Lemma \ref{lemBII}). For the other cases, node $a$ having an intended receiver after $r^{R}_{b,\mathcal{S}_b}$, results in having the whole circular range of $b$ within the range of $a$. Hence, all the receivers of $b$ can receive the data from $a$. To be sure about the reception of data by node $a$, we define node $k$ as follows:

\begin{align}
k=\left\{
    \begin{array}{ll}
      s, & \hbox{if $b$ is on Segment I, and $a$ is on either Segments II or III;} \\
      f_{\text{III}}, & \hbox{if $b$ is on Segment III, and $a$ is on Segment II;} \\
      s, & \hbox{if $b$ is on Segment III, and $a$ is on Segment I;} \\
      f_{\text{IV}}, & \hbox{if $b$ is on Segment IV, and $a$ is on Segment V;} \\
      f_{\text{V}}, & \hbox{if $b$ is on Segment V, and $a$ is on Segment IV.}
    \end{array}
  \right.
\end{align}

We can have another assignment, $R'$, with $R'(b)=0$, $R'(i)=R(i)$, $\forall i\in \mathcal{N}\setminus\{k,b\}$, and $R'(k)=\max\{d(k,ro^{R}_b),R(k)\}$, where $ro^{R}_b$ denotes the farthest other-segment receiver of node $b$ when it transmits with $R(b)$. Note that since $k\overset{R}{\nleftarrow}b$ (Corollary \ref{cor2}), node $b$ has no effect on the delivery of data to node $k$. Using $R'$ all the nodes will receive the data with less energy, which contradicts the optimality of assignment $R$.

Case (3):
Similar to the previous case, and according to Lemma \ref{lemBII}, for node $a$ to have an intended receiver on $\mathcal{S}_b$, we can not have $\mathcal{S}_a=\text{IV}$ or $\mathcal{S}_a=\text{V}$, while $\mathcal{S}_b=\text{II}$. Define node $k'$ as follows:
\begin{align}
k'=\left\{
    \begin{array}{ll}
      s, & \hbox{if $b$ is on Segment I, and $a$ is on either Segments IV or V;} \\
      f_{\text{III}}, & \hbox{if $b$ is on Segment III, and $a$ is on either Segments IV or V;} \\
      f_{\text{IV}}, & \hbox{if $b$ is on Segment IV, and $a$ is on one of the Segments I, II or III;} \\
      f_{\text{V}}, & \hbox{if $b$ is on Segment V, and $a$ is on one of the Segments I, II or III.}
    \end{array}
  \right.
\end{align}

We can have another assignment, $R'$, with $R'(b)=0$, $R'(i)=R(i)$, $\forall i\in \mathcal{N}\setminus\{k',b\}$, and $R'(k')=\max\{d(k',ro^{R}_b),R(k')\}$, where $ro^{R}_b$ denotes the farthest other-segment receiver of node $b$ when it transmits with $R(b)$. Note that since $k'\overset{R}{\nleftarrow}b$ (Corollary \ref{cor2}), node $b$ has no effect on the delivery of data to node $k'$. Using $R'$ all the nodes will receive the data with less energy, which contradicts the optimality of assignment $R$.
\end{proof}

The following corollary is a direct result of Lemma \ref{lemB4}.

\begin{corollary}
\label{corBnan}
If node $a\in\mathcal{\hat{N}}$ does not have increased transmission range, and node $\mathfrak{n}_a$ receives the data from another node, then $R^*(a)=0$.
\end{corollary}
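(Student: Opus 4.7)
The plan is a proof by contradiction: I will assume $R^*(a)\neq 0$ and produce a cheaper range assignment by ``turning off'' node $a$, contradicting the optimality of $R^*$.

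First, I would combine Lemma \ref{lemB1} with the hypothesis that $a$ has no increased transmission range. Lemma \ref{lemB1} gives $R^*(a)\in\{0\}\cup[M(a),\infty)$, while the hypothesis gives $R^*(a)\le M(a)$; together with the assumption $R^*(a)\neq 0$, these pin down $R^*(a)=M(a)$. Consequently, the unique same-segment receiver of $a$ is $\mathfrak{n}_a$ itself, so $r^{R^*}_{a,\mathcal{S}_a}=\mathfrak{n}_a$. Let $c$ denote the node from which $\mathfrak{n}_a$ receives the data, i.e., $\mathfrak{n}_a\in\mathfrak{I}^{R^*}_c$; by Lemma \ref{lem4}, $\mathfrak{n}_a$ cannot simultaneously belong to $\mathfrak{I}^{R^*}_a$, so removing $a$'s transmission does not disrupt delivery to $\mathfrak{n}_a$.

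The key step is to apply Lemma \ref{lemB4} essentially verbatim, with the substitution (lemma's $a$)$\,=c$ and (lemma's $b$)$\,=a$. Node $c$ has an intended receiver, namely $\mathfrak{n}_a$, located at the last same-segment receiver $r^{R^*}_{a,\mathcal{S}_a}$ of $a$. The three-case construction in the proof of Lemma \ref{lemB4}---same segment, aligned segment, perpendicular segment---then yields an alternative assignment $R'$ with $R'(a)=0$ (together with a compensating range boost of an appropriate first node $k$ or $k'$ when $a$ has other-segment intended receivers, as in the perpendicular case) whose total energy is strictly smaller than $cost(R^*)$. This contradicts the optimality of $R^*$, and hence $R^*(a)=0$.

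The main obstacle I expect is that Lemma \ref{lemB4} is phrased with the strict word ``after'', whereas in our situation $c$'s intended receiver coincides with $r^{R^*}_{a,\mathcal{S}_a}$ rather than lying strictly past it. I would need to re-inspect each of the three cases in the proof of Lemma \ref{lemB4} and verify that (i) in the same-segment and aligned-segment cases the containment of $a$'s transmission disc in $c$'s disc still follows from the triangle-inequality identity $d(c,\mathfrak{n}_a)=d(c,a)+M(a)$, and (ii) in the perpendicular case the explicit boost of $k'$ to reach $ro^{R^*}_a$ still yields a strictly cheaper assignment. The positive saving $M(a)^\alpha$ from setting $R'(a)=0$ more than covers any necessary range increase of the compensating node. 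A minor auxiliary check is that $c$ might belong to the exceptional set $\{s,l_{\text{II}},f_{\text{III}},f_{\text{IV}},f_{\text{V}}\}$ excluded from Lemma \ref{lemB4}'s hypothesis; in that case the same swap (possibly with the compensating boost applied to $c$ itself) still goes through directly.
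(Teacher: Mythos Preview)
Your approach coincides with the paper's, which simply declares the corollary ``a direct result of Lemma~\ref{lemB4}'' via exactly the substitution you propose. Your care about the strict word ``after'' is warranted---the paper glosses over it---but as you correctly anticipate, the three-case argument of Lemma~\ref{lemB4} goes through unchanged once one observes that reaching $\mathfrak n_a$ (rather than a node strictly past it) already suffices for the disc containment in the collinear cases and for the strict saving $M(a)^\alpha$ in the perpendicular case.
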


\begin{lemma}
\label{lemB2incd}
Consider a cross network that utilizes the optimal assignment $R^*$. For nodes $a,b\in \mathcal{\hat{N}}$, if $R^*(a)>M(a)$ and $R^*(b)>M(b)$, we have either $a\overset{R^*}{\leftarrow}b$ or $b\overset{R^*}{\leftarrow}a$.
\end{lemma}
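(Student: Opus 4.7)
I argue by contradiction. Suppose $a\overset{R^*}{\nleftarrow}b$ and $b\overset{R^*}{\nleftarrow}a$. Lemma~\ref{lemB1} guarantees that both $a$ and $b$ have at least one intended receiver on a segment different from their own, because $R^*(a)>M(a)$ and $R^*(b)>M(b)$. Applying Lemma~\ref{lem5} in both directions, the intended receivers of $a$ on any segment $\mathcal{S}$ must lie strictly after $r^{R^*}_{b,\mathcal{S}}$, and symmetrically the intended receivers of $b$ on $\mathcal{S}$ must lie strictly after $r^{R^*}_{a,\mathcal{S}}$.

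The first (main) reduction is to Lemma~\ref{lemB4}: if $a$ has any intended receiver on $\mathcal{S}_b$, the preceding observation places it past $r^{R^*}_{b,\mathcal{S}_b}$, which is exactly the hypothesis of Lemma~\ref{lemB4} and contradicts the optimality of $R^*$. The symmetric statement dispatches the case in which $b$ has an intended receiver on $\mathcal{S}_a$. In the shared-segment situation $\mathcal{S}_a=\mathcal{S}_b$, Corollaries~\ref{cor2} and~\ref{cor3} force the candidate same-segment intended receivers of each node to lie beyond the other node's same-segment reach, so the Lemma~\ref{lemB4} contradiction still applies whenever either node has any same-segment intended receiver.

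It therefore suffices to handle the residual case in which $a$'s intended receivers sit only on segments outside $\{\mathcal{S}_a,\mathcal{S}_b\}$, and likewise for $b$. If $a$ and $b$ share a common target segment $\mathcal{T}$, call their intended receivers on $\mathcal{T}$ by $i_a$ and $i_b$; then $i_a\le r^{R^*}_{a,\mathcal{T}}$ (being a receiver of $a$) while Lemma~\ref{lem5} forces $i_a>r^{R^*}_{b,\mathcal{T}}$, with symmetric bounds on $i_b$, yielding the chain $r^{R^*}_{a,\mathcal{T}}<r^{R^*}_{b,\mathcal{T}}<r^{R^*}_{a,\mathcal{T}}$, an immediate contradiction. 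Otherwise the target segments are entirely disjoint, and I perform an energy-saving swap in the style of Lemma~\ref{lemB4}: replace $R^*(b)$ by $0$ and boost the range of a suitable near-intersection helper node, chosen from $\{s,f_{\text{III}},f_{\text{IV}},f_{\text{V}}\}$ via the same case table that appears in the proof of Lemma~\ref{lemB4} (selection dictated by $\mathcal{S}_b$ and $\mathcal{T}_b$), to cover $b$'s intended receivers on $\mathcal{T}_b$. Because $b\overset{R^*}{\nleftarrow}a$, every helper candidate already receives without $b$, and because every such helper sits strictly closer to $b$'s other-segment targets than $b$ itself, the new total cost is strictly smaller, contradicting the optimality of $R^*$.

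The main obstacle is the disjoint-targets subcase: one must enumerate the placements of $a$, $b$, $\mathcal{T}_a$, $\mathcal{T}_b$ permitted by the cross geometry (further pruned by Lemma~\ref{lemBII}) and check in each configuration that an admissible helper yields a strict energy improvement. All the other pieces follow transparently from Lemmas~\ref{lem5} and~\ref{lemB4} together with the $a\leftrightarrow b$ symmetry of the setup.
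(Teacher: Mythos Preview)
Your overall strategy---contradiction, then Lemma~\ref{lemB1} to force other-segment intended receivers, then Lemmas~\ref{lem5} and~\ref{lemB4} to exclude $\mathcal{S}_a,\mathcal{S}_b$ as targets, then a clash of the two ``after'' constraints on the remaining segments---is exactly the paper's. The paper's argument is in fact even shorter than yours: after ruling out $\mathcal{S}_a$ and $\mathcal{S}_b$ it simply states that on any remaining segment $\mathcal{S}$ the two Lemma~\ref{lem5} constraints (intended receivers of $a$ after $r^{R^*}_{b,\mathcal{S}}$ and of $b$ after $r^{R^*}_{a,\mathcal{S}}$) are jointly ``impossible,'' without isolating a disjoint-targets subcase or performing any energy swap.

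You are right to flag the disjoint-targets subcase as the delicate point the paper glosses over, but the swap you sketch does not close it. The sentence ``Because $b\overset{R^*}{\nleftarrow}a$, every helper candidate already receives without $b$'' is a non sequitur: what you need is $\text{helper}\overset{R^*}{\nleftarrow}b$, and $b\overset{R^*}{\nleftarrow}a$ does not deliver that. Setting $R^*(b)=0$ also kills $b$'s coverage of its \emph{same-segment} receivers (recall $R^*(b)>M(b)$), and a helper chosen on a different segment need not pick those nodes up; your swap only re-covers $\mathcal{T}_b$. Finally, the case table in the proof of Lemma~\ref{lemB4} is keyed to the configuration ``one node has an intended receiver beyond the other's \emph{same-segment} reach,'' whereas here neither $a$ nor $b$ targets the other's segment, so the table does not transfer verbatim to choosing a helper indexed by $(\mathcal{S}_b,\mathcal{T}_b)$. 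As written, then, the disjoint-targets branch is not complete; the rest of your argument matches the paper's and is fine.
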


\begin{proof}
The proof is by contradiction. Suppose nodes $a$ and $b$ have increased range, and we have $a\overset{R^*}{\nleftarrow}b$ and $b\overset{R^*}{\nleftarrow}a$. Since nodes $a$ and $b$ have increased range, according to Lemma \ref{lemB1}, they must have intended receivers on segments other than their own. According to Lemmas \ref{lem5} and \ref{lemB4}, none of them has other-segment intended receivers on $\mathcal{S}_a$ and $\mathcal{S}_b$. For any of the other segments, e.g., $\mathcal{S}$, using Lemma \ref{lem5}, the intended receivers of node $a$ on $\mathcal{S}$ have to be after $r^{R^*}_{b,\mathcal{S}}$ and the intended receivers of node $b$ on $\mathcal{S}$ have to be after $r^{R^*}_{a,\mathcal{S}}$, which is impossible and contradicts our assumption.
\end{proof}

\begin{lemma}
\label{lemBonII}
In a cross network, utilizing $R^*$, a node in $\mathcal{\hat{N}}$ with increased transmission range on Segment II must have at least one intended receiver on Segment I.
\end{lemma}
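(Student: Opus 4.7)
The plan is to argue by contradiction. Suppose that in $R^*$ there is a node $a\in\mathcal{\hat{N}}$ on Segment II with $R^*(a)>M(a)$ whose intended receivers all lie on segments other than I. By Lemma \ref{lemB1}, $a$ must have at least one intended receiver off Segment II, so under the contradiction hypothesis there is such an intended receiver on Segment III, IV, or V. Among these off-segment intended receivers, let $p$ be the one farthest from $l_{\text{II}}$, and set $D=d(l_{\text{II}},p)$. The geometric observation driving the rest of the argument is that $l_{\text{II}}$ lies between $a$ and the intersection on Segment II, so $l_{\text{II}}$ is at least as close as $a$ to every point of Segments III, IV, V (strictly closer whenever $l_{\text{II}}$ is not at the intersection).

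I would then construct a competing assignment $R'$ that moves $a$'s off-segment coverage onto $l_{\text{II}}$ and relays hop-by-hop along Segment II in place of $a$'s long transmission. Specifically: set $R'(a)=M(a)$; for every intermediate node $b\in\mathcal{\hat{N}}$ strictly between $a$ and $l_{\text{II}}$ on Segment II set $R'(b)=\max\{R^*(b),M(b)\}$ (by Lemma \ref{lemB1} each such $b$ already satisfies $R^*(b)\in\{0\}\cup[M(b),\infty)$, so this change only lifts $0$'s up to $M(b)$); set $R'(l_{\text{II}})=\max\{R^*(l_{\text{II}}),D\}$; and leave every other range unchanged. Correctness of delivery under $R'$ follows from Lemma \ref{lem2} and the circular coverage assumption: the chain $a\to\mathfrak{n}_a\to\cdots\to l_{\text{II}}$ now carries the data along Segment II up to $l_{\text{II}}$, and by the choice of $p$ the enlarged disk at $l_{\text{II}}$ contains every former intended receiver of $a$ on Segments III, IV, V. All nodes that were not intended receivers of $a$ keep their original paths to the source and remain covered.

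The decisive step, and in my view the main obstacle, is to verify $cost(R')<cost(R^*)$. Two applications of Lemma \ref{lem1} yield
\[
cost(R')-cost(R^*)\le d(a,l_{\text{II}})^{\alpha}+D^{\alpha}-R^*(a)^{\alpha},
\]
so it suffices to show $R^*(a)^{\alpha}>d(a,l_{\text{II}})^{\alpha}+D^{\alpha}$. If $p$ lies on Segment III (collinear with II), then $d(a,p)=d(a,l_{\text{II}})+D$ and Lemma \ref{lem1} applied to the two positive summands gives $R^*(a)^{\alpha}\ge d(a,p)^{\alpha}>d(a,l_{\text{II}})^{\alpha}+D^{\alpha}$. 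If $p$ lies on Segment IV or V (perpendicular to II), let $a_x$, $l_x$, and $t$ denote the distances of $a$, $l_{\text{II}}$, and $p$ from the intersection; the elementary inequality $a_x^{2}\ge(a_x-l_x)^{2}+l_x^{2}$ (valid since $0\le l_x\le a_x$) gives $R^*(a)^{\alpha}\ge(a_x^{2}+t^{2})^{\alpha/2}\ge\bigl((a_x-l_x)^{2}+(l_x^{2}+t^{2})\bigr)^{\alpha/2}$, and a final application of Lemma \ref{lem1} with exponent $\alpha/2\ge1$ produces $R^*(a)^{\alpha}\ge d(a,l_{\text{II}})^{\alpha}+D^{\alpha}$, strictly so in the generic case $l_x>0$. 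This contradicts the optimality of $R^*$ and completes the proof.
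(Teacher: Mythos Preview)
Your argument is correct and matches the paper's approach: both assume by contradiction that $a$'s off-segment intended receivers lie only on Segments III, IV, V, then shift $a$'s long transmission onto $l_{\text{II}}$ to obtain a cheaper valid assignment. The only cosmetic difference is that the paper sets $R(a)=d(a,l_{\text{II}})$ directly instead of relaying hop-by-hop, and it asserts the strict cost drop without your explicit Pythagorean computation (the boundary case $l_x=0$, $\alpha=2$ that you flag is glossed over there as well).
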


\begin{proof}
The proof is by contradiction. Suppose node $a$ on Segment II, has increased range in $R^*$, but does not have any intended receivers on Segment I. According to Lemma \ref{lemB1}, node $a$ must have at least one intended receiver on either Segments III, IV or V. In this case, we can have another assignment, $R$, with $R(i)=R^*(i)$, $\forall i\in \mathcal{N}\setminus\{l_{\text{II}},a\}$, $R(a)=d(a,l_{\text{II}})$,\footnote{We can have $a\dashrightarrow l_{\text{II}}$ to save even more energy. For this, we must have no other nodes with increased range on Segment II, which is proved in Lemma \ref{lemB5}.} and  $R(l_{\text{II}})=\max\{R^*(l_{\text{II}}),d(l_{\text{II}},r^{R^*}_a)\}$, where $r^{R^*}_a$ denotes the farthest intended receiver of node $a$ when it transmits with $R^*(a)$. By using $R$, the data will be sent to all the nodes with less energy, which contradicts the optimality of $R^*$.
\end{proof}

\begin{lemma}
\label{lemB5}
In a cross network, utilizing $R^*$, at most one node in $\mathcal{\hat{N}}$ with increased transmission range exists on Segment II.
\end{lemma}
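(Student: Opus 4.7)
The plan is to argue by contradiction. Assume that two nodes $a,b\in\mathcal{\hat{N}}$ on Segment~II both have increased range in $R^*$, with $a$ closer to $s$. First I would chain the earlier lemmas to pin down the structure. By Lemma~\ref{lemBonII}, each of $a,b$ owns at least one intended receiver on Segment~I. By Lemma~\ref{lemB2incd} together with Corollary~\ref{cor2}, $b\overset{R^*}{\leftarrow}a$, and in particular $a\overset{R^*}{\nleftarrow}b$. By Lemma~\ref{lem5}, every intended receiver of $b$ on Segment~I lies strictly beyond the farthest receiver of $a$ on Segment~I. Let $\tilde r$ denote the farthest intended receiver of $b$ on Segment~I. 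Since $a,s,\tilde r$ are collinear with $a$ between $b$ and $\tilde r$ along the axis carrying Segments~I and~II, one obtains the identity $d(b,\tilde r)=d(a,b)+d(a,\tilde r)$, together with $R^*(b)\geq d(b,\tilde r)$.

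Before the main construction I would dispose of the easy sub-case $R^*(a)\geq d(a,\tilde r)$: there, $a$'s own transmission already reaches $\tilde r$, and because $a$ is upstream of $b$ in the data flow, $\tilde r$ would still be reached even with $b$ silent, contradicting $\tilde r\overset{R^*}{\leftarrow}b$. So I may assume $R^*(a)<d(a,\tilde r)$. Now build the alternative $R'$ that hands $b$'s Segment~I duty to $a$---profitable because $a$ is closer to Segment~I than $b$. Concretely set $R'(a)=d(a,\tilde r)$, let $R'(b)$ be the smallest range that still covers every intended receiver of $b$ not on Segment~I, and leave all other node ranges at their $R^*$ values. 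Validity follows because (i)~$R'(a)>R^*(a)$ preserves the original chain $s\to\cdots\to a\to\cdots\to b$ and everything it fed; (ii)~$a$'s enlarged disc reaches $\tilde r$ on Segment~I and hence contains every former Segment~I receiver of $b$; and (iii)~the non-Segment-I duties of $b$ are served by $R'(b)$ by construction.

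For the energy comparison, Lemma~\ref{lem1} applied to the split $d(b,\tilde r)=d(a,b)+d(a,\tilde r)$ gives $d(b,\tilde r)^\alpha\geq d(a,b)^\alpha+d(a,\tilde r)^\alpha$, whence
\[
R^*(a)^\alpha+R^*(b)^\alpha \;\geq\; R^*(a)^\alpha+d(a,b)^\alpha+d(a,\tilde r)^\alpha \;>\; d(a,\tilde r)^\alpha+R'(b)^\alpha \;=\; R'(a)^\alpha+R'(b)^\alpha,
\]
contradicting the optimality of $R^*$ and closing the argument.

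The hardest piece is the strict inequality above: it holds cleanly when Segment~I is the binding constraint on $b$, i.e.\ when $R^*(b)=d(b,\tilde r)$, because then $R'(b)$ drops to $M(b)$ (or a comparably small value dictated by any minor non-Segment-I duty). When the binding constraint on $b$ is instead a far intended receiver $q$ on Segments~III, IV or V (or past $b$ on Segment~II), a direct reduction of $b$ alone buys nothing. I would resolve this either by simultaneously transferring $q$ to $l_{\text{II}}$---which is strictly closer to those segments than $b$, so the corresponding cost shift is favourable---and redoing the energy accounting with $a$, $b$ and $l_{\text{II}}$ together; or by observing that in this regime $b$'s transmission reaches $\tilde r$ as a by-product of reaching $q$, so that an appropriate application of Lemma~\ref{lem4} or Lemma~\ref{lemB4} contradicts $\tilde r$ being an intended receiver of $b$ in an optimum~$R^*$. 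Either resolution yields the contradiction and forces at most one increased-range node in $\mathcal{\hat{N}}$ on Segment~II.
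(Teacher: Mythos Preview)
Your setup is correct up through the point where you establish $R^*(a)<d(a,\tilde r)$ and $R^*(b)\ge d(b,\tilde r)=d(a,b)+d(a,\tilde r)$. But the construction that follows goes in the wrong direction and leaves a real gap.

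The displayed chain needs $R^*(a)^\alpha+d(a,b)^\alpha>R'(b)^\alpha$. When the binding constraint on $b$ is a receiver $q$ off Segment~I (on Segment~III, IV, V, or even on Segment~II past $b$), you have $R'(b)=R^*(b)$, and then your modified assignment satisfies $R'(a)^\alpha+R'(b)^\alpha=d(a,\tilde r)^\alpha+R^*(b)^\alpha>R^*(a)^\alpha+R^*(b)^\alpha$, i.e.\ it is \emph{strictly worse}. Your first proposed patch (offloading $q$ to $l_{\text{II}}$) is only a sketch; it would require a three-node energy accounting that you do not carry out, and even then $b$ may have further binding receivers you would need to iterate over. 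Your second patch does not apply: Lemma~\ref{lemB4} concerns an intended receiver lying after the last \emph{same-segment} receiver of the other node, but here $\tilde r$ is on Segment~I while $b$'s same-segment receivers are on Segment~II, so the hypothesis is not met; Lemma~\ref{lem4} gives no contradiction either, since nothing prevents $\tilde r$ from being $b$'s intended receiver while $q$ determines $R^*(b)$.

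The fix is already latent in your own inequalities. From $R^*(a)<d(a,\tilde r)$ and $R^*(b)\ge d(a,b)+d(a,\tilde r)$ you get $R^*(b)>d(a,b)+R^*(a)$, so $b$'s transmission disc strictly contains $a$'s. This is the paper's key observation. Once you see that, the right move is to \emph{reduce} $a$, not enlarge it: replace $a$'s increased range by hop-by-hop transmission $a\dashrightarrow r^{R^*}_{a,\text{II}}$ (or $a\dashrightarrow b$) along Segment~II, leaving all other ranges unchanged. Every other-segment receiver of $a$ is already covered by $b$, so coverage is preserved, the cost strictly drops by Corollary~\ref{cor1}, and $a$ no longer has increased range --- the desired contradiction. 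No case split on $b$'s binding constraint is needed.
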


\begin{proof}
The proof is by contradiction. Suppose in $R^*$ two nodes in $\mathcal{\hat{N}}$ with increased range exist on Segment II. We denote the node closer to the source by $a$, and the node closer to the intersection by $b$. 
We know that $a\overset{R^*}{\nleftarrow}b$ (based on Corollary \ref{cor2}), so according to Lemma \ref{lemB2incd}, we have $b\overset{R^*}{\leftarrow}a$.

According to Lemma \ref{lemBonII}, both nodes $a$ and $b$ must have intended receivers on Segment I. Since $a\overset{R^*}{\nleftarrow}b$, the intended receiver of node $b$ on Segment I is after node $r^{R^*}_{a,\text{I}}$ on Segment I (Lemma \ref{lem5}). This results in having the whole transmission circle of node $a$ being inside the transmission circle of node $b$.
To minimize energy (according to Corollary \ref{cor1}) for delivering data from $a$ to $b$, we can use $a\dashrightarrow r^{R^*}_{a,\text{II}}$ (or $a\dashrightarrow b$, if node $b$ is before node $r^{R^*}_{a,\text{II}}$), while all the other nodes transmit as before. This way all the nodes receive the data, but node $a$ does not have increased range, which contradicts our assumption.
\end{proof}

\begin{lemma}
\label{lemBjust2}
For a cross network utilizing the optimal assignment $R^*$, no more than two nodes with increased range exist on a segment.
\end{lemma}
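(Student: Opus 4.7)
The plan is proof by contradiction. Suppose three nodes $a, b, c \in \mathcal{\hat{N}}$ on a single segment $\mathcal{S}$ have increased transmission range under $R^*$, indexed so that $a$ precedes $b$ precedes $c$; by Lemma~\ref{lemB5} we may assume $\mathcal{S} \neq$ Segment II, so $\mathcal{S}$ is a half-line. First I would establish the flow structure: Corollary~\ref{cor2} gives $a \overset{R^*}{\nleftarrow} b$, $a \overset{R^*}{\nleftarrow} c$, and $b \overset{R^*}{\nleftarrow} c$; Lemma~\ref{lemB2incd} then forces $b \overset{R^*}{\leftarrow} a$ and $c \overset{R^*}{\leftarrow} b$. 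By Lemma~\ref{lemB1}, each of $a, b, c$ has at least one intended receiver outside $\mathcal{S}$, and by Lemma~\ref{lem5}, the intended receivers of $b$ on any other segment $\mathcal{S}'$ lie strictly past $r^{R^*}_{a,\mathcal{S}'}$ while those of $c$ on $\mathcal{S}'$ lie strictly past $r^{R^*}_{b,\mathcal{S}'}$.

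Second, I would invoke a geometric fact: because $\mathcal{S}$ is a half-line and $a$ lies strictly between the source and $b$ on $\mathcal{S}$, the Euclidean distance from a node on $\mathcal{S}$ to any point $p$ off $\mathcal{S}$ is a strictly increasing function of the node's distance from the source (every off-$\mathcal{S}$ point is reached through the source endpoint when $\mathcal{S}=\text{I}$ or through the intersection when $\mathcal{S} \in \{\text{III},\text{IV},\text{V}\}$). Consequently $d(a, p) < d(b, p) \le R^*(b)$ for every $p \in \mathfrak{I}^{R^*}_b$, so $a$ could cover each intended receiver of $b$ at a distance strictly less than $R^*(b)$.

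Third, I would construct an improving assignment $R'$ by shifting $b$'s role onto $a$ together with the hop-by-hop chain $b \dashrightarrow c$: set $R'(b) = M(b)$, set $R'(i) = M(i)$ for every intermediate node $i$ on $\mathcal{S}$ between $b$ and $c$ whose original range was zero, set $R'(a) = \max\{R^*(a), \max_{p \in \mathfrak{I}^{R^*}_b} d(a,p)\}$, and leave every other node unchanged. Under $R'$ the chain delivers the data to $c$, which still transmits with $R^*(c)$; every former intended receiver of $b$ lies in $a$'s enlarged transmission by the second step, or in $c$'s coverage on any other segment where $c$ itself has an intended receiver (since coverage on $\mathcal{S}'$ from a node on $\mathcal{S}$ is a single interval anchored at the closest endpoint of $\mathcal{S}'$, so $c$'s coverage there contains $b$'s by Lemma~\ref{lem5}). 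Combining $\max_p d(a,p)^\alpha < R^*(b)^\alpha$ with Lemma~\ref{lem1} applied to $M(b) + \sum_i M(i) = d(b,c) \le R^*(b)$, which gives $M(b)^\alpha + \sum_i M(i)^\alpha < R^*(b)^\alpha$ strictly (because $R^*(b) > M(b)$ and $\alpha \ge 2$), yields $cost(R') < cost(R^*)$, contradicting the optimality of $R^*$.

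The main obstacle is a careful case split inside the bookkeeping, triggered when $R^*(a)$ is smaller than the extension distance $\max_p d(a,p)$ and simultaneously $b$ has an intended receiver on a segment on which $c$ has no intended receiver of its own; in that sub-case one must argue that either $a$ or $c$ (whichever is cheaper for this particular geometry) can be enlarged to cover $b$'s orphaned intended receiver while the savings from reducing $R^*(b)$ to $M(b)$, together with the strictness of Lemma~\ref{lem1}, still dominate the increased cost, so that the total energy strictly decreases.
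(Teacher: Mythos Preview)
Your cost accounting in the third step does not close. You reduce $b$'s contribution from $R^*(b)^\alpha$ to $M(b)^\alpha$, but you simultaneously pay two new costs: the enlargement $R'(a)^\alpha-R^*(a)^\alpha$ and the chain $\sum_i M(i)^\alpha$. You then invoke two separate inequalities, $\max_p d(a,p)^\alpha<R^*(b)^\alpha$ and $M(b)^\alpha+\sum_i M(i)^\alpha<R^*(b)^\alpha$, and declare $cost(R')<cost(R^*)$. But the saving is a \emph{single} copy of $R^*(b)^\alpha$, and the two added costs together can easily exceed it: if $R^*(a)$ is small (so $R'(a)^\alpha-R^*(a)^\alpha$ is close to $R^*(b)^\alpha$) and the chain $b\dashrightarrow c$ has few hops (so $M(b)^\alpha+\sum_i M(i)^\alpha$ is also close to $R^*(b)^\alpha$), then $\Delta>0$. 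Your ``main obstacle'' paragraph essentially concedes this sub-case and waves at enlarging ``whichever of $a$ or $c$ is cheaper,'' but never shows that either enlargement is dominated by the residual slack in Lemma~\ref{lem1}; there is no reason it should be.

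The paper avoids this trap by never enlarging $a$ at all. It inserts a preliminary structural step for the two-node configuration $a,b$: if $b$ had an intended receiver on the segment \emph{aligned} to $\mathcal{S}$, then $b$'s disk would contain $a$'s entire disk, and $a\dashrightarrow r^{R^*}_{a,\mathcal{S}_a}$ already contradicts optimality. Hence $b$'s off-segment intended receivers lie only on \emph{perpendicular} segments (and $a$'s lie on the aligned one). With that in hand, any third node $c$ after $b$ on $\mathcal{S}$ with an intended receiver on some other segment $\mathcal{S}'$ must, by Lemma~\ref{lem5}, reach beyond $r^{R^*}_{b,\mathcal{S}'}$; a short computation with the cross geometry then shows $c$'s disk already covers \emph{all} of $b$'s perpendicular-segment receivers, for every possible choice of $\mathcal{S}'$. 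So $c$ absorbs $b$'s off-segment job at zero extra cost, and the only modification needed is $b\dashrightarrow r^{R^*}_{b,\mathcal{S}_b}$, whose cost is strictly below $R^*(b)^\alpha$ by Lemma~\ref{lem1}. That is the missing idea in your argument: shift $b$'s off-segment role forward to $c$, not backward to $a$, after first pinning down on which segments $b$'s intended receivers can lie.
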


\begin{proof}
We first study some properties of the case where in $R^*$, two nodes from $\mathcal{\hat{N}}$ on one segment have increased transmission range. Then we show that no other node with increased range in $R^*$ can exist on that segment.

Assume nodes $a$ and $b$ have increased range and node $a$ is before node $b$ on the same segment. According to Lemma \ref{lemB1}, these two nodes must have some intended receivers on other segments. We know that $a\overset{R^*}{\nleftarrow}b$ (based on Corollary \ref{cor2}), so according to Lemma \ref{lemB2incd}, we have $b\overset{R^*}{\leftarrow}a$.

According to Lemma \ref{lem5}, the intended receivers of node $b$ on any segment must be after the last intended receiver of $a$ on that segment. If node $b$ has any intended receivers on the segment aligned to its segment, the whole transmission circle of node $a$ will fall into the range of node $b$. In this case we just need to deliver the data from node $a$ to node $b$, and node $b$ transmits the data to all the other-segment receivers of node $a$. To minimize energy (according to Corollary \ref{cor1}) for delivering data from $a$ to all its same-segment receivers (when it transmits with $R^*(a)$) we can use $a\dashrightarrow r^{R^*}_{a,\mathcal{S}_a}$ (or $a\dashrightarrow b$, if node $b$ is before node $r^{R^*}_{a,\mathcal{S}_a}$, and $b$ covers the rest), while all the other nodes transmit as before.
Note that, according to Corollary \ref{corBnan}, the nodes from $\mathfrak{n}_a$ up to $r^{R^*}_{a,\mathcal{S}_a}$ that do not have increased range (i.e., except for node $b$, if $b$ is before $r^{R^*}_{a,\mathcal{S}_a}$), have zero as their transmission range, and have no effect on the delivery of data to other nodes. This means that all the nodes receive the data while node $a$ does not have increased range, which contradicts our assumption. Therefore, the only possible case is that node $a$ has some intended receivers (beyond the range of $b$) on the aligned segment to its segment, and node $b$ has some intended receivers (beyond the range of $a$) on (one of the) perpendicular segments to its segment (except for Segment II, according to Lemma \ref{lemBII}).

Now, we prove that no other node with increased range exists on segment $\mathcal{S}_a$. We showed in Lemma \ref{lemB5} that no more than one node with increased range exists on Segment II. So, the proof of this lemma for Segment II is already given. Two other cases for the segment of interest remain:

1) Segment I.

2) One of the Segments III, IV or V.

Case (1):
If node $a$ just has intended receivers on Segment II, then node $b$ has an intended receiver on another segment, which results in covering the whole transmission circle of node $a$, which as we showed before, is not possible. This means that node $a$ has some intended receivers on Segments III, IV, or V. The rest of the proof is by contradiction. Suppose nodes $a$, $b$ and $c$, all located on Segment I, have increased range. Without loss of generality, assume that node $c$ is after node $b$, which is after node $a$. According to Corollary \ref{cor2}, $a\overset{R^*}{\nleftarrow}c$ and $b\overset{R^*}{\nleftarrow}c$. For node $c$ to have increased range, according to Lemma \ref{lemB1}, it has to have some intended receivers on other segments rather than $\mathcal{S}_c$. If node $c$ has an intended receiver on one of the other segments, according to Lemma \ref{lem5}, that node is after the last intended receiver of nodes $a$ and $b$ on that segment.
Due to the circular shape of transmission range of nodes, if node $c$ has any intended receivers on other segments, all the intended receivers of node $b$ on other segments will fall into the range of node $c$, and we just need to deliver the data from node $b$ to all its same-segment receivers. To minimize the energy, according to Corollary \ref{cor1}, we can use $b\dashrightarrow r^{R^*}_{b,\mathcal{S}_b}$ (or $b\dashrightarrow c$, if node $c$ is before node $r^{R^*}_{b,\mathcal{S}_b}$), while all the other nodes transmit as before, in which node $b$ does not have increased range. This contradicts our assumption.

Case (2):
According to Lemma \ref{lemBII}, the nodes on Segments III, IV and V do not have intended receivers on Segment II. Using the same approach as in case (1), while ignoring Segment II, the lemma can be proved.
\end{proof}

\begin{lemma}
\label{lemBIInointonI}
Consider a cross network that utilizes the optimal assignment $R^*$. If a node with increased range exists on Segment II, then the nodes on Segments III, IV and V do not have intended receivers on Segment I.
\end{lemma}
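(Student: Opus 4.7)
The plan is to argue by contradiction: assume $R^*$ is optimal, some node $a \in \mathcal{\hat{N}}$ on Segment II has $R^*(a) > M(a)$, and some node $c$ on one of Segments III, IV, or V has an intended receiver $p$ on Segment I. The goal will be to construct a cheaper assignment $R'$, contradicting the optimality of $R^*$.

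The first step is a purely geometric observation: for every node $x$ on Segment II, a direct calculation gives $d(c,x) \leq d(c,s) \leq d(c,p) \leq R^*(c)$. When $c$ lies on Segment III this is the one-dimensional triangle inequality along the line carrying Segments I, II, and III; when $c$ lies on Segment IV or V, it follows from the Pythagorean theorem with $s$ being the farthest point of Segment II from $c$. The same estimate applied with $x$ on Segment I between $s$ and $p$ shows $c$'s transmission already reaches every node on Segment I from $f_{\text{I}}$ through $p$. Thus a single transmission from $c$ delivers data to all of Segment II and to Segment I out to $r^{R^*}_{c,\text{I}} \geq p$.

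Next I will apply Lemma \ref{lemBonII} to produce an intended receiver $p_a$ of $a$ on Segment I. Because $p_a \overset{R^*}{\leftarrow} a$ and $c$'s transmission does not serve $p_a$, the node $p_a$ lies strictly beyond $r^{R^*}_{c,\text{I}}$ on Segment I. I then define $R'$ by reducing $R'(a)$ to $M(a)$ (or to $0$ if $a$ has no same-segment intended receivers) and covering the intended receivers of $a$ via hop chains: on Segment I, the chain $r^{R^*}_{c,\text{I}} \dashrightarrow p_a$ (justified by Corollary \ref{cor1}) replaces $a$'s large transmission; for any intended receivers of $a$ on Segments III, IV, or V, analogous hop chains rooted at $r^{R^*}_{c,\mathcal{S}}$ are used, while all other nodes keep their $R^*$-ranges. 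By Lemma \ref{lem1}, the total $\alpha$-cost of each hop chain on a segment $\mathcal{S}$ is at most $d(r^{R^*}_{c,\mathcal{S}}, r^{R^*}_{a,\mathcal{S}})^\alpha$; a direct distance estimate (the chain's span is shorter than $a$'s single-shot reach to $r^{R^*}_{a,\mathcal{S}}$ by at least $d(a,s)$, or by the analogous positive gap for the perpendicular segments) shows this is strictly less than $R^*(a)^\alpha$, and so $cost(R') < cost(R^*)$.

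The main obstacle I expect is the subcase $c \overset{R^*}{\leftarrow} a$, in which $a$'s transmission is itself required for $c$ to receive, so naively shrinking $R^*(a)$ breaks the delivery to $c$. I plan to handle this by rerouting along Segment II: in this subcase $R^*(a)$ must already exceed $d(a, c)$, so the combined hop chain $s \dashrightarrow l_{\text{II}}$ along Segment II together with one crossing step into $\mathcal{S}_c$ and a further chain to $c$ can be added to $R'$ at strictly less $\alpha$-cost than $R^*(a)^\alpha$ by another application of Lemma \ref{lem1}. Combined with the savings from removing $a$'s increased range, this again yields $cost(R') < cost(R^*)$ and completes the contradiction.
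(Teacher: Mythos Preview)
Your argument inverts the direction the paper exploits, and in doing so opens several real gaps.

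\textbf{The placement of $p_a$.} Your justification that $p_a$ lies strictly beyond $r^{R^*}_{c,\text{I}}$ is ``$p_a \overset{R^*}{\leftarrow} a$ and $c$'s transmission does not serve $p_a$''. But $p_a$ being an intended receiver of $a$ only says (via Lemma~\ref{lem4}) that $p_a$ is not an \emph{intended} receiver of $c$; it does not exclude $p_a$ from $c$'s disk. If $c\overset{R^*}{\leftarrow}a$, then $p_a$ can perfectly well sit inside $c$'s range and still satisfy $p_a\overset{R^*}{\leftarrow}a$, because $c$ itself needs $a$. The conclusion you want follows from Lemma~\ref{lem5} only once $c\overset{R^*}{\nleftarrow}a$ is established, so it cannot precede the case split.

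\textbf{The subcase $c\overset{R^*}{\leftarrow}a$.} The assertion ``$R^*(a)$ must already exceed $d(a,c)$'' is false. The relation $c\overset{R^*}{\leftarrow}a$ means only that every path from $s$ to $c$ passes through $a$; it does not force $c$ to be a one-hop receiver of $a$. For example, $a$ may merely reach $\mathfrak{n}_a$, with later nodes on Segment~II (or $l_{\text{II}}$, or $f_{\mathcal{S}_c}$) making the jump to $\mathcal{S}_c$. Your rerouting budget in this subcase therefore has no lower bound, and the Lemma~\ref{lem1} estimate you sketch does not apply.

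\textbf{The cost bookkeeping.} Even in the easy subcase, $a$ may have intended receivers on several of Segments~I, III, IV, V simultaneously. You replace the single term $R^*(a)^\alpha$ by $M(a)^\alpha$ plus one hop-chain cost per such segment; bounding each chain by $R^*(a)^\alpha$ does not bound their sum.

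The paper sidesteps all of this by working with the direction that is free: since $c$ is on III, IV or V and $a$ is on II, circular ranges give $a\overset{R^*}{\nleftarrow}c$ unconditionally. Lemma~\ref{lem5} then forces $c$'s intended receiver on Segment~I to lie \emph{after} $r^{R^*}_{a,\text{I}}$, so $c$ already covers everything $a$ covers on Segment~I. One then shrinks $R(a)$ to $d(a,l_{\text{II}})$ and lets $l_{\text{II}}$ (strictly closer to the intersection) absorb $a$'s receivers on III, IV, V. No case split on whether $c$ depends on $a$ is needed.
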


\begin{proof}
The proof is by contradiction. Suppose node $a$ with increased range is located on Segment II, and node $b$ on one of the Segments III, IV or V has intended receivers on Segment I. We know that $a\overset{R^*}{\nleftarrow}b$ (due to the circular transmission ranges of the nodes), so according to Lemma \ref{lem5}, the intended receivers of node $b$ on Segment I are after node $r^{R^*}_{a,\text{I}}$. We can have another assignment, $R$, with $R(i)=R^*(i)$, $\forall i\in \mathcal{N}\setminus\{l_{\text{II}},a\}$, $R(a)=d(a,l_{\text{II}})$,\footnote{We can have $a\dashrightarrow l_{\text{II}}$ to save even more energy.} and  $R(l_{\text{II}})=\max\{R^*(l_{\text{II}}),d(l_{\text{II}},r^{R^*}_a)\}$, where $r^{R^*}_a$ denotes the farthest receiver of node $a$ on Segments III, IV and V, when it transmits with $R^*(a)$. By using $R$, the data will be sent to all the nodes with less energy, which contradicts the optimality of $R^*$.
\end{proof}

\begin{lemma}
\label{lemBIIjustoneI}
Consider a cross network that utilizes the optimal assignment $R^*$. If two nodes with increased range exist, each on one of the Segments I and II, then there is no other node with increased range.
\end{lemma}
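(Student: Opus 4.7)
The plan is a contradiction argument: assume $R^*$ admits three distinct nodes in $\mathcal{\hat{N}}$ with increased range---$a_I$ on Segment I, $a_{II}$ on Segment II, and a third node $c$---and build a strictly cheaper valid assignment. First I narrow the location of $c$: Lemma \ref{lemB5} forbids $c$ on Segment II; and because $a_{II}$ has increased range on II, Lemma \ref{lemBIInointonI} together with Lemma \ref{lemBII} denies any node on Segments III, IV, or V an intended receiver on Segment I or II, so by Lemma \ref{lemB1} such a $c$ must have an intended receiver on a different arm among Segments III, IV, V. Two cases remain.

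Case (a), $c$ on Segment I. Without loss of generality take $a_I$ before $c$. The analysis used in the proof of Lemma \ref{lemBjust2}, Case (1), forces $a_I$'s intended receivers onto an aligned segment (II or III) beyond $c$'s reach, and $c$'s intended receivers onto a perpendicular segment (IV or V) beyond $a_I$'s reach. Meanwhile Lemma \ref{lemBonII} supplies $a_{II}$ with an intended receiver on Segment I, and Lemmas \ref{lemB2incd} and \ref{lem5} pin down the receive-via order among $\{a_I,a_{II},c\}$. Mimicking the construction in Lemma \ref{lemBIInointonI}, I replace $a_{II}$'s long transmission by $a_{II}\dashrightarrow l_{\text{II}}$ (valid since $a_{II}$ is the only increased-range node on II) and boost $R(l_{\text{II}})$ so that it absorbs $a_{II}$'s former other-segment receivers on Segments III, IV, V; crucially, $c$'s circle---already large enough to cross the intersection---still covers the entire Segment I stretch up to $c$, including $a_{II}$'s former Segment I receivers. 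Lemma \ref{lem1} then turns the removed long hop into a strict cost drop, contradicting the optimality of $R^*$.

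Case (b), $c$ on Segment III, IV, or V. Then $c$'s increased range forces $R^*(c)\ge d(c,\text{intersection})$. I apply Lemmas \ref{lemB2incd} and \ref{lem5} to pin down the receive-via direction between $c$ and $a_{II}$, and exploit that $a_{II}$'s circle (by Lemma \ref{lemBonII}) crosses the source on the I/II axis while $c$'s circle crosses the intersection. I again reroute $a_{II}\dashrightarrow l_{\text{II}}$ with a boosted $R(l_{\text{II}})$ as in Lemma \ref{lemBIInointonI} so that $l_{\text{II}}$ takes over $a_{II}$'s other-segment receivers; once that is done, the receive-via dependence forced by Lemma \ref{lemB2incd} on $(a_I,c)$ combined with Lemma \ref{lem5} forces $c$'s remaining intended-receiver constraints to collapse onto $c$'s own segment, which, by Lemma \ref{lemB1}, contradicts $c$ having increased range. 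The net cost change, evaluated via Lemma \ref{lem1}, is strictly negative, contradicting the optimality of $R^*$.

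The main obstacle is the geometric bookkeeping: I must track circular coverages on both sides of the source and the intersection, confirm that the modified assignments still deliver the broadcast to every node, and apply Lemma \ref{lem1} with the right grouping of edge lengths to convert the coverage savings into a strict energy saving. In particular, the receive-via chain among the three increased-range nodes must be unwound carefully in each case---disentangling who needs whom is what makes the construction of the cheaper assignment work, and it is where the heavy lifting of the proof lies.
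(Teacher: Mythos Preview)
Your plan has a genuine gap in Case (a). You propose to kill $a_{II}$'s increased range (replacing it by $a_{II}\dashrightarrow l_{\text{II}}$ and boosting $R(l_{\text{II}})$), and then rely on $c$'s large circle to cover $a_{II}$'s former Segment~I receivers. But the paper's first step in this lemma---which you skip---is to show that $a_I\overset{R^*}{\leftarrow}a_{II}$ (via Lemmas~\ref{lem5} and~\ref{lemB4}), and since $c$ is after $a_I$ you also get $c\overset{R^*}{\leftarrow}a_{II}$. Hence in your modified assignment the data never reaches $a_I$ or $c$ at all: the boosted $l_{\text{II}}$ is \emph{farther} from the source than $a_{II}$ and therefore covers strictly less of Segment~I, not more. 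Geometric coverage by $c$'s circle is irrelevant if $c$ never receives. The paper avoids this by doing the opposite: it keeps $a_{II}$ intact and instead reduces $a_I$'s range (using that $c$, being after $a_I$ with intended receivers beyond both $a_I$'s and $a_{II}$'s on other segments, already absorbs all of $a_I$'s other-segment duties).

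Case (b) is too sketchy to stand as written, and the difficulty you gloss over is real. The paper does not lump III with IV/V; it first proves the structural fact that $a_{II}$'s off-segment intended receivers lie on Segment~III while $a_I$'s lie on IV or V, and then proves an inline sublemma: any node $d$ on IV/V with $d\overset{R^*}{\leftarrow}a_I$ cannot cover all of $a_{II}$'s receivers on~III (else one could drop $a_{II}$'s long hop and reroute through $s$). This sublemma is what kills both $c\in\text{III}$ (where such a $d$ would be required for $c\overset{R^*}{\leftarrow}a_I$) and $c\in\text{IV/V}$ (where $c$ itself would be such a $d$). Your proposed reroute through $l_{\text{II}}$ in Case~(b) again faces the delivery problem on Segment~I, and the claim that ``$c$'s remaining intended-receiver constraints collapse onto $c$'s own segment'' conflates properties of $R^*$ with properties of the modified assignment. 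You need the paper's structural step and its inline sublemma (or an equivalent device) before the case analysis can close.
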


\begin{proof}
Suppose nodes $a$ and $b$ with increased range are located on Segments I and II, respectively. According to Lemma \ref{lemBonII}, node $b$ has an intended receiver on Segment I. If $a\overset{R^*}{\nleftarrow}b$, then according to Lemma \ref{lem5}, the intended receivers of $b$ on Segment I are after the last same-segment receiver of $a$, which, according to Lemma \ref{lemB4}, contradicts the optimality of $R^*$. So we have $a\overset{R^*}{\leftarrow}b$.

According to Lemma \ref{lem5}, the intended receivers of node $a$ on any segment are after the last intended receiver of $b$ on that segment. If node $b$ does not have any receivers on Segments III, IV and V, or node $a$ has some intended receivers on either Segments II or III, the whole transmission circle of node $b$ will fall into the range of node $a$. In this case, we just need to deliver the data to node $a$, and node $a$ transmits the data to all the receivers of node $b$. Using another transmission range assignment $R$, with $R(i)=R^*(i)$, $\forall i\in \mathcal{N}\setminus\{s,b\}$, $R(b)=0$, and  $R(s)=\max\{R^*(s),d(s,r^{R^*}_{a,\text{I}})\}$, all the nodes receive the data with less energy, which contradicts the optimality of $R^*$. Therefore, the only possible case is that node $b$ has some intended receivers on Segment III, and node $a$ has some intended receivers (beyond the range of $b$) on (one of the) Segments IV and V.

Now suppose that we have a third node with increased range (denoted by $c$). There will be four cases for the location of node $c$:

1) Node $c$ is on Segment I.

2) Node $c$ is on Segment II.

3) Node $c$ is on Segment III.

4) Node $c$ is on either Segments IV or V.

Case (1):
Without loss of generality, we assume that node $c$ is after node $a$. According to Corollary \ref{cor2}, $a\overset{R^*}{\nleftarrow}c$, and since we have $a\overset{R^*}{\leftarrow}b$, so we have $b\overset{R^*}{\nleftarrow}c$. For node $c$ to have increased range, according to Lemma \ref{lemB1}, it has to have some intended receivers on other segments rather than $\mathcal{S}_c$. If node $c$ has an intended receiver on one of the other segments, according to Lemma \ref{lem5}, that node is after the last intended receiver of nodes $a$ and $b$ on that segment. If node $c$ has any intended receivers on another segment, all the intended receivers of node $a$ on other segments will fall into the range of node $c$, and we just need to deliver the data from node $a$ to node $c$. To minimize the energy, according to Corollary \ref{cor1}, we can use $a\dashrightarrow r^{R^*}_{a,\text{I}}$ (or $a\dashrightarrow c$, if node $c$ is before node $r^{R^*}_{a,\text{I}}$), while other nodes transmit as before. According to Lemma \ref{lemB4}, no nodes on Segments II, III, IV, or V have receivers after $r^{R^*}_{a,\text{I}}$. Hence node $c$ receives the data from the nodes on its segment. This way, all the nodes receive the data, while node $a$ does not have increased range. This contradicts our assumption. So this case is not possible.

Case (2):
In Lemma \ref{lemB5}, we showed that this case is not possible.

Before discussing the remaining cases, we prove that for a node $d$ on either Segments IV or V, if we have $d\overset{R^*}{\leftarrow}a$, node $d$ does not cover all the receivers of node $b$ on Segment III.

The proof is by contradiction. Suppose for node $d$, located on one of the Segments IV or V, we have $d\overset{R^*}{\leftarrow}a$, and also all the receivers of $b$ on Segment III are in the range of node $d$. Using another transmission range assignment $R$, with $R(i)=R^*(i)$, $\forall i\in \mathcal{N}\setminus\{s,b\}$, $R(b)=0$, and  $R(s)=\max\{R^*(s),d(s,r^{R^*}_{b,\text{I}})\}$, all the nodes receive the data\footnote{Node $a$ receives and transmits to all the nodes covered by $R^*(a)$ and $R^*(b)$ on Segments IV and V (including $\mathcal{S}_d$). Node $d$ receives and transmits to all the nodes covered by $R^*(b)$ on Segment III. The rest of the nodes receive the data similar to when we use $R^*$.} with less energy, which contradicts the optimality of $R^*$.

Also, note that for any node $c$ on Segments III, IV or V, we have $b\overset{R^*}{\nleftarrow}c$ (Lemma \ref{lemBII}). Since the nodes on Segments III, IV or V do not have intended receivers after $r^{R^*}_{b,\text{I}}$ on Segment I (Lemma \ref{lemBIInointonI}), node $a$ does not receive the data via a path containing a node on Segments III, IV or V.
So for the remaining two cases, we have $b\overset{R^*}{\nleftarrow}c$ and $a\overset{R^*}{\nleftarrow}c$, which according to Lemma \ref{lemB2incd}, results in having $c\overset{R^*}{\leftarrow}b$ and $c\overset{R^*}{\leftarrow}a$.

Case (3):
We have $c\overset{R^*}{\leftarrow}a$. Node $a$ has no intended receivers on Segment III. So, for a node (e.g., $c$) on Segment III, we have $c\overset{R^*}{\leftarrow}a$, if $c$ is after $r^{R^*}_{b,\text{III}}$, and a node (e.g., $d$) on a segment other than Segments I and II exists that $d\overset{R^*}{\leftarrow}a$ and $d$ transmits to the nodes after $r^{R^*}_{b,\text{III}}$. Otherwise, if $c$ is within the range of $b$, we will have $c\overset{R^*}{\nleftarrow}a$. Also, if $c$ is after $r^{R^*}_{b,\text{III}}$ and node $d$ does not exist, we will have $c\overset{R^*}{\leftarrow}e$, where $e$ is a receiver of node $b$ on Segment III, which results in $c\overset{R^*}{\nleftarrow}a$. But we proved that if such node (node $d$ on one of the Segments IV or V, for which we have $d\overset{R^*}{\leftarrow}a$) exists, it does not cover all the receivers of node $b$ on Segment III. So this case is not possible.

Case (4):
We have $c\overset{R^*}{\leftarrow}a$ and $c\overset{R^*}{\leftarrow}b$. According to Lemmas \ref{lemBII} and \ref{lemBIInointonI} node $c$ has no intended receivers on Segments I and II. The other-segment intended receivers of node $c$ on the two other remaining segments are after the last receivers of nodes $a$ and $b$ on those segments, according to Lemma \ref{lem5}. In both cases, all the receivers of node $b$ on Segment III fall into the range of node $c$.\footnote{If $c$ has intended receivers on the perpendicular segment to its segment, we have $cov^{R^*}_{\bot}(c)>cov^{R^*}_{oppo}(b)$. If $c$ has intended receivers on the aligned segment to its segment, we have $cov^{R^*}_{\bot}(c)\geq cov^{R^*}_{oppo}(c)>cov^{R^*}_{\bot}(a)>cov^{R^*}_{\bot}(b)\geq cov^{R^*}_{oppo}(b)$.} We proved that if such a node (node $d=c$ on one of the Segments IV or V, where $(d=c)\overset{R^*}{\leftarrow}b$) exists, it does not cover all the receivers of node $b$ on Segment III. So this case is not possible.
\end{proof}

\begin{proof}[Proof of Theorem \ref{theB1}]
Two cases may happen:

1) A node with increased range does not exist on Segment II.

2) A node with increased range exists on Segment II.

Case (1.a):

First we consider the case where only one node on any segment (except for Segment II) has increased range in $R^*$. Suppose there are four nodes with increased range in $R^*$, and each node is located on one of the Segments I, III, IV and V. Denote these nodes by $a$, $b$, $c$ and $d$. Each of these nodes has to have some intended receivers on other segments than its own segment (Lemma \ref{lemB1}). Without loss of generality, and according to Lemma \ref{lemB2incd}, assume $b\overset{R^*}{\leftarrow}a$, $c\overset{R^*}{\leftarrow}b$ (which results in $c\overset{R^*}{\leftarrow}a$) and $d\overset{R^*}{\leftarrow}c$ (which means $d\overset{R^*}{\leftarrow}a$ and $d\overset{R^*}{\leftarrow}b$). So we have $a\overset{R^*}{\nleftarrow}d$, $b\overset{R^*}{\nleftarrow}d$ and $c\overset{R^*}{\nleftarrow}d$. According to Lemmas \ref{lem5} and \ref{lemB4}, node $d$ does not have intended receivers on $\mathcal{S}_a$, $\mathcal{S}_b$ and $\mathcal{S}_c$.
Since the nodes on Segments III, IV and V do not have intended receivers on Segment II (Lemma \ref{lemBII}), the only possible case is that node $d$ is on Segment I and has some intended receivers on Segment II. This means that if node $d$ does not transmit, some nodes on Segment II will not receive the data. Therefore, all the nodes on Segments II, IV and V will not receive the data (circular transmission of the nodes results in reception of data by all the nodes on Segment II before all the nodes on Segments III, IV and V). This contradicts with $d\overset{R^*}{\leftarrow}a$, $d\overset{R^*}{\leftarrow}b$ and $d\overset{R^*}{\leftarrow}c$.
Therefore, node $d$ does not have any intended receivers on other segments, which according to Lemma \ref{lemB1} contradicts the optimality of $R^*$. So, in this case, at most three nodes with increased range exist in $R^*$.

Case (1.b):

Now assume that in $R^*$, two nodes from $\mathcal{\hat{N}}$ on one segment (except for Segment II) have increased transmission range. Assume nodes $a$ and $b$ have increased range and node $a$ is before node $b$ on the same segment. 
In the proof of Lemma \ref{lemBjust2}, we showed that $b\overset{R^*}{\leftarrow}a$. Also, we showed that the only possible case is that node $a$ has some intended receivers on the aligned segment to its segment,\footnote{If $\mathcal{S}_a=\text{I}$, this aligned segment is just Segment III (see proof of Lemma \ref{lemBjust2}).} and node $b$ has some intended receivers (beyond the range of $a$) on (one of the) perpendicular segments to $\mathcal{S}_a$.

Before further discussion, we prove that for a node $d$ on a segment perpendicular to $\mathcal{S}_a$, if we have $d\overset{R^*}{\leftarrow}b$, node $d$ does not cover all the receivers of node $a$ on a segment aligned to $\mathcal{S}_a$.

The proof is by contradiction. Suppose for node $d$, located on a perpendicular segment to $\mathcal{S}_a$, we have $d\overset{R^*}{\leftarrow}b$, and also all the receivers of $a$ on a segment aligned to $\mathcal{S}_a$ are in the range of node $d$.
Using $a\dashrightarrow r^{R^*}_{a,\mathcal{S}_a}$ (or $a\dashrightarrow b$, if node $b$ is before node $r^{R^*}_{a,\mathcal{S}_a}$), while all the other nodes transmit as before, all the nodes in the network receive the data.\footnote{Node $b$ receives and transmits to all the nodes covered by $R^*(a)$ and $R^*(b)$ on the perpendicular segments to $\mathcal{S}_a$ (including $\mathcal{S}_d$). Node $d$ receives and transmits to all the nodes covered by $R^*(a)$ on a segment aligned to $\mathcal{S}_a$. The rest of the nodes receive the data similar to when we use $R^*$.} This way, node $a$ does not have increased range, which is a contradiction.

Now suppose that we have a third node with increased range (denoted by $c$). There will be four cases for the location of node $c$ (except for Segment II):

1.b.i) Node $c$ is on the same segment as nodes $a$ and $b$, i.e., segment $\mathcal{S}_a$.

1.b.ii) Node $c$ is on the segment aligned to the segment of nodes $a$ and $b$.

1.b.iii) Node $c$ is on a segment perpendicular to the segment of nodes $a$ and $b$, on which node $b$ has intended receivers.

1.b.iv) Node $c$ is on a segment perpendicular to the segment of nodes $a$ and $b$, on which node $b$ does not have any intended receivers.

Case (1.b.i):
This case is not possible, according to Lemma \ref{lemBjust2}.

Case (1.b.ii):
First we prove that $c\overset{R^*}{\leftarrow}a$. Using proof by contradiction, we assume $c\overset{R^*}{\nleftarrow}a$. We know that node $a$ has intended receivers on $\mathcal{S}_c$. According to Lemma \ref{lem5}, the intended receivers of $a$ on $\mathcal{S}_c$ are after the last same-segment receiver of $c$, which, according to Lemma \ref{lemB4}, contradicts the optimality of $R^*$. Thus we have $c\overset{R^*}{\leftarrow}a$.

Now we prove that $b\overset{R^*}{\nleftarrow}c$. Suppose node $c$ does not transmit. Since we have $a\overset{R^*}{\nleftarrow}c$, node $a$ receives the data, and transmits it. If node $b$ is within the range of node $a$, then $b$ will receive the data as well, and obviously we have $b\overset{R^*}{\nleftarrow}c$.
According to Lemma \ref{lemB4}, no nodes on other segments have receivers after $r^{R^*}_{a,\mathcal{S}_a}$.
If node $b$ is after $r^{R^*}_{a,\mathcal{S}_a}$, it still receives the data via the nodes on its segment. So, in this case also, we have $b\overset{R^*}{\nleftarrow}c$.

Since $b\overset{R^*}{\nleftarrow}c$, based on Lemma \ref{lemB2incd} we have $c\overset{R^*}{\leftarrow}b$. This implies node $c$ is not within the range of $a$. Node $b$ has no intended receivers on $\mathcal{S}_c$. So, for node $c$ we have $c\overset{R^*}{\leftarrow}b$, if $c$ is after $r^{R^*}_{a,\mathcal{S}_c}$, and a node (e.g., $d$) on a segment other than $\mathcal{S}_a$ and $\mathcal{S}_c$ exists that $d\overset{R^*}{\leftarrow}b$ and $d$ transmits to the nodes after $r^{R^*}_{a,\mathcal{S}_c}$. If $c$ is after $r^{R^*}_{a,\mathcal{S}_c}$ and such node $d$ does not exist, we will have $c\overset{R^*}{\leftarrow}e$, where $e$ is a receiver of node $a$ on $\mathcal{S}_c$, which results in $c\overset{R^*}{\nleftarrow}b$. Therefore, node $d$ on one of the perpendicular segments to $\mathcal{S}_a$ and $\mathcal{S}_c$ exists, and we have $d\overset{R^*}{\leftarrow}b$. Node $d$ covers all the receivers of node $a$ on segment $\mathcal{S}_c$. This contradicts what we proved before studying the cases. Therefore, this case is not possible.

Case (1.b.iii):
First we prove that $c\overset{R^*}{\leftarrow}b$. Using proof by contradiction, we assume $c\overset{R^*}{\nleftarrow}b$. We know that node $b$ has intended receivers on $\mathcal{S}_c$. According to Lemma \ref{lem5} the intended receivers of $b$ on $\mathcal{S}_c$ are after the last same-segment receiver of $c$, which, according to Lemma \ref{lemB4}, contradicts the optimality of $R^*$. So we have $c\overset{R^*}{\leftarrow}b$.

Knowing $b\overset{R^*}{\nleftarrow}c$, and using Corollary \ref{cor3}, we have $a\overset{R^*}{\nleftarrow}c$. So, according to Lemma \ref{lemB2incd}, $c\overset{R^*}{\leftarrow}a$. If node $c$ has an intended receiver on $\mathcal{S}_a$, according to Lemma \ref{lem5}, it transmits to a node after the last same-segment receiver of $a$, which, according to Lemma \ref{lemB4}, contradicts the optimality of $R^*$. So, node $c$ has no intended receiver on $\mathcal{S}_a$. The intended receivers of node $c$ on the other remaining segments are after the last receivers of nodes $a$ and $b$ on those segments, according to Lemma \ref{lem5}. This results in having all the receivers of node $a$ on the segments aligned to $\mathcal{S}_a$ to fall into the range of node $c$.\footnote{If $c$ has intended receivers on the perpendicular segment to its segment, we have $cov^{R^*}_{\bot}(c)>cov^{R^*}_{oppo}(a)$. If $c$ has intended receivers on the aligned segment to its segment, we have $cov^{R^*}_{\bot}(c)\geq cov^{R^*}_{oppo}(c)>cov^{R^*}_{\bot}(b)>cov^{R^*}_{\bot}(a)\geq cov^{R^*}_{oppo}(a)$.} Similar to the Case (1.b.ii), if such node (node $d=c$ on one of the perpendicular segments to $\mathcal{S}_a$, where $(d=c)\overset{R^*}{\leftarrow}b$) exists, it does not cover all the receivers of node $a$ on the segments aligned to $\mathcal{S}_a$. So this case is not possible.

Case (1.b.iv):
First we prove that $c\overset{R^*}{\nleftarrow}a$. Using proof by contradiction, we assume $c\overset{R^*}{\leftarrow}a$, and thus $a\overset{R^*}{\nleftarrow}c$. If node $c$ does not transmit, node $a$ receives the data, and transmits it.
Using the same approach as in case (1.2), we have $b\overset{R^*}{\nleftarrow}c$. So, according to Lemma \ref{lemB2incd}, we have $c\overset{R^*}{\leftarrow}b$. Similar to the previous case, node $c$ can not have increased range, as it does not have any intended receivers on other segments.

So, Case (1.b.iv) requires $c\overset{R^*}{\nleftarrow}a$. According to Lemma \ref{lemB2incd}, we have $a\overset{R^*}{\leftarrow}c$, which based on Lemma \ref{lem3} results in $b\overset{R^*}{\leftarrow}c$.

Now, using contradiction, we prove that no fourth node with increased range can exist. Suppose a fourth node with increased range, denoted by $d$, exists. We showed in the previous cases that node $d$ can not be on $\mathcal{S}_a$, the segment aligned to $\mathcal{S}_a$ or the segment perpendicular to $\mathcal{S}_a$ on which node $b$ has some intended receivers, while we have two nodes with increased range on $\mathcal{S}_a$. It cannot be on Segment II as well. Therefore, node $d$ is on the same segment as node $c$ ($\mathcal{S}_c$). Rename the nodes $c$ and $d$ so that node $d$ be after node $c$. Similar to the approach we used to prove $c\overset{R^*}{\nleftarrow}a$, we prove that $d\overset{R^*}{\nleftarrow}a$. Hence according to Lemmas \ref{lem3} and \ref{lemB2incd}, we have $a\overset{R^*}{\leftarrow}d$ and $b\overset{R^*}{\leftarrow}d$. Also we know that $c\overset{R^*}{\nleftarrow}d$ (Corollary \ref{cor2}). The intended receivers of node $d$ on any segment are after the last intended receiver of $c$ on that segment. Similar to the discussion we had for nodes $a$ and $b$, here the only possible case is that node $d$ has intended receivers on one the perpendicular segments to $\mathcal{S}_d$. If we use $c\dashrightarrow r^{R^*}_{c,\mathcal{S}_c}$ (or $c\dashrightarrow d$, if node $d$ is before node $r^{R^*}_{c,\mathcal{S}_c}$), while all the other nodes use the same ranges, all the nodes receive the data\footnote{All the receivers of $c$ on the perpendicular segments to $\mathcal{S}_c$ are covered by $d$. The receivers of $c$ on the aligned segments to $\mathcal{S}_c$ are covered by $d$ (if the aligned segment is Segment II) or $b$ (which receives the data from $d$). That's because according to Lemma \ref{lem5}, the intended receivers of node $b$ on those segments are after the last intended receiver of $c$ on them.} with less energy. This contradicts the optimality of $R^*$.

Case (2):
If a node with increased range exists on Segment I, then according to Lemma \ref{lemBIIjustoneI}, no more nodes with increased range exist. For the rest of the proof, we assume that no node with increased range exists on Segment I.

First, consider the case where only one node on each of the Segments III, IV and V has increased range in $R^*$. The proof is by contradiction. Suppose there are four nodes with increased range in $R^*$, and each node is located on one of the Segments II, III, IV and V. Denote these nodes by $a$, $b$, $c$ and $d$, where node $a$ is located on Segment II. According to Lemmas \ref{lemBII} and \ref{lemBIInointonI}, nodes $b$, $c$ and $d$ do not have intended receivers on Segments I and II. Without loss of generality, and according to Lemma \ref{lemB2incd}, assume $c\overset{R^*}{\leftarrow}b$, $d\overset{R^*}{\leftarrow}c$ (which results in $d\overset{R^*}{\leftarrow}b$). So we have $b\overset{R^*}{\nleftarrow}d$ and $c\overset{R^*}{\nleftarrow}d$. According to Lemmas \ref{lem5} and \ref{lemB4}, node $d$ does not have intended receivers on $\mathcal{S}_b$ and $\mathcal{S}_c$. Therefore, node $d$ does not have any intended receivers on other segments, which according to Lemma \ref{lemB1} contradicts the optimality of $R^*$. So, in this case, at most three nodes with increased range exist in $R^*$.

Now, we consider the case where more than one node with increased range may exist on each of the Segments III, IV and V. According to Lemma \ref{lemBjust2}, no more than two nodes with increased range exist on each segment. Also, according to Lemma \ref{lemB5}, we have at most one node with increased range on Segment II. So, assume nodes $a$ and $b$ on one of the Segments III, IV or V have increased range, while we have a node with increased range on Segment II. We assume that node $a$ is before $b$. Hence, according to Corollary \ref{cor2} and Lemma \ref{lemB2incd}, we have $b\overset{R^*}{\leftarrow}a$. If these two nodes are on Segment III, as they can not have intended receivers on Segments I and II (Lemmas \ref{lemBII} and \ref{lemBIInointonI}), they both have intended receivers on Segments IV and V. Node $b$ having intended receivers on (one of the) Segments IV and V results in having the whole transmission circle of $a$ within range $b$, which means that node $a$ does not need to have increased range. This contradicts our assumption, so this case can not happen. If nodes $a$ and $b$ are on one of the Segments IV or V (e.g., Segment IV), as they can not have intended receivers on Segments I and II (Lemmas \ref{lemBII} and \ref{lemBIInointonI}), they have intended receivers on Segments III and V. Similar to the approach used in cases (1.1), (1.2) and (1.3), we can prove that no other node with increased range exists. So, in this case, at most three nodes with increased range exist in $R^*$.
\end{proof}

To find $R^*$, one have to search among all the possible assignments, constructed from all the possible combinations of the nodes with increased transmission range and the values of their transmission ranges. 
Based on Theorem \ref{theB1}, one needs to  consider all the possible combinations of three nodes chosen from $\hat{\cal{N}}$, i.e., $\binom{N-5}{3}$ choices, and for each such node, all the $N$ possible range values including zero. For the nodes in the set $\{l_{\text{II}},f_{\text{III}},f_{\text{IV}},f_{\text{V}}\}$ too, each node can take any of the $N$ possible range values. For the source node $s$, all the non-zero range values can be selected. To describe the eight nodes with possible increased range, we first define set $\mathcal{T}$ to contain all the possible choices for the three nodes from set $\mathcal{\hat{N}}$, and concatenate each of the possible choices for the three nodes with the nodes $l_{\text{II}}$, $f_{\text{III}}$, $f_{\text{IV}}$, and $f_{\text{V}}$. Therefore, each member $\mathbf{t}\in\mathcal{T}$ is denoted by $\{t_1,t_2,t_3,t_4,t_5,t_6,t_7\}$, where $t_1, t_2$ and $t_3$ represent the three chosen nodes, and $t_4$, $t_5$, $t_6$, and $t_7$ denote the nodes $l_{\text{II}}$, $f_{\text{III}}$, $f_{\text{IV}}$, and $f_{\text{V}}$, respectively.
Set $\mathcal{T}$ has (at most) $\binom{N-5}{3}$ members.
To search all the possible ranges for each of the three selected nodes from $\mathcal{\hat{N}}$, and also the nodes in the set $\{s,l_{\text{II}},f_{\text{III}},f_{\text{IV}},f_{\text{V}}\}$, we construct the set $\mathcal{C}=\mathcal{N}^8$. Each member of $\mathcal{C}$, which is denoted by $\mathbf{c}$, is considered as an 8-tuple of form $(c_0,c_1,c_2,c_3,c_4,c_5,c_6,c_7)$. The range of the source node is equal to $d(s,c_0)$, and the range of node $t_j$ (for $1\leq j\leq 7$) is equal to $d(t_j,c_j)$.

To account for the order in which nodes on different segments transmit data, we also need to search among all the $5!=120$ different segment orderings. We define set $\mathcal{P}$ to contain all the possible permutations of the Segments \{I,II,III,IV,V\}. Each segment ordering $\mathbf{p}=(p_1,p_2,p_3,p_4,p_5)\in \mathcal{P}$ contains the labels of the segments in the order they have to be checked. On each segment, we assign ranges to the nodes starting from the first node up to the last node on that segment. For the source node and the seven nodes from set $\mathcal{T}$, the range is already assigned as discussed before. For each remaining node, we assign its $M$ value to the range, if the next adjacent neighbor of the node has not received the data yet, otherwise, zero is assigned as the range (based on Corollary \ref{corBnan}).

Since all possible assignments do not result in the delivery of data to all the nodes, we need to construct the desired assignments in a way that the delivery of data to all the nodes is guaranteed. This is done by using received labels (label $\mathfrak{r}$). If a node receives the data, we tag it by label $\mathfrak{r}$. An assignment in which a node without this label exists, will be ignored.
The optimal assignment, is the assignment among all the constructed assignments which has the minimum cost.

To reduce the complexity of the algorithm, we define $N^2$ sets $\mathcal{R}_{i,j}, \forall i,j\in \mathcal{N}$. Each set $\mathcal{R}_{i,j}$ contains all the nodes that are within the transmission range of node $i$, when $R(i)=d(i,j)$. Construction of each of these sets can be done in $\mathcal{O}(N)$ time. Hence, all these sets can be acquired with time complexity $\mathcal{O}(N^3)$. Note that the construction of these sets is performed before the execution of the algorithm, and thus does not introduce any additional complexity to the algorithm.

The pseudo code of the algorithm that finds the optimal transmission range assignment (i.e., $R^*$) is given as Algorithm 1.

\vskip 1em
\hrule
\vskip 0.1em
\textbf{Algorithm 1} Optimal Range Assignment
\vskip 0.1em
\hrule

\algsetup{
linenosize=\footnotesize,
linenodelimiter=:,
indent=1em
}


\begin{algorithmic}[1]
\small
\setstretch{1}
\STATE assign $cost^*=+\infty$

\STATE construct sets $\mathcal{P}$, $\mathcal{T}$, $\mathcal{C}$ and $\mathcal{R}_{i,j}, \forall i,j\in \mathcal{N}$

\FORALL {$\mathbf{t}\in \mathcal{T}$}

\FORALL {$\mathbf{c}\in \mathcal{C}$}

\FORALL {$\mathbf{p}\in \mathcal{P}$}

\STATE clear transmission range assignment $R$

\STATE erase the $\mathfrak{r}$ labels of all the nodes

\STATE tag node $s$ with label $\mathfrak{r}$

\STATE assign $R(s)=d(s,c_0)$

\STATE calculate $cost(R)=R^\alpha(s)$

\STATE tag all the nodes in set $\mathcal{R}_{s,c_0}$ with label $\mathfrak{r}$

\FOR {$k=1$ \TO $5$}

\FORALL {nodes $n$ on segment $p_k$ (from the first node to the last node)}

\IF {node $n$ has not received the data yet (i.e., does not have label $\mathfrak{r}$)}

\STATE \textbf{stop} and go to next $\mathbf{p}$

\ELSIF {$n=t_j$ for $1\leq j\leq 7$}

\STATE assign $R(n)=d(n,c_j)$

\STATE tag all the nodes in set $\mathcal{R}_{n,c_j}$ with label $\mathfrak{r}$

\ELSIF {node $\mathfrak{n}_{n}$ exists and does not have label $\mathfrak{r}$}

\STATE assign $R(n)=M(n)$

\STATE tag all the nodes in set $\mathcal{R}_{n,\mathfrak{n}_{n}}$ with label $\mathfrak{r}$

\ELSE

\STATE assign $R(n)=0$

\ENDIF

\STATE update $cost(R)=cost(R)+R^\alpha(n)$

\ENDFOR

\ENDFOR

\IF {all the nodes receive the data \AND $cost(R)<cost^*$}

\STATE assign $cost^*=cost(R)$ and $R^*=R$

\ENDIF

\ENDFOR

\ENDFOR

\ENDFOR
\end{algorithmic}

\hrule
\vskip 1.7em

\begin{theorem}
\label{theB2}
Algorithm 1 finds $R^*$ with time complexity $\mathcal{O}(N^{12})$.
\end{theorem}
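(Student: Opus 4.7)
The plan is to split the argument into correctness (Algorithm~1 enumerates $R^*$ inside its search space) and a running-time count. For correctness I would invoke Theorem~\ref{theB1} together with Corollary~\ref{corBnan}. Theorem~\ref{theB1} says $R^*$ has at most three nodes of $\hat{\mathcal{N}}$ with increased range and that every range value of $R^*$ is of the form $d(i,j)$ for some nodes $i,j\in\mathcal{N}$. The Cartesian product $\mathcal{T}\times\mathcal{C}$ is constructed precisely so that some member $(\mathbf{t},\mathbf{c})$ picks the three increased-range nodes and supplies the correct range values for those three together with $s,l_{\text{II}},f_{\text{III}},f_{\text{IV}},f_{\text{V}}$. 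Corollary~\ref{corBnan} pins down the remaining nodes of $\hat{\mathcal{N}}$: each of them must take either $0$ or $M(\cdot)$, depending on whether $\mathfrak{n}_\cdot$ has already received the data through another path. Lines~19--24 make exactly this assignment. The loop over $\mathcal{P}$ tries every ordering in which segments can become reachable, so at least one of the $5!$ orderings is consistent with a causal transmission sequence realizing $R^*$. The feasibility check on line~28 discards candidates that fail to deliver data to every node, and the cost it records is exactly $cost(R)$; hence $R^*$ is retained as the enumerated minimum.

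For the running-time bound I would first multiply the sizes of the three outer loops: $|\mathcal{T}|\le\binom{N-5}{3}=\mathcal{O}(N^{3})$, $|\mathcal{C}|=N^{8}$, and $|\mathcal{P}|=5!=\mathcal{O}(1)$, giving $\mathcal{O}(N^{11})$ outer iterations. Within each iteration, the initialization on lines~6--11 costs $\mathcal{O}(N)$ (at most $N$ nodes to re-label and to tag in $\mathcal{R}_{s,c_0}$), and the segment traversal on lines~12--27 visits every node of $\mathcal{N}$ at most once. All per-visit bookkeeping apart from the tagging---label check, choice of branch, range assignment, and cost update---is constant time.

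The main obstacle, and what would otherwise push the bound to $\mathcal{O}(N^{13})$, is the tagging on lines~18 and~21: a set $\mathcal{R}_{n,j}$ can have $\Theta(N)$ elements, so a naive implementation costs $\mathcal{O}(N^{2})$ per outer iteration. My plan is to replace the explicit ``tag every element of $\mathcal{R}_{n,j}$'' by a farthest-reached pointer per segment. Concretely, precompute for every pair $(i,j)\in\mathcal{N}^{2}$ and every one of the five segments $\mathcal{S}$ the index of the farthest node on $\mathcal{S}$ that lies inside $\mathcal{R}_{i,j}$; during the algorithm maintain five integers giving the current frontier on each segment. A node is then considered $\mathfrak{r}$-labeled iff its index on its own segment does not exceed the corresponding frontier, which is an $\mathcal{O}(1)$ test, and each transmission updates five integers by a single $\max$-operation in $\mathcal{O}(1)$. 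The extra preprocessing fits inside the $\mathcal{O}(N^{3})$ work already paid to build the $\mathcal{R}_{i,j}$ sets. With this representation the inner work is $\mathcal{O}(N)$ per outer iteration, so the total cost is $\mathcal{O}(N^{11})\cdot\mathcal{O}(N)=\mathcal{O}(N^{12})$, absorbing the $\mathcal{O}(N^{3})$ preprocessing.
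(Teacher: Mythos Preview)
Your proof takes the same counting approach as the paper's: $\mathcal{O}(N^{11})$ outer iterations from $|\mathcal{T}|\cdot|\mathcal{C}|\cdot|\mathcal{P}|$, times $\mathcal{O}(N)$ work to assign the remaining $N-8$ ranges. You are in fact more careful than the paper, whose proof is just the bare count and addresses neither the correctness half of the statement nor the tagging bottleneck you identify and resolve with the frontier-pointer representation.
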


\begin{proof}
There are $\mathcal{O}(N^{11})$ choices for the selection of nodes in the set $\mathcal{T}$, and range assignments from $\mathcal{C}$. For each such choice, there are $N-8$ remaining nodes whose range is assigned as either zero or their $M$ value.
\end{proof}

\subsection{Near-Optimal Range Assignment With Linear Complexity}
The focus of the proposed sub-optimal algorithm is to find the nodes that can save energy by not transmitting, whereas the other nodes only transmit at a power level that is needed for their next adjacent neighbors to receive the data. As we will find out later, this algorithm performs close to optimal. In the following, we thus refer to it as being ``near-optimal''.

Similar to the optimal algorithm, and to check all the segment orderings, we define set $\mathcal{P}$ to contain all the $120$ possible permutations of the Segments \{I,II,III,IV,V\}.

For any node $n$ on each of the ordered segments, we assign $R(n)=M(n)$, if node $\mathfrak{n}_n$ has not received the data through the previously assigned transmission ranges. Otherwise, we assign $R(n)=0$. Here, unlike the optimal algorithm, and to reduce complexity, we do not construct sets $\mathcal{R}_{i,j}$. Instead, we just find the receivers of a fixed number of nodes (not a function of $N$). This has complexity $\mathcal{O}(N)$.

To guarantee the delivery of data to all the nodes, we check the first node of Segments III, IV and V to see if it receives the data or not. If not, we change the range of node $sn$\footnote{See line 30 of Algorithm 2 for definition.} in the set $\{s,l_{\text{II}},f_{\text{III}},f_{\text{IV}},f_{\text{V}}\}$ so that it delivers the data to the first node of the segment of interest.
If Segment II is empty, we assume that node $s$ takes all the functionalities of node $l_{\text{II}}$.

The near-optimal algorithm is described in Algorithm 2. It has output $R^{NO}$ as the near-optimal range assignment.

\vskip 1em
\hrule
\vskip 0.1em
\textbf{Algorithm 2} Near-Optimal Linear-Time Algorithm
\vskip 0.1em
\hrule

\algsetup{
linenosize=\footnotesize,
linenodelimiter=:,
indent=1em
}

\begin{algorithmic}[1]
\small
\setstretch{1}

\STATE assign $cost^{NO}=+\infty$

\STATE construct set $\mathcal{P}$

\FORALL {$\mathbf{p}\in \mathcal{P}$}

\STATE clear transmission range assignment $R$

\STATE erase the $\mathfrak{r}$ labels of all the nodes

\STATE tag node $s$ with label $\mathfrak{r}$

\STATE assign $R(s)=d(s,f_{p_1})$, where $f_{p_1}$ denotes the first node on segment $p_1$

\STATE calculate $cost(R)=R^\alpha(s)$

\STATE tag all the receivers of node $s$ with label $\mathfrak{r}$

\FOR {$k=1$ \TO $5$}

\FORALL {nodes $n$ on segment $p_k$ (from the first node to the last node)}

\IF {node $n$ has not received the data yet (i.e., is not tagged by label $\mathfrak{r}$)}

\STATE \textbf{stop} and go to next $\mathbf{p}$

\ELSIF {node $\mathfrak{n}_n$ exists and does not have label $\mathfrak{r}$}

\STATE assign $R(n)=M(n)$

\STATE tag node $\mathfrak{n}_n$ with label $\mathfrak{r}$

\ELSE

\STATE assign $R(n)=0$

\ENDIF

\STATE update $cost(R)=cost(R)+R^\alpha(n)$

\ENDFOR

\STATE find $m^{p_k}_{\bot}=\underset{\text{$m$ on segment $p_k$}}{\arg \max}\{cov^{R}_{\bot}(m)\}, m^{p_k}_{oppo}=\underset{\text{$m$ on segment $p_k$}}{\arg \max}\{cov^{R}_{oppo}(m)\}$

\IF {$p_k=\text{I}$}

\STATE find $m^{\text{I}}_{\text{II}}=\underset{\text{$m$ on Segment I}}{\arg \max} \{R(m)-d(s,m) \}$\footnote{Node $m^{\text{I}}_{\text{II}}$ denotes the node among all the nodes on Segment I, that has the maximum coverage on Segment II.}

\ELSIF {$p_k=\text{II}$}

\STATE find $m^{\text{II}}_{\text{I}}=\underset{\text{$m$ on Segment II}}{\arg \max} \{R(m)-d(s,m) \}$\footnote{Node $m^{\text{II}}_{\text{I}}$ denotes the node among all the nodes on Segment II, that has the maximum coverage on Segment I.}

\ENDIF

\STATE tag all the nodes that receive the data from nodes $m^{p_k}_{\bot}$, $m^{p_k}_{oppo}$, $m^{\text{I}}_{\text{II}}$, or $m^{\text{II}}_{\text{I}}$ (if they exist) with label $\mathfrak{r}$

\IF {$p_k\neq \text{I}$ \AND $p_{k+1}\in \{\text{III,IV,V}\}$ \AND node $f_{p_{k+1}}$ has not received the data yet}

\STATE reassign $R(sn)=d(sn,f_{p_{k+1}})$, where
$sn=\left\{
      \begin{array}{ll}
        l_{\text{II}}, & \hbox{if $p_k=\text{II}$} \\
        s, & \hbox{if $p_k=\text{II}$, and Segment II is empty} \\
        f_{p_k}, & \hbox{otherwise}
      \end{array}
    \right.
$

\STATE tag all the nodes that receive the data from node $sn$ with label $\mathfrak{r}$

\IF {$p_k \in\{\text{III,IV,V}\}$}

\STATE reassign $R(i)=0$ for all nodes $i\neq sn$ on segment $p_k$, for which node $\mathfrak{n}_i$ is a receiver of node $sn$

\ENDIF

\ENDIF

\ENDFOR

\IF {all the nodes receive the data \AND $cost(R)<cost^{NO}$}

\STATE assign $cost^{NO}=cost(R)$ and label $R^{NO}=R$

\ENDIF

\ENDFOR

\end{algorithmic}

\hrule
\vskip 1.7em

\subsection{Distributed Range Assignment}
In the distributed algorithm, every node in set $\mathcal{\hat{N}}$ just needs to know the distance to its next adjacent neighbor. In this algorithm, the source node transmits the data to its two adjacent neighbors in Segments I and II (we will discuss the case where Segment II is empty later), by having a transmission range sufficient to reach the farthest one.
Any node $a\in \mathcal{\hat{N}}$ waits to receive the data for the first time, then transmits the data to its next adjacent neighbor, i.e., with range $M(a)$.

If $a\in \{l_{\text{II}},f_{\text{III}},f_{\text{IV}},f_{\text{V}}\}$, in addition to its next adjacent neighbor, node $a$ has to consider the other three nodes in this set as well.
The graph consisting of the nodes in the set $\{l_{\text{II}},f_{\text{III}},f_{\text{IV}},f_{\text{V}}\}$, and all the possible edges between them, is denoted by $G^{\diamondsuit}$, and is illustrated in Fig. \ref{fig5}.\footnote{If Segment II is empty, node $l_{\text{II}}$ will be replaced by node $s$.} We refer to this graph as the {\em diamond graph}.

\begin{figure}[h!]
  \centering
    \includegraphics[trim = 10mm 10mm 10mm 10mm, clip, width=3in]{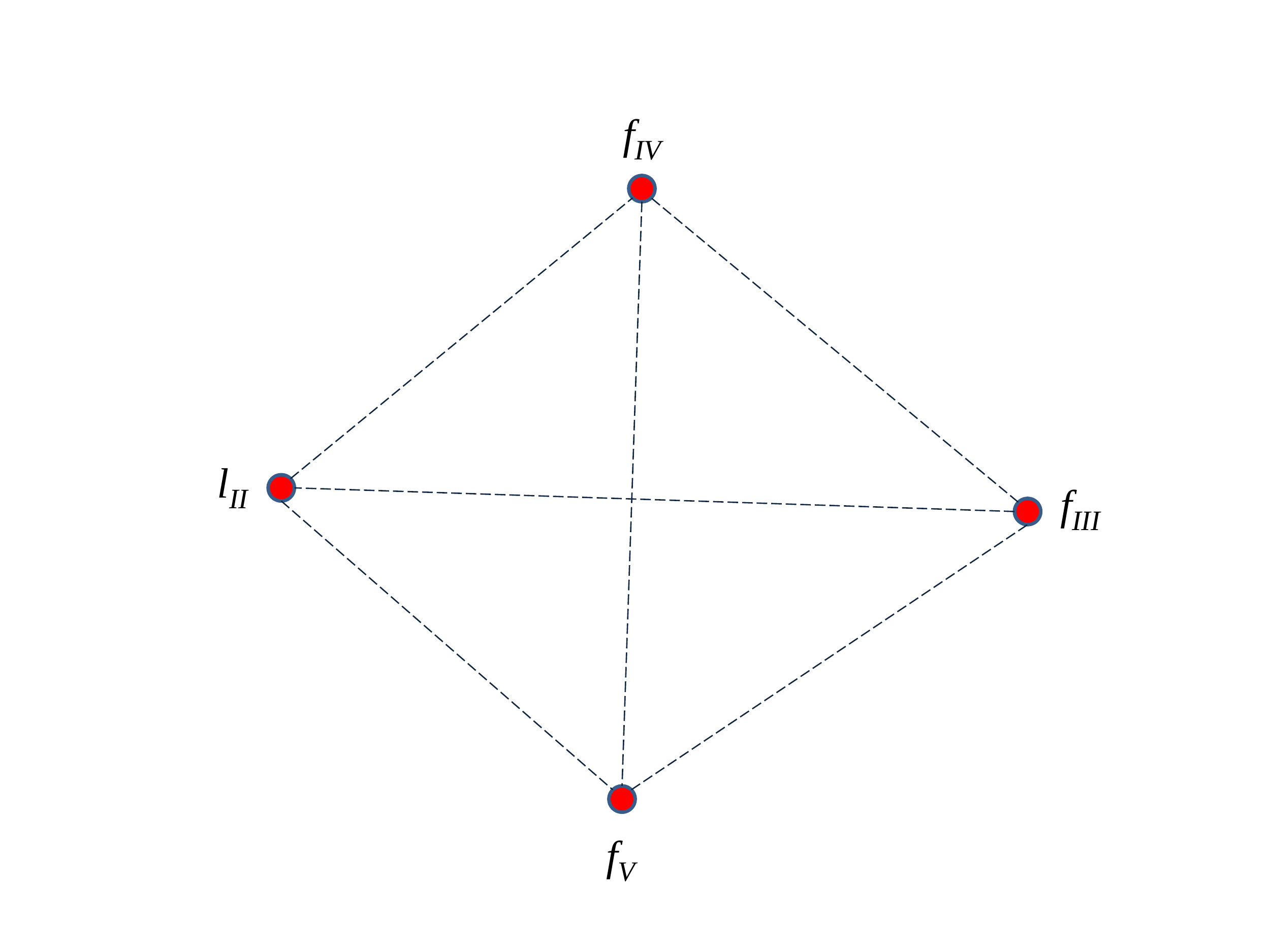}
     \caption{The graph $G^{\diamondsuit}$, for which its MST is needed to be found in each of the nodes in the set $\{l_{\text{II}},f_{\text{III}},f_{\text{IV}},f_{\text{V}}\}$.}
     \label{fig5}
\end{figure}

All the nodes in the set $\{l_{\text{II}},f_{\text{III}},f_{\text{IV}},f_{\text{V}}\}$ calculate the MST of the graph $G^{\diamondsuit}$, e.g., using Prim's algorithm. The transmission range of each of the nodes in the set $\{l_{\text{II}},f_{\text{III}},f_{\text{IV}},f_{\text{V}}\}$ is then selected as the largest of the maximum length of the node's connected edges in the MST of $G^{\diamondsuit}$ and its distance to its next adjacent neighbor. If Segment II is empty, node $s$ replaces node $l_{\text{II}}$ and its range is the largest of the maximum length of its connected edges in the MST of $G^{\diamondsuit}$ and its distance to node $f_{\text{I}}$.

The distributed range assignment, denoted by $R^{D}$, is thus given by:

\begin{align}
\label{RdistB}
R^{D}(s)&=\left\{
            \begin{array}{ll}
              \max\{d(s,f_{\text{I}}),d(s,f_{\text{II}})\}, & \hbox{if Segment II is not empty;} \\
              \max\{d(s,f_{\text{I}}),\max_{u:(s,u)\in MST(G^{\diamondsuit})}\{d(s,u)\}\}, & \hbox{if Segment II is empty,}
            \end{array}
          \right.\\\nonumber
R^{D}(a)&=\max\{M(a),\max_{u:(a,u)\in MST(G^{\diamondsuit})}\{d(a,u)\}\}, \hskip 4em \text{ for } a\in \{l_{\text{II}},f_{\text{III}},f_{\text{IV}},f_{\text{V}}\}, \\\nonumber
R^{D}(a)&= M(a), \hskip 17em \text{ otherwise}.
\end{align}

It is easy to see that the distributed range assignment results in all the network nodes receiving the data. The following theorem shows that the proposed distributed algorithm, with complexity only $\mathcal{O}(1)$, results in the same range assignment as the algorithm of \cite{Nguyen}, with complexity $\mathcal{O}(N^2)$.

\begin{theorem}
\label{theB5}
The distributed transmission range assignment $R^{D}$, is the same as the transmission range assignment of the MST-based algorithm of \cite{Nguyen} for the cross network.
\end{theorem}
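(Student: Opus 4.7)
The plan is to first characterize the Euclidean minimum spanning tree $T^\star$ of the cross network, and then to show that the transmission range induced at every node by this tree agrees with the rule that defines $R^{D}$. The backbone of the argument is structural: the geometry of the cross forces $T^\star$ to decompose cleanly into ``intra-segment'' edges and ``intersection-bridging'' edges, and the bridging piece coincides with $\mathrm{MST}(G^{\diamondsuit})$.

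Concretely, I would first establish the claim that $T^\star$ is the union of (i) all edges between consecutive nodes on each segment, together with $(s,f_{\text{I}})$ and, if Segment II is non-empty, $(s,f_{\text{II}})$; and (ii) the edges of $\mathrm{MST}(G^{\diamondsuit})$, with $s$ replacing $l_{\text{II}}$ in $G^{\diamondsuit}$ when Segment II is empty. The proof uses the MST cycle property twice. First, any non-consecutive same-segment edge $(u,v)$ with an intermediate node $w$ satisfies $d(u,v)=d(u,w)+d(w,v)$, so $(u,v)$ is the strict maximum in the cycle $u$--$w$--$v$--$u$ and cannot appear in $T^\star$. Second, for any cross-segment edge $(u,v)$ with $u \notin \{s,l_{\text{II}},f_{\text{III}},f_{\text{IV}},f_{\text{V}}\}$, let $b$ be the bridging node on $u$'s segment. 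When $v$ is on the segment aligned to $u$'s segment, the triangle equality on the line gives $d(u,v)=d(u,b)+d(b,v)$; when $v$ lies on a perpendicular segment, Pythagoras combined with the fact that $b$ is strictly closer to the intersection than $u$ yields $d(u,v)>\max\{d(u,b),d(b,v)\}$. In either case $(u,v)$ is strictly the heaviest edge in a 3-cycle, so it is excluded. Once the intra-segment edges are fixed, the remaining graph has four components (the line through $s$, Segment III, Segment IV, Segment V), and three more MST edges must join them; the previous step forces their endpoints to lie in $\{l_{\text{II}},f_{\text{III}},f_{\text{IV}},f_{\text{V}}\}$, and minimality of $T^\star$ identifies them with $\mathrm{MST}(G^{\diamondsuit})$.

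Given this structure, I root $T^\star$ at $s$ and read off the children of each node. For $a\in\mathcal{\hat{N}}$, the unique $T^\star$-child of $a$, if it exists, is $\mathfrak{n}_a$, so $R^{MST}(a)=M(a)=R^{D}(a)$. For $s$, the children are $f_{\text{I}}$ and $f_{\text{II}}$ when Segment II is non-empty, and $f_{\text{I}}$ together with the $\mathrm{MST}(G^{\diamondsuit})$-neighbors of $s$ when Segment II is empty, reproducing both cases of \eqref{RdistB}. For $a \in \{l_{\text{II}},f_{\text{III}},f_{\text{IV}},f_{\text{V}}\}$, note that $l_{\text{II}}$ is the only bridging node lying on the source side of the four components, so rooting $\mathrm{MST}(G^{\diamondsuit})$ at $l_{\text{II}}$ induces the same parent--child relation among the four bridging nodes as rooting $T^\star$ at $s$. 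Consequently the $T^\star$-children of $a$ consist of $\mathfrak{n}_a$ (if any) together with the $\mathrm{MST}(G^{\diamondsuit})$-children of $a$, and the max-distance formula reproduces $R^{D}(a)$ exactly.

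The hard part will be the geometric case analysis in the second half of the structural step, where every cross-segment edge with an endpoint outside the bridging set must be excluded from $T^\star$; the remainder is rooting-and-bookkeeping. A minor but essential technical point is the empty-Segment-II case, where $s$ itself assumes the role of $l_{\text{II}}$ in both $G^{\diamondsuit}$ and the rooted tree, and one should verify that the structural decomposition and the parent--child identification both go through verbatim.
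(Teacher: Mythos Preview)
Your proposal is correct and follows the same overall strategy as the paper: first pin down the Euclidean MST of the cross network as the union of the consecutive intra-segment edges (plus $(s,f_{\text{I}})$, $(s,f_{\text{II}})$) together with $\mathrm{MST}(G^{\diamondsuit})$, then root at $s$ and read off the per-node range. The technical route differs in one respect: you rely on the \emph{cycle property} (eliminating every non-consecutive same-segment edge and every cross-segment edge with an endpoint outside the bridging set via a strictly-heaviest-edge 3-cycle), whereas the paper uses the \emph{cut property}, partitioning the node set into $\{s\}$, the four diamond vertices, and the remaining nodes of each segment, and arguing that each consecutive edge and each connecting edge is the lightest across some cut. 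The two properties are dual, so the arguments are interchangeable; your elimination-based version is arguably cleaner for the intra-segment part, but it leaves a small bookkeeping obligation you should make explicit: when $u$ lies on Segment~I or Segment~II the ``bridging node on $u$'s segment'' is ambiguous (for $u\in\text{II}$ with $v\in\text{I}$ you must take $b=s$, not $b=l_{\text{II}}$), and you still need one extra 3-cycle through $l_{\text{II}}$ to rule out the edges $(s,f_{\text{III}})$, $(s,f_{\text{IV}})$, $(s,f_{\text{V}})$ before concluding that the three bridging edges lie entirely inside $G^{\diamondsuit}$. These are easy patches and do not affect the soundness of your plan.
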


\begin{proof}
First we find the MST of the graph corresponding to a cross network. The graph consists of the nodes, as its vertices, and there exists an edge between any two nodes of the network, with weight equal to the distance between them. We assume that the weights of the edges are different, and the MST is unique. In the following, we use the cut\footnote{A cut $[\mathcal{V},\overline{\mathcal{V}}]$ is a partition of the vertices of a graph into two disjoint subsets $\mathcal{V}$ and $\overline{\mathcal{V}}$ that are joined by at least one edge \cite{graph_book}.} property of MST, which states that for any cut in the graph, if the weight of an edge crossing the two partitions of the cut is strictly smaller than the weights of all other crossing edges, then this edge belongs to the MST of the graph (for more detail, see, e.g., Theorem 23.1 of \cite{algorithms_book}). A direct corollary of the cut property (described as Corollary 23.2 in \cite{algorithms_book}) is that if we partition a graph into disjoint sets of vertices, the edge with the minimum weight among all the crossing edges (in the original graph) between any two disjoint partitions belongs to the MST of the graph. So one way to construct the MST of a graph is to partition it into disjoint vertex sets, find the MST of the subgraph induced by each set, and link the MST's via the minimum weight crossing edge between them.

We partition the network graph into seven (or five, if Segment II is empty) partitions. The set of partitions consists of the source node alone as one partition, graph $G^{\diamondsuit}$as another partition, and five (or four) other partitions each consisting of just the remaining nodes of each segment (i.e., one partition for each segment). If Segment II is empty, node $s$ will be a node in $G^{\diamondsuit}$, and there will be only five partitions.

The MST of the $s$ only partition is $s$ itself, and we already discussed the MST of the $G^{\diamondsuit}$ partition. For any node $a\in \mathcal{\hat{N}}$, we define set $\mathcal{V}_a$ to contain node $a$ and all the nodes after that on segment $\mathcal{S}_a$. Set $\overline{\mathcal{V}}_a$ contains all the nodes (in set $\mathcal{\hat{N}}$) on segment $\mathcal{S}_a$, that are not in set $\mathcal{V}_a$. Consider cut $[\mathcal{V}_a,\overline{\mathcal{V}}_a]$ for the subgraph of segment $\mathcal{S}_a$. Since all the vertices of this subgraph are on a straight line, the minimum weight (distance) edge crossing $[\mathcal{V}_a,\overline{\mathcal{V}}_a]$ is between nodes $a$ and $b$, where $\mathfrak{n}_b=a$. This edge, according to the cut property of MST, belongs to the MST of the network graph. Using the same approach for all the cuts of a subgraph, we find all the connecting edges, which all together construct a spanning tree of the subgraph. All these edges have to be included in the MST of each subgraph, and no other edge is necessary for forming a spanning tree. By connecting the different partitions of the graph using the minimum weight (distance) edge between any two partitions the MST of the network graph is obtained. The connecting edge of partition of Segment I and partition of $s$ is the edge between $s$ and $f_{\text{I}}$. Also, the edge between $s$ and $f_{\text{II}}$ is the edge between partition $s$ and partition of Segment II (if Segment II is not empty). The three remaining segment partitions are connected to the partition of $G^{\diamondsuit}$ by the edges between the first nodes of each segment (members of partition $G^{\diamondsuit}$) and the second nodes of the same segments (each member of a different partition). We root the tree at $s$.

In the MST-based transmission range assignment, every node transmits with the range equal to the maximum edge weight (distance) to its children. By observing the way that the MST of the network graph is constructed, we can see that the MST-based transmission range assignment is exactly the same as the distributed transmission range assignment.
\end{proof}

\section{Special Case: Source at the Intersection}
\label{sec:atintersect}
In this section, we consider cross networks in which the source node is located at the intersection of the two lines.\footnote{The work in this section is described with more details in \cite{ataei_icc_cross}.} In such networks, we assume that Segment II still exists, but is empty. Hence, node $s$ takes all the functionalities of node $l_{\text{II}}$. Furthermore, all the nodes in the set $\{l_{\text{II}},f_{\text{III}},f_{\text{IV}},f_{\text{V}}\}$ are replaced by the source node $s$. By doing this, we can see that all the lemmas, corollaries and Theorem \ref{theB1} presented in Section \ref{optBsection} are valid for this special-case network.
Since, for this network, five nodes ($s$, $l_{\text{II}}$, $f_{\text{III}}$, $f_{\text{IV}}$, $f_{\text{V}}$), each with $N$ possible ranges are substituted by only one such node ($s$), the search space will be reduced by a factor of $N^4$. This results in having the time complexity of $\mathcal{O}(N^{8})$.

\section{A more general case: Grid Networks}
\label{sec:grid}
We can use the proposed distributed algorithm to find a cost-efficient transmission range assignment for grid networks with perpendicular line-segments. The structure of the grid can be arbitrary, e.g., the one shown in Fig. \ref{fig6}.

\begin{figure}[h!]
  \centering
    \includegraphics[trim = 10mm 40mm 20mm 40mm, clip, width=5in]{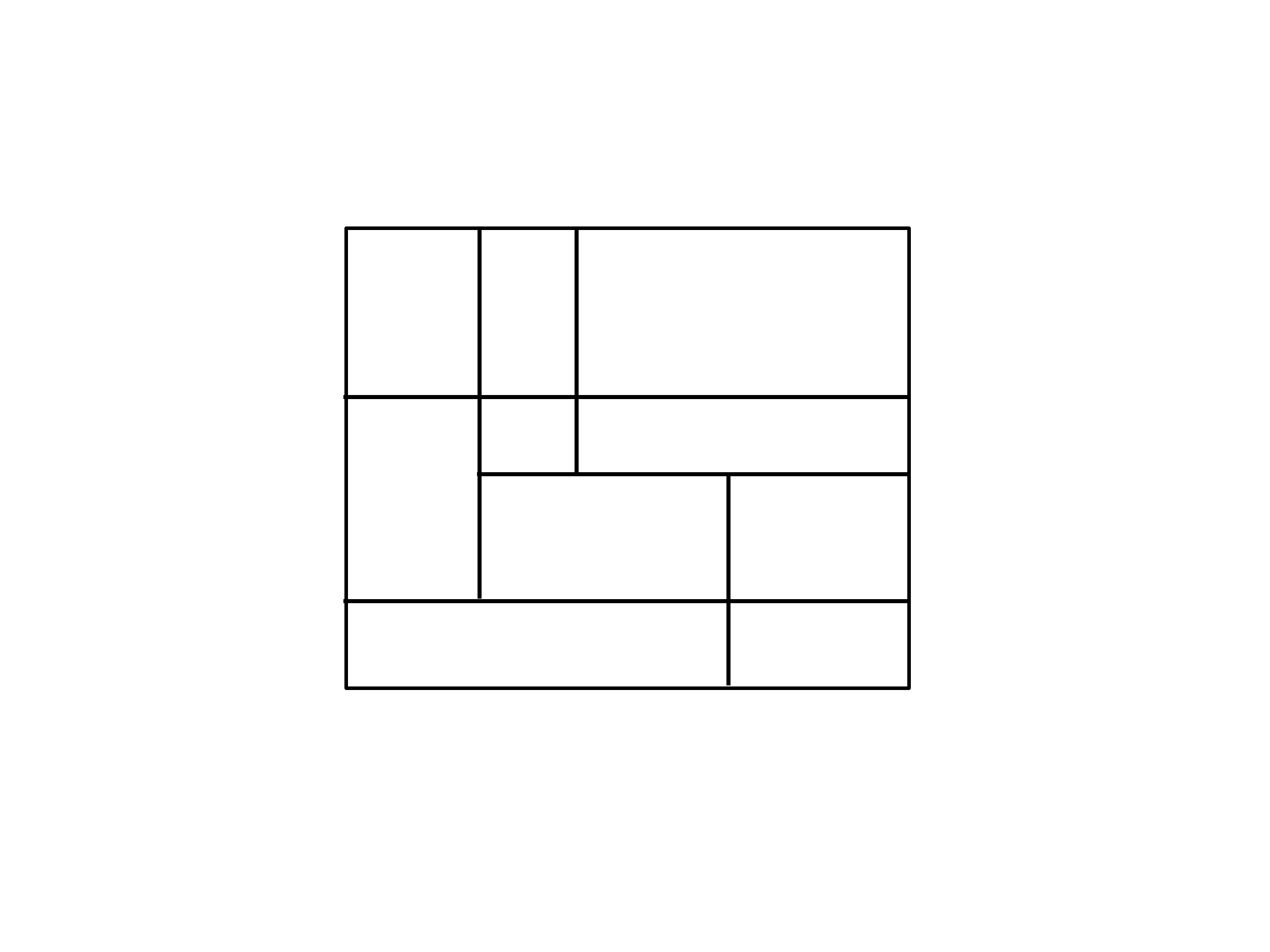}
     \caption{A possible structure for grid networks with perpendicular line-segments.}
     \label{fig6}
\end{figure}

To apply the proposed distributed algorithm, we assume that there is at least one node on each line-segment of the network. This allows for the construction of diamond graphs for each intersection of the network. One can then run the distributed algorithm on each intersection of the grid network similar to what was done for cross networks. In Section \ref{sec:num}, we compare the distributed algorithm to BIP (with and without the sweeping procedure), for a two by two square grid network, and show that the performance difference between the two algorithms is rather small, even though our algorithm is distributed and with complexity $\mathcal{O}(1)$, while BIP is centralized and with complexity $\mathcal{O}(N^2)$.

\section{Numerical Results}
\label{sec:num}
We study the performance of our proposed algorithms, and the BIP algorithm, by conducting Monte Carlo simulations for different number of nodes on cross networks. We also perform the sweep procedure on the BIP algorithm. But since this procedure has time complexity $\mathcal{O}(N^2)$, to keep the complexity of the proposed near-optimal and distributed assignments as low as possible, we do not perform the sweep procedure on these assignments. The complexity of the optimal algorithm is high.
We thus present our results for two relatively small values of $N$, i.e., $N=14$ and $N=18$, in the case where the source node is at the intersection, and for $N=13$ in the general case, where the source node and its location are chosen randomly. For each value of $N$, we simulate 100 networks. For each network, the nodes are distributed uniformly at random on the cross. The simulation results are summarized in Tables \ref{tableA} and \ref{tableB} for the case where source is at the intersection and the general case, respectively. To make the comparisons easier, we normalize the costs by dividing them by the cost of the optimal assignment in each case. The numbers shown in this table are the average of the normalized costs plus minus the $95\%$ confidence interval. The cost of the BIP algorithm and the BIP algorithm with the sweeping procedure are denoted by $cost(R^{BIP})$ and $cost(R^{BIP/sw})$, respectively. In all simulations, the value of $\alpha$ is selected to be 2.

\begin{table}[h!]
\centering
\caption{The simulation results of cross networks with source at intersection.}
\label{tableA}
\begin{tabular}{|c|c|c|c|c|}
  \hline
   & $\frac{cost(R^{NO})}{cost(R^*)}$ & $\frac{cost(R^{BIP/sw})}{cost(R^*)}$ & $\frac{cost(R^{BIP})}{cost(R^*)}$ & $\frac{cost(R^{D})}{cost(R^*)}$ \\
   \hline
  $N=14$ & 1.1140 $\pm$ 0.0276 &  1.2244 $\pm$ 0.0364 & 1.3009 $\pm$ 0.0391 & 1.4303 $\pm$ 0.0554 \\
  \hline
  $N=18$ & 1.1102 $\pm$ 0.0251 & 1.2100 $\pm$ 0.0338 &1.2623 $\pm$ 0.0360 & 1.3666 $\pm$ 0.0412 \\
  \hline
\end{tabular}
\end{table}

\begin{table}[h!]
\centering
\caption{The simulation results of general cross networks.}
\label{tableB}
\begin{tabular}{|c|c|c|c|c|}
  \hline
  & $\frac{cost(R^{NO})}{cost(R^*)}$ & $\frac{cost(R^{BIP/sw})}{cost(R^*)}$ & $\frac{cost(R^{BIP})}{cost(R^*)}$ & $\frac{cost(R^{D})}{cost(R^*)}$ \\
   \hline
  $N=13$ & 1.0668 $\pm$ 0.0293 & 1.1302 $\pm$ 0.0362 & 1.1747 $\pm$ 0.0401 & 1.2556 $\pm$ 0.0584 \\
  \hline
\end{tabular}
\end{table}

As Table \ref{tableA} shows, the energy consumption of the near-optimal assignment is close to that of the optimal assignment. Table \ref{tableA}  also shows that the proposed near-optimal assignment outperforms BIP and BIP with sweep rather considerably. This is in addition to the advantage of having a lower complexity. Based on Table \ref{tableA}, the distributed assignment performs the worst, but still provides a very low complexity alternative at the cost of about $40\%$ extra energy compared to the optimal assignment.
Increasing $N$ from $14$ to $18$ makes the energy gap between the optimal algorithm and the other sub-optimal algorithms shrink by a non-negligible amount. In particular, one should also note that the performance gap of the distributed algorithm  relative to the optimal solution shrinks faster than the other assignments by increasing the
size of the network.

For the general case, based on the results of Table \ref{tableB}, the difference between the optimal algorithm, on the one hand, and the near-optimal and distributed algorithms, on the other hand, is even less. Table \ref{tableB} shows that the proposed near-optimal assignment still outperforms both versions of BIP.

For larger networks, where the optimal assignment is too complex to find, we only present the results for the other assignments. In Fig. \ref{figCost_B}, we compare the total consumed energy of different assignments normalized with respect to the proposed near-optimal assignment. For each simulation point corresponding to a given number of nodes, $10,000$ random networks are generated, each having uniform distribution for the nodes. We run the algorithms on exactly the same networks and obtain the average of the total consumed energy over the $10,000$ networks.

\begin{figure}[h!]
  \centering
    \includegraphics[width=5in]{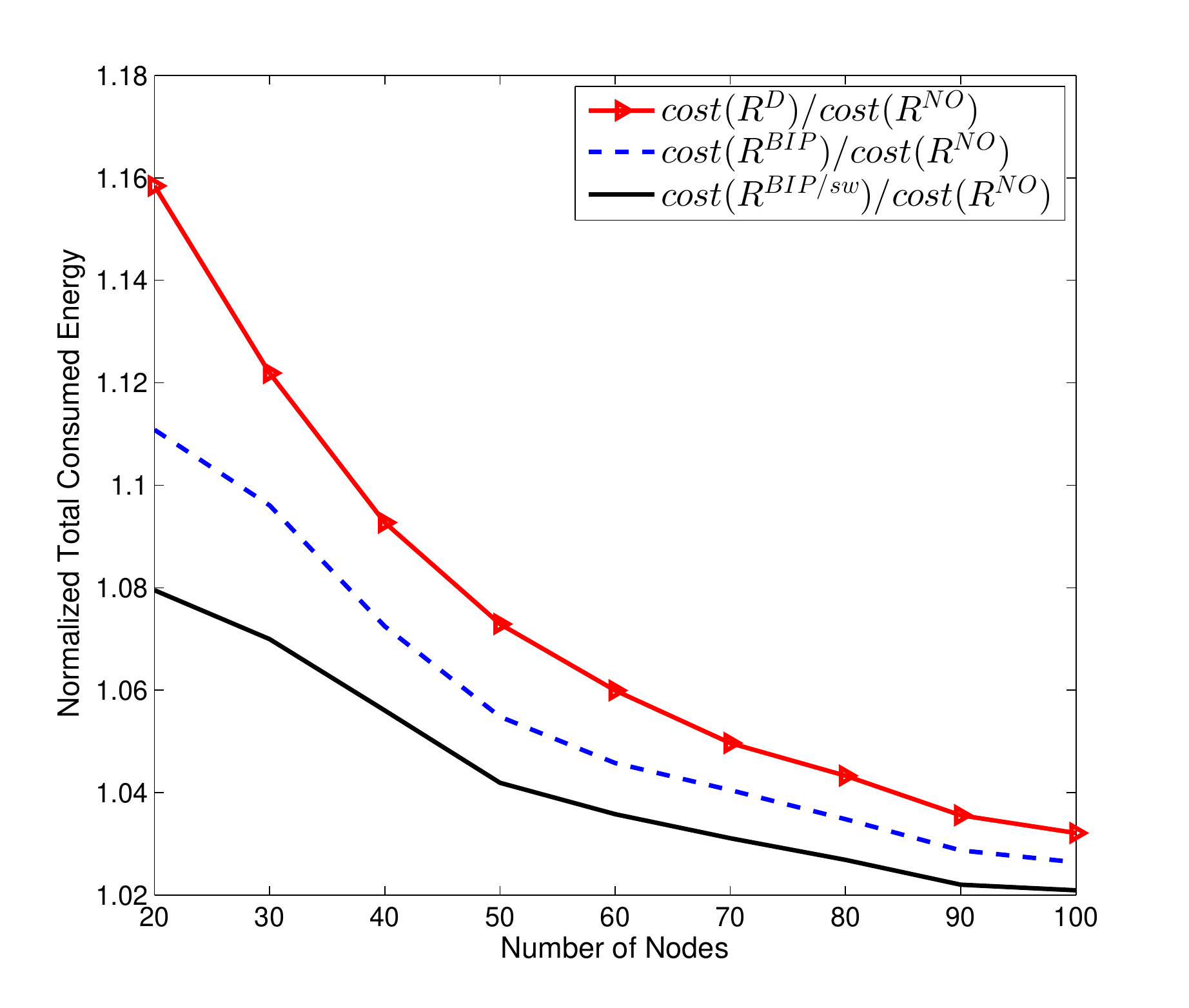}
     \caption{Energy consumption comparison of different range assignments.}
     \label{figCost_B}
\end{figure}

Fig. \ref{figCost_B} confirms the superiority of the proposed near-optimal assignment compared with BIP and BIP with sweep. It also shows that the distributed assignment performs close to the other assignments particularly for larger networks. In general, the gap between different algorithms shrinks as the size of the networks increases.

To compare the distributed algorithm with BIP (with and without sweeping procedure) for the general grid networks, we consider a two by two square grid network. We run the algorithms on exactly the same networks and obtain the average of the total consumed energy over $10,000$ networks. The comparison between the cost of the distributed algorithm and BIP (with and without sweeping procedure) is shown in Fig. \ref{figCost_G}. It can be seen that for larger number of nodes in the network, the difference between the algorithms is smaller. In particular the difference in energy consumption of the proposed distributed algorithm and BIP with sweep is less than $5\%$ for networks of size $N=40$ or larger.

\begin{figure}[h!]
  \centering
    \includegraphics[width=5in]{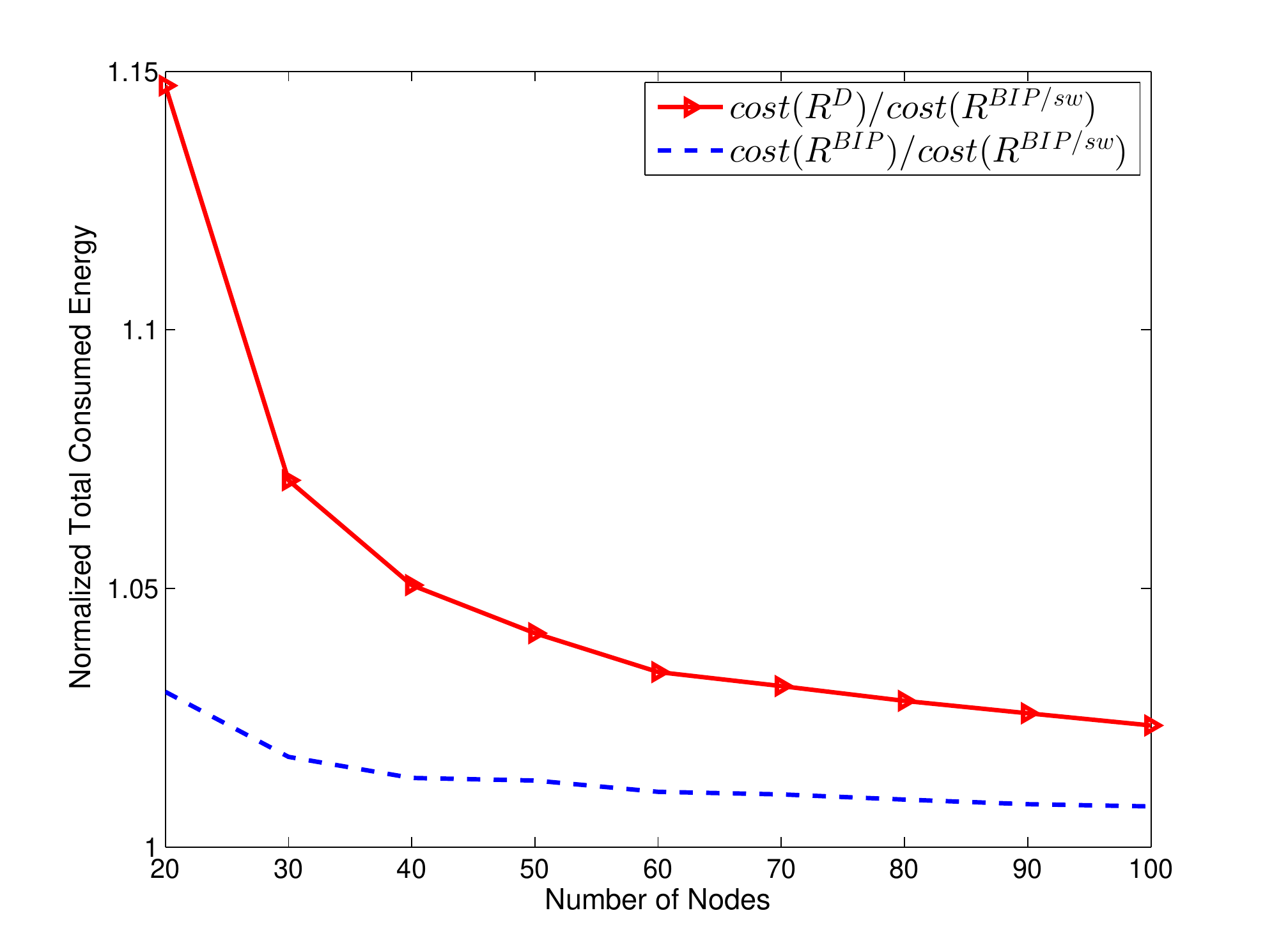}
     \caption{Comparison of the energy consumption of different range assignments for two by two square gird networks.}
     \label{figCost_G}
\end{figure}

\section{Conclusion}
\label{sec:con}
We studied a special 2-D wireless network, where $N$ nodes are located on two perpendicular lines. We proposed three range assignments for energy-efficient broadcasting from a source node anywhere in the network to all the other nodes. We first proposed an optimal assignment with polynomial complexity followed by a near-optimal assignment with linear complexity $\mathcal{O}(N)$, and a distributed assignment with complexity independent of $N$. We compared the proposed range assignments with BIP (with and without sweep), and observed that our near-optimal assignment outperforms both versions of BIP, while the distributed algorithm performs close to it. The advantage of the proposed algorithms over BIP algorithms is further seen when one notes the higher complexity $\mathcal{O}(N^2)$ of the BIP algorithms in comparison with linear and constant complexities of the proposed near-optimal and distributed algorithms, respectively.
We demonstrated that our proposed distributed algorithm can also be used for more general two-dimensional networks, where the nodes are located on a grid consisting of perpendicular line-segments. In such network configurations also, the proposed distributed algorithm performs close to BIP algorithms, particularly for larger networks.
To the best of our knowledge, this is the first study presenting an optimal solution for the minimum-energy broadcasting problem for a 2-D network.

\bibliographystyle{ieeetr} 

\bibliography{refs}  

\begin{thebibliography}{10}

\bibitem{ad_hoc_book}
C.~de~Morais~Cordeiro and D.~P. Agrawal, {\em Ad hoc and sensor networks:
  theory and applications}.
\newblock World Scientific, 2011.

\bibitem{routing}
E.~Royer and T.~Chai-Keong, ``A review of current routing protocols for ad hoc
  mobile wireless networks,'' {\em IEEE Personal Communications Magazine},
  vol.~6, no.~2, pp.~46--55, 1999.

\bibitem{MinEnBroad}
M.~Cagalj, J.~Hubaux, and C.~Enz, ``Minimum-energy broadcast in all-wireless
  networks: Np-completeness and distribution issues,'' in {\em Proc. of ACM
  MobiCom'02}, p.~172–182, Sept. 2002.

\bibitem{ataei}
M.~R. Ataei, A.~H. Banihashemi, and T.~Kunz, ``Low-complexity energy-efficient
  broadcasting in one-dimensional wireless networks,'' {\em IEEE Transactions
  on Vehicular Technology}, vol.~61, pp.~3276--3282, Sept 2012.

\bibitem{np_hard}
A.~E.~F. Clementi, P.~Crescenzi, P.~Penna, G.~Rossi, and P.~Vocca, ``On the
  complexity of computing minimum energy consumption broadcast subgraphs,'' in
  {\em Proc. of the 18th Annual Symposium on Theoretical Aspects of Computer
  Science (STACS)}, pp.~121--131, 2001.

\bibitem{Clementi_2_3_D}
A.~E.~F. Clementi, P.~Penna, and R.~Silvestri, ``On the power assignment
  problem in radio networks,'' {\em Mobile Networks and Applications}, vol.~2,
  pp.~125--140, April 2004.

\bibitem{Kranakis_strong}
L.~M. Kirousis, E.~Kranakis, D.~Krizanc, and A.~Pelc, ``Power consumption in
  packet radio networks,'' {\em Theoretical Computer Science}, vol.~243,
  pp.~289--305, July 2000.

\bibitem{BIP}
J.~E. Wieselthier, G.~D. Nguyen, and A.~Ephremides, ``On the construction of
  energy-efficient broadcast and multicast trees in wireless networks.,'' in
  {\em Proc. of the 19th Annual Joint Conference of the IEEE Computer and
  Communications Societies (INFOCOM)}, vol.~2, p.~585–594, 2000.

\bibitem{BIP_Journal}
J.~E. Wieselthier, G.~D. Nguyen, and A.~Ephremides, ``Energy-efficient
  broadcast and multicast trees in wireless networks,'' {\em Mobile Networks
  and Applications}, vol.~7, pp.~481--492, 2002.

\bibitem{Multicast_survey}
S.~Guo and O.~Yang, ``Energy-aware multicasting in wireless ad hoc networks: A
  survey and discussion,'' {\em Computer Communications}, vol.~30,
  pp.~2129--2148, 2007.

\bibitem{2D_Optimal_Bound}
P.~J. Wan, G.~Calinescu, X.~Y. Li, and O.~Frieder, ``Minimum-energy broadcast
  routing in satic ad hoc wireless networks,'' in {\em Proc. of 20th INFOCOM},
  pp.~1162--1171, 2001.

\bibitem{SPT_ref}
C.~Diot, W.~Dabbous, and J.~Crowcroft, ``Multipoint communication: a survey of
  protocols, functions, and mechanisms,'' {\em IEEE Journal on Selected Areas
  in Communications}, vol.~15, pp.~277--290, Apr 1997.

\bibitem{BIP_N2_Complexity}
J.~Bauer, D.~Haugland, and D.~Yuan, ``New results on the time complexity and
  approximation ratio of the broadcast incremental power algorithm,'' {\em
  Information Processing Letters}, vol.~109, pp.~615--619, 2009.

\bibitem{sweep_complexity}
D.~Yuan, J.~Bauer, and D.~Haugland, ``Minimum-energy broadcast and multicast in
  wireless networks: An integer programming approach and improved heuristic
  algorithms,'' {\em Ad Hoc Networks}, vol.~6, pp.~696--717, 2008.

\bibitem{Nguyen}
G.~D. Nguyen, ``General algorithms for construction of broadcast and multicast
  trees with applications to wireless networks,'' {\em Journal of
  Communications and Networks}, vol.~7, pp.~263--277, Sept. 2005.

\bibitem{grid}
A.~Murata and A.~Matsubayashi, ``Minimum energy broadcast on rectangular grid
  wireless networks,'' {\em Theoretical Computer Science}, vol.~412,
  pp.~5167--5175, 2011.

\bibitem{grid_clementi}
T.~Calamoneri, A.~Clementi, M.~D. Ianni, M.~Lauria, A.~Monti, and R.~Silvestri,
  ``Minimum-energy broadcast and disk cover in grid wireless networks,'' {\em
  Theoretical Computer Science}, vol.~399, p.~38–53, 2008.

\bibitem{vanet}
H.~Hartenstein and K.~P. Laberteaux, ``A tutorial survey on vehicular ad hoc
  networks,'' {\em IEEE Communications Magazine}, vol.~46, no.~6, pp.~164--171,
  June 2008.

\bibitem{sensor}
I.~Akyildiz, S.~Weilian, Y.~Sankarasubramaniam, and E.~Cayirci, ``A survey on
  sensor networks,'' {\em IEEE Communications Magazine}, vol.~40, pp.~102--114,
  Aug 2002.

\bibitem{Power_Monitoring_Opportunities}
V.~Gungor, L.~Bin, and G.~Hancke, ``Opportunities and challenges of wireless
  sensor networks in smart grid,'' {\em IEEE Transactions on Industrial
  Electronics}, vol.~57, pp.~3557--3564, October 2010.

\bibitem{alpha}
K.~Pahlavan and A.~Levesque, {\em Wireless Information Networks}.
\newblock New York: Wiley-Interscience, 1995.

\bibitem{graph_book}
D.~B. West, {\em Introduction to Graph Theory}.
\newblock Prentice Hall, 2001.

\bibitem{algorithms_book}
T.~H. Cormen, C.~E. Leiserson, R.~L. Rivest, and C.~Stein, {\em Introduction to
  Algorithms}.
\newblock The MIT press, July 2009.

\bibitem{ataei_icc_cross}
M.~R. Ataei, A.~H. Banihashemi, and T.~Kunz, ``Minimum-energy broadcasting for
  cross wireless ad-hoc networks,'' in {\em Proceedings of IEEE ICC 2015}, June
  2015.

\end{thebibliography}

\end{document}